\DeclareSymbolFont{ttgreek}{LGR}{cmtt}{m}{n}
\DeclareMathSymbol{\ttalpha}{\mathord}{ttgreek}{`a}
\DeclareMathSymbol{\ttAlpha}{\mathord}{ttgreek}{`A}
\DeclareMathSymbol{\ttmu}{\mathord}{ttgreek}{`m}
\DeclareMathSymbol{\tttau}{\mathord}{ttgreek}{`t}
\newcolumntype{L}[1]{>{\raggedright\let\newline\\\arraybackslash\hspace{0pt}}m{#1}}
\newcolumntype{C}[1]{>{\centering\let\newline\\\arraybackslash\hspace{0pt}}m{#1}}
\newcolumntype{R}[1]{>{\raggedleft\let\newline\\\arraybackslash\hspace{0pt}}m{#1}}
\let\MYcaption\@makecaption
\let\@makecaption\MYcaption
\let\oldgls\gls
\let\oldglspl\glspl
\newcommand\fussy@ifnextchar[3]{%
	\let\reserved@d=#1%
	\def\reserved@a{#2}%
	\def\reserved@b{#3}%
	\futurelet\@let@token\fussy@ifnch}
\def\fussy@ifnch{%
	\ifx\@let@token\reserved@d
		\let\reserved@c\reserved@a
	\else
		\let\reserved@c\reserved@b
	\fi
	\reserved@c}
\renewcommand{\gls}[1]{%
\oldgls{#1}\fussy@ifnextchar.{\@checkperiod}{\@}}
\renewcommand{\glspl}[1]{%
\oldglspl{#1}\fussy@ifnextchar.{\@checkperiod}{\@}}
\newcommand{\@checkperiod}[1]{%
	\ifnum\sfcode`\.=\spacefactor\else#1\fi
}
\newacronym{wrt}{w.r.t.}{with respect to}
\newacronym{RHS}{R.H.S.}{right-hand side}
\newacronym{LHS}{L.H.S.}{left-hand side}
\newacronym{iid}{i.i.d.}{independent and identically distributed}
\newacronym{SOTA}{SOTA}{state-of-the-art}
\let\saved@bibitem\@bibitem\makeatother
\let\@bibitem\saved@bibitem\makeatother
\crefname{equation}{}{}
\Crefname{equation}{}{}
\crefname{claim}{claim}{claims}
\crefname{step}{step}{steps}
\crefname{line}{line}{lines}
\crefname{condition}{condition}{conditions}
\crefname{dmath}{}{}
\crefname{dseries}{}{}
\crefname{dgroup}{}{}
\crefname{page}{page}{pages}
\crefname{Problem}{Problem}{Problems}
\crefname{Theorem}{Theorem}{Theorems}
\crefname{Corollary}{Corollary}{Corollaries}
\crefname{Proposition}{Proposition}{Propositions}
\crefname{Lemma}{Lemma}{Lemmas}
\crefname{Definition}{Definition}{Definitions}
\crefname{Example}{Example}{Examples}
\crefname{Assumption}{Assumption}{Assumptions}
\crefname{Remark}{Remark}{Remarks}
\crefname{Rem}{Remark}{Remarks}
\crefname{remarks}{Remarks}{Remarks}
\crefname{Appendix}{Appendix}{Appendices}
\crefname{Supplement}{Supplement}{Supplements}
\crefname{Exercise}{Exercise}{Exercises}
\crefname{TheoremA}{Theorem}{Theorems}
\crefname{CorollaryA}{Corollary}{Corollaries}
\crefname{PropositionA}{Proposition}{Propositions}
\crefname{LemmaA}{Lemma}{Lemmas}
\crefname{DefinitionA}{Definition}{Definitions}
\crefname{ExampleA}{Example}{Examples}
\crefname{RemarkA}{Remark}{Remarks}
\crefname{AssumptionA}{Assumption}{Assumptions}
\crefname{ExerciseA}{Exercise}{Exercises}
\crefname{algorithm}{Algorithm}{Algorithms}
\crefname{figure}{Fig.}{Figs.}
\crefname{table}{Table}{Tables}
\crefname{section}{Section}{Sections}
\crefname{subsection}{Section}{Sections}
\crefname{subsubsection}{Section}{Sections}
		\let\Cref\crtCref
		\let\cref\crtcref
\def\cleartheorem#1{%
    \expandafter\let\csname#1\endcsname\relax
    \expandafter\let\csname c@#1\endcsname\relax
}
\def\clearthms#1{ \@for\tname:=#1\do{\cleartheorem\tname} }
		\newtheorem{Theorem}{Theorem}
		\newtheorem{Corollary}{Corollary}
		\newtheorem{Proposition}{Proposition}
		\newtheorem{Theorem}{Theorem}
	\newtheorem{Example}{Example}
	\newtheorem{Assumption}{Assumption}
	\newtheorem{LemmaA}{Lemma}[section]
\theoremstyle{remark}
\theoremstyle{plain}
\newcommand{\qednew}{\nobreak \ifvmode \relax \else
		\ifdim\lastskip<1.5em \hskip-\lastskip
			\hskip1.5em plus0em minus0.5em \fi \nobreak
		\vrule height0.75em width0.5em depth0.25em\fi}
\newcommand{\nn}{\nonumber\\ }
\NewDocumentCommand{\movedownsub}{e{^_}}{%
	\IfNoValueTF{#1}{%
		\IfNoValueF{#2}{^{}}
	}{%
		^{#1}
	}%
	\IfNoValueF{#2}{_{#2}}
}
\let\latexchi\chi
\RenewDocumentCommand{\chi}{}{\latexchi\movedownsub}
\newcommand{\Real}{\mathbb{R}}
\newcommand{\calE}{\mathcal{E}}
\newcommand{\calF}{\mathcal{F}}
\newcommand{\calJ}{\mathcal{J}}
\newcommand{\calK}{\mathcal{K}}
\newcommand{\calT}{\mathcal{T}}
\newcommand{\calV}{\mathcal{V}}
\newcommand{\calZ}{\mathcal{Z}}
\DeclareSymbolFont{ttgreek}{LGR}{cmtt}{m}{n}
\DeclareMathSymbol{\ttalpha}{\mathord}{ttgreek}{`a}
\DeclareMathSymbol{\ttbeta}{\mathord}{ttgreek}{`b}
\DeclareMathSymbol{\ttgamma}{\mathord}{ttgreek}{`g}
\DeclareMathSymbol{\ttdelta}{\mathord}{ttgreek}{`d}
\DeclareMathSymbol{\ttepsilon}{\mathord}{ttgreek}{`e}
\DeclareMathSymbol{\ttzeta}{\mathord}{ttgreek}{`z}
\DeclareMathSymbol{\tteta}{\mathord}{ttgreek}{`h}
\DeclareMathSymbol{\tttheta}{\mathord}{ttgreek}{`j}
\DeclareMathSymbol{\ttiota}{\mathord}{ttgreek}{`i}
\DeclareMathSymbol{\ttkappa}{\mathord}{ttgreek}{`k}
\DeclareMathSymbol{\ttlambda}{\mathord}{ttgreek}{`l}
\DeclareMathSymbol{\ttmu}{\mathord}{ttgreek}{`m}
\DeclareMathSymbol{\ttnu}{\mathord}{ttgreek}{`n}
\DeclareMathSymbol{\ttxi}{\mathord}{ttgreek}{`x}
\DeclareMathSymbol{\ttomicron}{\mathord}{ttgreek}{`o}
\DeclareMathSymbol{\ttpi}{\mathord}{ttgreek}{`p}
\DeclareMathSymbol{\ttrho}{\mathord}{ttgreek}{`r}
\DeclareMathSymbol{\ttsigma}{\mathord}{ttgreek}{`s}
\DeclareMathSymbol{\tttau}{\mathord}{ttgreek}{`t}
\DeclareMathSymbol{\ttupsilon}{\mathord}{ttgreek}{`u}
\DeclareMathSymbol{\ttphi}{\mathord}{ttgreek}{`f}
\DeclareMathSymbol{\ttchi}{\mathord}{ttgreek}{`q}
\DeclareMathSymbol{\ttpsi}{\mathord}{ttgreek}{`y}
\DeclareMathSymbol{\ttomega}{\mathord}{ttgreek}{`w}
\DeclareMathSymbol{\ttAlpha}{\mathord}{ttgreek}{`A}
\DeclareMathSymbol{\ttBeta}{\mathord}{ttgreek}{`B}
\DeclareMathSymbol{\ttGamma}{\mathord}{ttgreek}{`G}
\DeclareMathSymbol{\ttDelta}{\mathord}{ttgreek}{`D}
\DeclareMathSymbol{\ttEpsilon}{\mathord}{ttgreek}{`E}
\DeclareMathSymbol{\ttZeta}{\mathord}{ttgreek}{`Z}
\DeclareMathSymbol{\ttEta}{\mathord}{ttgreek}{`H}
\DeclareMathSymbol{\ttTheta}{\mathord}{ttgreek}{`J}
\DeclareMathSymbol{\ttIota}{\mathord}{ttgreek}{`I}
\DeclareMathSymbol{\ttKappa}{\mathord}{ttgreek}{`K}
\DeclareMathSymbol{\ttLambda}{\mathord}{ttgreek}{`L}
\DeclareMathSymbol{\ttMu}{\mathord}{ttgreek}{`M}
\DeclareMathSymbol{\ttNu}{\mathord}{ttgreek}{`N}
\DeclareMathSymbol{\ttXi}{\mathord}{ttgreek}{`X}
\DeclareMathSymbol{\ttPi}{\mathord}{ttgreek}{`P}
\DeclareMathSymbol{\ttRho}{\mathord}{ttgreek}{`R}
\DeclareMathSymbol{\ttSigma}{\mathord}{ttgreek}{`S}
\DeclareMathSymbol{\ttTau}{\mathord}{ttgreek}{`T}
\DeclareMathSymbol{\ttUpsilon}{\mathord}{ttgreek}{`U}
\DeclareMathSymbol{\ttPhi}{\mathord}{ttgreek}{`F}
\DeclareMathSymbol{\ttChi}{\mathord}{ttgreek}{`Q}
\DeclareMathSymbol{\ttPsi}{\mathord}{ttgreek}{`Y}
\DeclareMathSymbol{\ttOmega}{\mathord}{ttgreek}{`W}
\newcommand{\sfS}{\mathsf{S}}
\newcommand{\bA}{\mathbf{A}}
\newcommand{\bc}{\mathbf{c}}
\newcommand{\bbE}{\mathbb{E}}
\newcommand{\bbN}{\mathbb{N}}
\newcommand{\bbP}{\mathbb{P}}
\newcommand{\bbR}{\mathbb{R}}
\DeclareSymbolFont{bsfletters}{OT1}{cmss}{bx}{n}
\DeclareSymbolFont{ssfletters}{OT1}{cmss}{m}{n}
\DeclareMathSymbol{\bsfGamma}{0}{bsfletters}{'000}
\DeclareMathSymbol{\ssfGamma}{0}{ssfletters}{'000}
\DeclareMathSymbol{\bsfDelta}{0}{bsfletters}{'001}
\DeclareMathSymbol{\ssfDelta}{0}{ssfletters}{'001}
\DeclareMathSymbol{\bsfTheta}{0}{bsfletters}{'002}
\DeclareMathSymbol{\ssfTheta}{0}{ssfletters}{'002}
\DeclareMathSymbol{\bsfLambda}{0}{bsfletters}{'003}
\DeclareMathSymbol{\ssfLambda}{0}{ssfletters}{'003}
\DeclareMathSymbol{\bsfXi}{0}{bsfletters}{'004}
\DeclareMathSymbol{\ssfXi}{0}{ssfletters}{'004}
\DeclareMathSymbol{\bsfPi}{0}{bsfletters}{'005}
\DeclareMathSymbol{\ssfPi}{0}{ssfletters}{'005}
\DeclareMathSymbol{\bsfSigma}{0}{bsfletters}{'006}
\DeclareMathSymbol{\ssfSigma}{0}{ssfletters}{'006}
\DeclareMathSymbol{\bsfUpsilon}{0}{bsfletters}{'007}
\DeclareMathSymbol{\ssfUpsilon}{0}{ssfletters}{'007}
\DeclareMathSymbol{\bsfPhi}{0}{bsfletters}{'010}
\DeclareMathSymbol{\ssfPhi}{0}{ssfletters}{'010}
\DeclareMathSymbol{\bsfPsi}{0}{bsfletters}{'011}
\DeclareMathSymbol{\ssfPsi}{0}{ssfletters}{'011}
\DeclareMathSymbol{\bsfOmega}{0}{bsfletters}{'012}
\DeclareMathSymbol{\ssfOmega}{0}{ssfletters}{'012}
\newcommand{\bphi}{\bm{\phi}}
\newcommand{\bXi}{\bm{\Xi}}
\newcommand*\rel@kern[1]{\kern#1\dimexpr\macc@kerna}
\newcommand*\widebar[1]{%
  \begingroup
  \def\mathaccent##1##2{%
    \rel@kern{0.8}%
    \overline{\rel@kern{-0.8}\macc@nucleus\rel@kern{0.2}}%
    \rel@kern{-0.2}%
  }%
  \macc@depth\@ne
  \let\math@bgroup\@empty \let\math@egroup\macc@set@skewchar
  \mathsurround\z@ \frozen@everymath{\mathgroup\macc@group\relax}%
  \macc@set@skewchar\relax
  \let\mathaccentV\macc@nested@a
  \macc@nested@a\relax111{#1}%
  \endgroup
}
\DeclareMathOperator*{\argmax}{arg\,max}
\DeclareMathOperator{\ST}{s.t.\ }
\DeclareMathOperator{\var}{var}
\DeclareMathOperator{\cov}{cov}
\newcommand{\ifbcdot}[1]{\ifblank{#1}{\cdot}{#1}}
\DeclarePairedDelimiterX\abs[1]{\lvert}{\rvert}{\ifbcdot{#1}}
\DeclarePairedDelimiterX\parens[1]{(}{)}{\ifbcdot{#1}}
\DeclarePairedDelimiterX\brk[1]{[}{]}{\ifbcdot{#1}}
\DeclarePairedDelimiterX\braces[1]{\{}{\}}{\ifbcdot{#1}}
\DeclarePairedDelimiterX\angles[1]{\langle}{\rangle}{\ifblank{#1}{\cdot,\cdot}{#1}}
\DeclarePairedDelimiterX\ip[2]{\langle}{\rangle}{\ifbcdot{#1},\ifbcdot{#2}}
\DeclarePairedDelimiterX\norm[1]{\lVert}{\rVert}{\ifbcdot{#1}}
\DeclarePairedDelimiterX\ceil[1]{\lceil}{\rceil}{\ifbcdot{#1}}
\DeclarePairedDelimiterX\floor[1]{\lfloor}{\rfloor}{\ifbcdot{#1}}
\DeclareFontFamily{U}{matha}{\hyphenchar\font45}
\DeclareFontShape{U}{matha}{m}{n}{
      <5> <6> <7> <8> <9> <10> gen * matha
      <10.95> matha10 <12> <14.4> <17.28> <20.74> <24.88> matha12
      }{}
\DeclareSymbolFont{matha}{U}{matha}{m}{n}
\DeclareFontFamily{U}{mathx}{\hyphenchar\font45}
\DeclareFontShape{U}{mathx}{m}{n}{
      <5> <6> <7> <8> <9> <10>
      <10.95> <12> <14.4> <17.28> <20.74> <24.88>
      mathx10
      }{}
\DeclareSymbolFont{mathx}{U}{mathx}{m}{n}
\DeclareMathDelimiter{\vvvert}{0}{matha}{"7E}{mathx}{"17}
\DeclarePairedDelimiterX\vertiii[1]{\vvvert}{\vvvert}{\ifbcdot{#1}}
\DeclarePairedDelimiterXPP\trace[1]{\operatorname{Tr}}{(}{)}{}{\ifbcdot{#1}} 
\DeclarePairedDelimiterXPP\col[1]{\operatorname{col}}{\{}{\}}{}{\ifbcdot{#1}} 
\DeclarePairedDelimiterXPP\row[1]{\operatorname{row}}{\{}{\}}{}{\ifbcdot{#1}} 
\DeclarePairedDelimiterXPP\erf[1]{\operatorname{erf}}{(}{)}{}{\ifbcdot{#1}}
\DeclarePairedDelimiterXPP\erfc[1]{\operatorname{erfc}}{(}{)}{}{\ifbcdot{#1}}
\DeclarePairedDelimiterXPP\KLD[2]{D}{(}{)}{}{\ifbcdot{#1}\, \delimsize\|\, \ifbcdot{#2}} 
\DeclarePairedDelimiterXPP\op[2]{\operatorname{#1}}{(}{)}{}{#2} 
\newcommand{\convp}{\stackrel{\mathrm{p}}{\longrightarrow}}
\newcommand{\T}{^{\mkern-1.5mu\mathop\intercal}}
\newcommand{\ud}{\,\mathrm{d}} 
\newcommand{\bzero}{\bm{0}}
\DeclarePairedDelimiterXPP\indicate[1]{{\bf 1}}{\{}{\}}{}{\ifbcdot{#1}}
\NewDocumentCommand\ofrac{s m}{%
	\IfBooleanTF#1%
	{\dfrac{1}{#2}}%
	{\frac{1}{#2}}%
}
\NewDocumentCommand\ddfrac{s m m}{%
	\IfBooleanTF#1%
	{\dfrac{\mathrm{d} {#2}}{\mathrm{d} {#3}}}%
	{\frac{\mathrm{d} {#2}}{\mathrm{d} {#3}}}%
}
\NewDocumentCommand\ppfrac{s m m}{%
	\IfBooleanTF#1%
	{\dfrac{\partial {#2}}{\partial {#3}}}%
	{\frac{\partial {#2}}{\partial {#3}}}%
}
\newcommand{\setgiven}{:}
\providecommand\given{}
\DeclarePairedDelimiterX\Set[2]\{\}{%
	\if#1:%
		\renewcommand\given{\SetSymbol{:}}%
	\else%
		\renewcommand\given{\SetSymbol[\delimsize]{#1}}%
	\fi%
#2
}
\NewDocumentCommand\set{s O{\setgiven} m}{%
	\IfBooleanTF#1%
	{\Set*{#2}{#3}}%
	{\Set{#2}{#3}}%
}
\NewDocumentCommand{\evalat}{ s O{\big} m e{_^} }{%
\IfBooleanTF{#1}%
{\left. #3 \right|}{#3#2|}%
\IfValueT{#4}{_{#4}}%
\IfValueT{#5}{^{#5}}%
}
\providecommand\given{}
\DeclarePairedDelimiterXPP\cprob[1]{}(){}{
\renewcommand\given{\nonscript\,\delimsize\vert\allowbreak\nonscript\,\mathopen{}}%
\DeclarePairedDelimiterXPP\cexp[1]{}[]{}{
\renewcommand\given{\nonscript\,\delimsize\vert\allowbreak\nonscript\,\mathopen{}}%
#1%
}
\DeclareDocumentCommand \P { s e{_^} d() g } {%
	\mathbb{P}%
	\IfBooleanTF{#1}%
		{
			\IfValueT{#2}{_{#2}}%
			\IfValueT{#3}{^{#3}}%
			\IfValueTF{#5}{\cprob{#4 \given #5}}{\IfValueT{#4}{\cprob{#4}}}%
		}%
		{
			\IfValueT{#2}{_{#2}}%
			\IfValueT{#3}{^{#3}}%
			\IfValueTF{#5}{\cprob*{#4 \given #5}}{\IfValueT{#4}{\cprob*{#4}}}%
		}%
}
\DeclareDocumentCommand \E { s e{_^} o g } {%
	\mathbb{E}%
	\IfBooleanTF{#1}%
		{
			\IfValueT{#2}{_{#2}}%
			\IfValueT{#3}{^{#3}}%
			\IfValueTF{#5}{\cexp{#4 \given #5}}{\IfValueT{#4}{\cexp{#4}}}%
		}%
		{
			\IfValueT{#2}{_{#2}}%
			\IfValueT{#3}{^{#3}}%
			\IfValueTF{#5}{\cexp*{#4 \given #5}}{\IfValueT{#4}{\cexp*{#4}}}%
		}%
}
\DeclareDocumentCommand \Var { s e{_^} d() g } {%
	\var%
	\IfBooleanTF{#1}%
		{
			\IfValueT{#2}{_{#2}}%
			\IfValueT{#3}{^{#3}}%
			\IfValueTF{#5}{\cprob{#4 \given #5}}{\IfValueT{#4}{\cprob{#4}}}%
		}%
		{
			\IfValueT{#2}{_{#2}}%
			\IfValueT{#3}{^{#3}}%
			\IfValueTF{#5}{\cprob*{#4 \given #5}}{\IfValueT{#4}{\cprob*{#4}}}%
		}%
}
\DeclareDocumentCommand \Cov { s e{_^} d() g } {%
	\cov%
	\IfBooleanTF{#1}%
		{
			\IfValueT{#2}{_{#2}}%
			\IfValueT{#3}{^{#3}}%
			\IfValueTF{#5}{\cprob{#4 \given #5}}{\IfValueT{#4}{\cprob{#4}}}%
		}%
		{
			\IfValueT{#2}{_{#2}}%
			\IfValueT{#3}{^{#3}}%
			\IfValueTF{#5}{\cprob*{#4 \given #5}}{\IfValueT{#4}{\cprob*{#4}}}%
		}%
}
\NewDocumentCommand \dist {m o o} {%
\mathrm{#1}\left(%
	\IfValueT{#3}{%
		\tl_if_blank:nTF{ #3 }{\cdot\, \middle|\, }{#3\, \middle|\, }%
	}
	\IfValueT{#2}{#2}%
\right)%
}
\NewDocumentCommand {\cbrace} {t+ D[]{black} D(){\widthof{#5}} m m } {%
	\begingroup%
		\color{#2}
		\IfBooleanTF{#1}{%
			\overbrace{#4}^%
		}{
			\underbrace{#4}_%
		}%
		{\parbox[c]{#3}{\centering\footnotesize{#5}}}%
	\endgroup%
}
\let\oldforall\forall
\renewcommand{\forall}{\oldforall \, }
\let\oldexist\exists
\renewcommand{\exists}{\oldexist \, }
\newcommand{\rankcolor}[2]{%
	\expandafter\renewcommand\csname #1\endcsname[1]{%
		\ifblank{##1}{%
			{\color{#2} \textbf{#2}}%
		}{%
			\ifmmode
				\textcolor{#2}{\bm{##1}}%
			\else%
				{\color{#2} \textbf{##1}}%
			\fi	
		}%
	}
}
\DeclareDocumentCommand{\includeCroppedPdf}{ o O{./Figures/} m }{
	\IfFileExists{#2#3-crop.pdf}{}{%
		\immediate\write18{pdfcrop #2#3.pdf #2#3-crop.pdf}}%
	\includegraphics[#1]{#2#3-crop.pdf}
}
\newcommand*{\addFileDependency}[1]{
  \typeout{(#1)}
  \@addtofilelist{#1}
  \IfFileExists{#1}{}{\typeout{No file #1.}}
}
\definecolor{gray90}{gray}{0.9}
\def\colorlist{red,blue,brown,cyan,darkgray,gray,lightgray,green,lime,magenta,olive,orange,pink,purple,teal,violet,white,yellow}
\def\startcomment{[}
	\newcommand{\createcolor}[1]{%
			\expandafter\newcommand\csname #1\endcsname[1]{{\color{#1} ##1}}%
	}
	\newcommand{\msout}[1]{\text{\color{green} \st{\ensuremath{#1}}}}
	\newcommand{\del}[1]{{\color{green}\ifmmode \msout{#1}\else\st{#1}\fi}}
	\newcommand{\createcolor}[1]{%
			\expandafter\newcommand\csname #1\endcsname[1]{%
				\noexpandarg%
				\StrChar{##1}{1}[\firstletter]%
				\if\firstletter\startcomment%
					\relax
				\else%
					##1
				\fi
			}%
	}
	\newcommand{\msout}[1]{}
	\newcommand{\del}[1]{}
\def\@tempa#1,{%
    \ifx\relax#1\relax\else
        \createcolor{#1}%
        \expandafter\@tempa
    \fi
}
\newcommand{\hhide}[1]{}
	\def\@testdef #1#2#3{%
		\def\reserved@a{#3}\expandafter \ifx \csname #1@#2\endcsname
			\reserved@a  \else
			\typeout{^^Jlabel #2 changed:^^J%
				\meaning\reserved@a^^J%
				\expandafter\meaning\csname #1@#2\endcsname^^J}%
			\@tempswatrue \fi}
\newacronym{GGSP}{GGSP}{generalized graph signal processing}
\newacronym{GSO}{GSO}{graph shift operator}
\newacronym{GSP}{GSP}{graph signal processing}
\newacronym{MHT}{MHT}{multiple hypothesis testing}
\newacronym{BFDR}{BFDR}{Bayes false discovery rate}
\newacronym{FDR}{FDR}{false discovery rate}
\newacronym{FWER}{FWER}{family-wise error rate}
\newacronym{TPR}{TPR}{true positive rate}
\newacronym{lfdr}{lfdr}{local false discovery rate}
\newacronym{pdf}{PDF}{probability density function}
\newacronym{cdf}{CDF}{cumulative distribution function}
\newacronym{KKT}{KKT}{Karush–Kuhn–Tucker}
\newacronym{WLLN}{WLLN}{weak law of large numbers}
\newacronym{CSLLN}{CSLLN}{conditional strong law of large numbers}
\newacronym{DCT}{DCT}{dominated convergence theorem}
\newacronym{BIC}{BIC}{Bayesian information criterion}
\newacronym{MLE}{MLE}{maximum likelihood estimator}
\newacronym{a.e.}{a.e.}{almost everywhere}
\newacronym{a.s.}{a.s.}{almost surely}
\newacronym{r.c.d.}{r.c.d.}{regular conditional distribution}
\newacronym{AWGN}{AWGN}{additive white Gaussian noise}
\newacronym{BH}{BH}{Benjamini-Hochberg}
\newacronym{GLM}{GLM}{generalized linear model}
\newacronym{ADMM}{ADMM}{alternating direction method of multipliers}
\newacronym{LOS}{LOS}{line-of-sight}
\newcommand{\FDR}{\mathrm{FDR}}
\newcommand{\mFDR}{\mathrm{mFDR}}
\newcommand{\pow}{\mathrm{pow}}
\newcommand{\mpow}{\mathrm{mpow}}
\newcommand{\lfdr}{\mathrm{lfdr}}
\newcommand{\Ind}{\mathbb I}
\DeclareMathOperator*\lowlim{\underline{lim}}
\DeclareMathOperator*\uplim{\overline{lim}}
\newcommand{\altReg}{\mathcal{H}_1}
\newcommand{\altRegEst}{\widehat{\mathcal{H}}_1}
\newcommand{\bfdr}[1][\rejRegP]{\mathrm{BFDR}\!\left(#1\right)}
\renewcommand{\convp}{\xlongrightarrow{\mathsf{p}}}
\renewcommand{\emph}[1]{\textit{#1}}
\newcommand{\epoch}{t}
\newcommand{\epochRV}{\mathrm{T}}
\newcommand{\epochDom}{\calT}
\newcommand{\FP}[1][p]{F_{\mathrm{P}}\!\left(#1\right)}
\newcommand{\fP}[1][p]{f_{\mathrm{P}}\!\left(#1\right)}
\newcommand{\FPNodeEpoch}[1][p]{F_{\pRV}\!\left(#1\right)}
\newcommand{\fPNodeEpoch}[1][p]{f_{\pRV}\!\left(#1\right)}
\newcommand{\fPAlt}[1][p]{f_{\mathrm{P}|\HAlt\!}\!\left(#1\right)}
\newcommand{\FPAltNodeEpoch}[1][p]{F_{\pRV|\HAlt\!}\!\left(#1\right)}
\NewDocumentCommand{\fPAltNodeEpoch}{O{p} O{v} O{t}}{f_{\mathrm{P}_{(#2,#3)}|\HAlt\!}\!\left(#1\right)}
\newcommand{\FPNul}[1][p]{F_{\mathrm{P}|\HNul\!}\!\left(#1\right)}
\newcommand{\FpNulNodeEpoch}[1][p]{F_{\pRV|\HNul\!}\!\left(#1\right)}
\newcommand{\fPNul}[1][p]{f_{\mathrm{P}|\HNul\!}\!\left(#1\right)}
\NewDocumentCommand{\fPNulNodeEpoch}{O{p} O{v} O{t}}{f_{\mathrm{P}_{(#2,#3)}|\HNul\!}\!\left(#1\right)}
\newcommand{\graph}{\mathcal{G}}
\newcommand{\HAlt}{\mathsf{H}_1}
\newcommand{\HNul}{\mathsf{H}_0}
\newcommand{\HTrue}{\mathrm{H}_{\tuple}}
\newcommand{\HTrueRV}{\mathrm{H}_{(\nodeRV,\epochRV)}}
\newcommand{\HEst}{\widehat{\mathrm{H}}_{\tuple}}
\newcommand{\idc}[1]{\Ind\set{#1}}
\newcommand{\jointDom}{\vertexDom\times\epochDom}
\newcommand{\node}{v}
\newcommand{\nodeRV}{\mathrm{V}}
\newcommand{\nulReg}[1][t]{\mathcal{H}_0}
\newcommand{\nulRegEst}[1][t]{\widehat{\mathcal{H}}_0}
\newcommand{\numTests}{I}
\newcommand{\p}{\rVar{P}_{\tuple}}
\newcommand{\pRV}{\mathrm{P}_{\tuple}}
\newcommand{\pRVNodeEpochRV}{\mathrm{P}_{(\nodeRV,\epochRV)}}
\newcommand{\pVec}{\rVec{p}}
\newcommand{\pSet}[1][\samReg]{\mathcal{P}_{#1}}
\newcommand{\rejRegP}[1][(h)]{\mathcal{R}_{\mathrm{P}}{#1}}
\newcommand{\rVar}[1]{{\mathrm{#1}}}
\newcommand{\rVec}[1]{{\boldsymbol{\mathrm{#1}}}}
\newcommand{\samReg}{\mathcal{H}}
\newcommand{\signal}{\upgamma}
\newcommand{\signalNodeEpoch}{\signal_{\tuple}}
\newcommand{\tuple}{(\node, \epoch)}
\newcommand{\tupleRV}{(\nodeRV, \epochRV)}
\newcommand{\vertexDom}{\calV}
\let\isout\sout \renewcommand{\sout}[1]{\ifmmode\text{\isout{\ensuremath{#1}}}\else\isout{#1}\fi}
\begin{document}

\title{A Graph Signal Processing Perspective of Network Multiple Hypothesis Testing with False Discovery Rate Control}
\author{Xingchao Jian, Martin Gölz, Feng Ji, Wee~Peng~Tay and Abdelhak M. Zoubir
\thanks{
This work was supported by the xxxxx. 
}%
}



\maketitle \thispagestyle{empty}


\begin{abstract}
We consider a multiple hypothesis testing problem in a sensor network over the joint spatio-temporal domain. The sensor network is modeled as a graph, with each vertex representing a sensor and a signal over time associated with each vertex. We assume a hypothesis test and an associated $p$-value for every sample point in the joint spatio-temporal domain. Our goal is to determine which points have true alternative. By parameterizing the unknown $p$-value distribution under the alternative and the prior probabilities of hypotheses being null with a bandlimited generalized graph signal, we can obtain consistent estimates for them. Consequently, we also obtain an estimate of the local false discovery rates (lfdr). We prove that by using a step-up procedure on the estimated lfdr, we can achieve asymptotic false discovery rate control at a pre-determined level. Numerical experiments validate the effectiveness of our approach compared to existing methods.
\end{abstract}

\begin{IEEEkeywords}
Multiple hypothesis testing, false discovery rate, generalized graph signal processing.
\end{IEEEkeywords}

\section{Introduction}\label{sect:intro}

In modern data analysis, graphs are commonly used to represent entities and relationships, making graph-based problems prevalent. \Gls{GSP} has emerged as a powerful tool to address these problems by leveraging the domain knowledge of graphs \cite{ShuNarFroOrt:J13,LeuMarMou:J23, JianJiTay:J23}. It extends the traditional signal processing domain from discrete-time stamps to vertices in a network \cite{SanMou:J14}. With techniques such as filtering \cite{PavGirOrt:J23, IsuGamShuSeg:J24}, sampling \cite{AniGadOrt:J16, TanEldOrt:J20}, representation \cite{ShuRicVan:J16}, and reconstruction \cite{RomMaGia:J16,KroRouEld:J22}, \gls{GSP} has found applications in various domains, including point cloud denoising \cite{ZengCheNg:J19, DinCheBaj:J20,HuPangLiu:J21}, image processing \cite{YagOzg:J20,YangXuHou:J23}, and brain network analysis \cite{HuangBolMed:J18}.

The traditional \gls{GSP} literature primarily investigates signal relationships through graph structures. In practice, additional side information, such as temporal relationships, may also be included. To accommodate both the discrete-time domain and the graph domain, \cite{GraLouPerRic:J18} adopts a Cartesian product graph. This approach allows discrete-time signals defined on the graph to be regarded as classical graph signals on a product graph, enabling the application of signal processing techniques to this product graph \cite{PerLouGraVan:C17,LouPer:J19}.

This framework is further extended in \cite{JiTay:J19} to include a more general Hilbert space as the side-information domain, allowing for asynchronous sampling on vertices and better exploitation of domain knowledge, especially in the case of a continuous domain \cite{JianTayEld:J23}. This extended framework, referred to as \gls{GGSP}, defines signals on a joint domain formed by the graph domain and the side-information domain, rather than solely on a graph. In this paper, we consider problems under this \gls{GGSP} framework for greater generality and flexibility. For simplicity, \emph{we often refer to the side information domain colloquially as the ``time domain'' and its samples as ``(time) instances''}, although our results apply to a general side-information domain that may not necessarily be interpreted as time.

By leveraging the graph topology, \gls{GSP} estimation methods typically associate a noisy and incomplete input with a smooth approximation of the target graph signals \cite{RomMaGia:J16,KroRouEld:J22}. However, there are only a few existing works that address detection problems on graphs. In our work, we have a hypothesis testing problem for every pair of vertex and instance. Given the availability of $p$-values for all such pairs, our goal is to determine a subset in which the alternatives are true. Similar to estimation theory, we expect that by leveraging the joint domain information, we can map the set of $p$-values to a subset of the joint domain that identifies the set where the alternatives are true. Furthermore, this identified set should exhibit smoothness in the joint domain.

In statistics, the problem of determining which hypotheses to reject among a large set of hypotheses based on their associated $p$-values is known as the \gls{MHT} problem \cite{Efr:10}. The seminal work by \cite{BenHoc:J95} proposes an adaptive thresholding method, also known as the \gls{BH} method, which controls the \gls{FDR} to ensure that the identified set of alternatives does not contain too many misidentified hypotheses. This method is improved by \cite{Sto:J02} by estimating the proportion of null hypotheses. In recent years, there have been more \gls{MHT} works that incorporate structural information. For example, \cite{LiBar:J18} proposes a weighted \gls{BH} method that assigns weights to $p$-values based on prior knowledge of structural information, such as low variation over a graph. The resulting \gls{FDR} is shown to be related to the Rademacher complexity of the feasible set of weights.

The paper \cite{TanKoyPolSco:J18} proposes a method for solving the \gls{MHT} problem over a graph. This method leverages the graph information by encoding the prior probabilities of being null as a function of a low variation signal on the graph. 
The paper \cite{LeiFit:J18} develops a method to iteratively threshold, mask, and reveal the $p$-values to control the \gls{FDR} at a pre-determined level. This approach allows for the incorporation of prior knowledge during the thresholding step. Furthermore, the authors of \cite{LeiFit:J18} prove that the optimal rejection strategy, in terms of both marginal power and marginal \gls{FDR}, is to set a threshold on the \gls{lfdr} values. 
Another work, \cite{CaoChenZhang:J22}, demonstrates that this strategy can asymptotically control the \gls{FDR} at a pre-determined level under certain consistency conditions. The authors also propose an EM algorithm for estimating the unknown \glspl{pdf} of $p$-values and the prior probabilities of being null. 
The work \cite{GolZouKoi:J22} proposes a method of estimating a Beta mixture density, and applies it to the detection problem of spatial signals. 
The work \cite{PouXia:J23} proposes distributed methods for approximating the \gls{BH} method and scan procedure \cite{AriChenYing:J21} over graphs. This method incorporates the case where the prior probabilities of being null vary over the graph.

In this paper, we propose an approach for simultaneously estimating the prior probability of hypotheses being null and the distribution of $p$-values under alternative in a sensor network over the joint spatio-temporal domain.
Unlike existing methods \cite{GolZouKoi:J22,LiBar:J18,TanKoyPolSco:J18,PouXia:J23} that assume these quantities to be constant, piecewise constant, or with finite choices, we parameterize them using an underlying generalized graph signal (cf.\ \cref{sect:prelim:GGSP}), resulting in a finer model. We model the underlying generalized graph signal as a bandlimited signal defined by a finite number of parameters.
This strategy allows us to utilize the \gls{MLE}, which is known to be consistent. We prove the asymptotic control of the \gls{FDR} by applying the \gls{lfdr} based approach. Our method is more suitable for real-world problems, computationally less challenging, and not restricted by uncontrollable parameters, making it an improvement over existing approaches.

Our main contributions are summarized as follows:
\begin{enumerate}
\item We propose a method for simultaneous estimation of the \glspl{pdf} of $p$-values that are inhomogeneous over the joint domain. In \cref{thm:unif_conv_para}, we prove that by utilizing the domain knowledge, this method yields consistent estimates.
\item We provide an \gls{MHT} strategy with simultaneously varying null proportion and $p$-value distribution under the alternative under the \gls{GGSP} framework. In \cref{thm:limit_FDR}, we provide conditions for asymptotic \gls{FDR} control and derive the asymptotic power of the proposed method in \cref{thm:limit_pow}.
\item We validate the utility of the proposed strategy in practical applications.
\end{enumerate}

The remainder of this paper is organized as follows. In \cref{sect:stat_model}, we introduce the empirical-Bayesian model for deriving our proposed method. In \cref{sect:MHT-GGSP}, we derive the method for \glspl{pdf} estimation from $p$-values. We prove that using the \gls{lfdr} approach with the proposed \gls{pdf} estimate, we obtain asymptotic \gls{FDR} control. In \cref{sect:exp}, we validate our proposed method on communication and seismic datasets. We conclude in \cref{sect:conc}.

\emph{Notation:} 
Upright letters like \(\rVar{A}\) and upright bold lowercase letters like \(\rVec{a}\) denote random variables and random vectors with corresponding \glspl{pdf} \(f_\rVar{A}(a)\) (univariate) and \(f_\rVec{a}(\boldsymbol{a})\) (multivariate), respectively. Upright bold letters like \(\rVar{\bA}\) represent random matrices. Deterministic scalars, vectors and matrices are denoted by italic letters \(A\), italic bold lowercase letters \(\boldsymbol{a}\) and italic bold uppercase letters \(\boldsymbol{A}\), respectively. A calligraphic letter \(\mathcal{A}\) represents a set with cardinality \(|\mathcal{A}|\).
We use $\uplim$ and $\lowlim$ to denote the upper and lower limits. We write $\convp$ for convergence in probability. The indicator function \(\idc{\text{cond}}\) returns \(1\) if cond is true and \(0\) otherwise.
Let $1 \vee A = \max(1, A)$.
The abbreviation ``w.p.'' stands for ``with probability''. For two functions $f$ and $g$, we write $f\circ g$ to represent their composition.
A glossary of frequently used symbols is provided in \cref{tab:notation}. 

\begin{table}[!htb]
    \centering
    \caption{List of frequently used symbols}
    \label{tab:notation}
    \begin{tabular}{ >{\raggedright}p{5cm} p{8cm} }
        \toprule
        \textbf{Symbol} & \textbf{Description} \\
        \midrule
        $\graph = (\calV,\calE)$ & Graph \(\graph\) with vertex set $\calV$ and edge set $\calE$, and $|\calV|=N$ \\
        $\boldsymbol{S}$ & Graph shift operator \\
        $\set{\lambda_k\given k=1,\dots, N}$ & Graph frequencies \\
        $\set{\bphi_k\given k=1,\dots, N}$ & Graph Fourier basis \\
        $\set{\psi_k\given k=1,2,\dots}$ & Fourier basis of $L^2(\calT)$\\
        $\HNul,\, \HAlt$ & Null and alternative hypotheses, respectively \\
        $\HTrue$ & Ground-truth hypothesis at \(\tuple\in\samReg\) \\
        \(\samReg\) & Discrete subset of \(\jointDom\) on which inference is conducted\\        
        $\nulReg\subset\samReg$, $\altReg=\samReg\setminus\nulReg$ & Null and alternative regions, respectively \\
        $h$ & Detection strategy \\
        \(\nulRegEst\), \(\altRegEst\) & Estimated null and alternative regions, respectively \\
        $\FDR_{\samReg}(h)$ & \gls{FDR} of $h$, defined by \cref{eq:def_fdr}\\
        $\pow_{\samReg}(h)$ & power of $h$, defined by \cref{eq:def_pow} \\
        $\alpha$ & Nominal \gls{FDR} level \\
        $\signalNodeEpoch\equiv\signal\tuple$ & Stochastic process on $\calV\times\calT$\\
        $\bXi$ & Fourier coefficients of $\signal$, see \cref{asp:bandlimit}\\
        \(\pi_0\circ\signalNodeEpoch\equiv\pi_0\circ\signal\tuple\) & Prior probability of the null hypothesis at $(v,t)$\\
        $\fPNulNodeEpoch$,$\FpNulNodeEpoch$ & $p$-value \gls{pdf}, \gls{cdf} under \(\HNul\)\\
        $\fPAltNodeEpoch$, \(\FPAltNodeEpoch\) & $p$-value \gls{pdf}, \gls{cdf} under \(\HAlt\) \\
        $\fPNodeEpoch, \FPNodeEpoch$ & Marginal $p$-value \gls{pdf}, \gls{cdf}\\
        \bottomrule
    \end{tabular}
    
\end{table}

\section{Statistical Model}\label{sect:stat_model}

We consider a graph $\graph=(\calV,\calE)$, where $\calV$ represents the vertex set and $\calE$ is the edge set. We assume that \(\graph\) is finite, undirected, and without self-loops. A graph signal is defined as a function from $\calV$ to $\Real$. Since $\abs{\calV}=N$ is finite, a graph signal can also be written as an $N$-dimensional vector. 


In this paper, apart from the graph \(\graph\), we consider an additional space $\calT$ as the side-information space. We assume that $\calT$ is a topological measure space such that every open set is measurable. This measure space represents the domain of the signal observed by each vertex, i.e., we consider spatio-temporal signals $g : \calV\times\calT \to \bbR$. An example of $\calT$ is a finite time interval $[a,b]$. This allows us to model the \gls{MHT} problem on the joint domain $\calV\times\calT$ as determining the state of the spatial-temporal graph signal on a discrete subset of $\jointDom$. For the sake of readability, a quantity \(g(v, t)\) that depends on \(\tuple \in \jointDom\) is also denoted as \(g_{\tuple}\). Throughout this paper, we assume that functions of random elements are always measurable or have measurable choices.

\subsection{Problem formulation}

The spatial inference objective is the identification of the true state \(\HTrue \in \{\HNul, \HAlt\} \) of the monitored underlying spatio-temporal signal for each combination of node and instance \(\tuple \in \jointDom\). 
If the signal does not correspond to the detection objective, the \textit{null hypothesis} \(\HNul\) holds. Otherwise, the \textit{alternative} \(\HAlt\) is in place. For example, in a communication sensor network, if the objective is to detect which sensors receive informative signals, sensors receiving signals above a certain power level are assigned $\HAlt$ and assigned $\HNul$ otherwise. 

In practice, the joint domain \(\jointDom\) is sampled at a finite number of vertex-instance tuples \(\tuple\in\samReg \subset\jointDom\) of size \(|\samReg| = \numTests\). For each \(\tuple\in\samReg\), a single local summary statistic is observed, quantifying the belief in the local state of the monitored spatio-temporal signal. In line with previous work on spatial inference via \gls{MHT} \cite{Goelz2022b, Goelz2022c, Goelz2023a, Goelz2024b}, we use \(p\)-values as local summary statistics. A small \(p\)-value indicates little support for the null hypothesis. The set of all available \(p\)-values is \(\pSet = \set*{\p \given \tuple\in\samReg}\) with \(|\pSet| = |\samReg| = \numTests\). We write \(\pVec\in[0, 1]^{\numTests}\) for a vector containing the elements of \(\pSet\). Solving the spatial inference problem is equivalent to finding two mutually exclusive subsets of \(\samReg\), the \emph{null region} \(\nulReg = \set*{\tuple \in \samReg \given \HTrue = \HNul}\) and \emph{alternative region} \(\altReg = \set*{\tuple \in \samReg \given \HTrue = \HAlt}\). A \textit{detection strategy} is a mapping
\begin{align*}
h:[0, 1]^{\numTests}&\to\set{\HNul,\HAlt}^{\samReg},\\
\pVec&\mapsto h(\pVec),
\end{align*}
i.e., given a vector of $p$-values $\pVec$, the function $h(\pVec)$ makes the spatio-temporal decision \(\HEst\in\{\HNul, \HAlt\}\) for all \(\tuple\in\samReg\). In other words, \(h\) finds the estimated null and alternative regions \(\nulRegEst=\{\tuple\in\samReg: \HEst = \HNul\}\) and \(\altRegEst=\{\tuple\in\samReg: \HEst = \HAlt\}\), respectively. Any valid detection strategy provides statistical guarantees with respect to a specified error measure. In traditional binary testing, the objective is typically to control the type I error, i.e., the probability of deciding in favor of \(\HAlt\) when \(\HNul\) is true, at a pre-specified level. When comparing two detection strategies that require similar assumptions, the better strategy provides (at the same nominal false alarm level) a higher detection power, i.e., probability of deciding in favor of \(\HAlt\) when it is true. 

In \gls{MHT}, multiple statistical measures to quantify false alarms exist. The most popular are the \gls{FWER} \cite{Tukey1991} and the \gls{FDR} \cite{BenHoc:J95}. The \gls{FWER} is the probability of at least one false positive, i.e., the decision in favor of \(\HAlt\) when \(\HNul\) is true, while the \gls{FDR} is the expected proportion of false positives among all positives. In general, detection strategies designed to control the \gls{FDR} yield more positives than \gls{FWER} controlling strategies, because a higher absolute number of false positives is allowed, as long as also more true positives are obtained. We are interested in finding the null and alternative regions with high accuracy in spatial inference. Hence, we consider \gls{FDR} controlled strategies \cite{GolZouKoi:J22}.

Formally, the \emph{\gls{FDR}} and the \emph{power} (which we also refer to as \emph{\gls{TPR}} in the \gls{MHT} context) achieved by $h$ are defined as 
\begin{align}
    \FDR_{\samReg}(h) &= \bbE[\frac{\abs{\altRegEst\cap \nulReg}}{1\vee |\altRegEst|}] = \bbE[\frac{\sum_{\tuple \in \altRegEst}\idc{\tuple \in\nulReg}}{1\vee \sum_{\tuple\in\samReg}\idc{\tuple\in\altRegEst}}] \label{eq:def_fdr},\\
    \pow_{\samReg}(h) &=\bbE[\frac{|\altRegEst\cap \altReg|}{1\vee |\altReg|}] = \bbE[\frac{\sum_{\tuple\in\altRegEst}\idc{\tuple\in\altReg}}{1\vee\sum_{\tuple\in\samReg}\idc{\tuple\in\altReg}}]\label{eq:def_pow}.
\end{align}
Our goal is to design a strategy $h$ such that $\pow_{\samReg}(h)$ is as high as possible while $\FDR_{\samReg}(h)$ is controlled at a pre-defined level, i.e., \(\FDR_{\samReg}(h)\leq\alpha\). 

In a series of works \cite{Efron2001, Efron2005, Efron2008}, an empirical Bayes framework was developed for controlling the \gls{FDR} in large-scale testing situations. In contrast to traditional single binary testing, in \gls{MHT}, the statistical models and priors can be learned from the available set of test statistics. This philosophy has been applied successfully for spatial inference in large-scale sensor networks \cite{GolZouKoi:J22, Goelz2022b, Goelz2024b}. Also, additional available \textit{contextual information} can be incorporated into the probability models to enhance detection power \cite{TanKoyPolSco:J18, Goelz2022c}. In this work, the contextual information comes in the form of the joint domain $\jointDom$ which encodes the spatio-temporal structure of the observed signal. In what follows, we first introduce the Bayesian approach to \gls{MHT} before formally proposing the incorporation of the graph structure.  

In empirical Bayes inference, the marginal \gls{pdf} of the test statistics in \(\pSet\),
\begin{equation}
\label{eq:two-groups-pdf}
    f_\mathrm{P}(p) = \pi_0 \cdot \fPNul + (1-\pi_0)\cdot \fPAlt,
\end{equation}
is expressed as a mixture of the distribution under the null \(\fPNul\) and the distribution under the alternative \(\fPAlt\). In this \textit{two-groups model} \cite{Efron2008}, \(\fPNul\) \big(\(\fPAlt\)\big) captures the statistical behavior of the \(p\)-values from all \(\tuple\in\nulReg\) \big(\(\tuple\in\altReg\)\big), i.e., wherever the null (alternative) hypothesis is in place. The ground truth hypotheses are regarded as random by this model, and $\pi_0$ is the prior probability that the null hypothesis is true for each test. With the two-groups model, the posterior probability of \(\HNul\) given the local summary statistic \(\p\) is the so-called \textit{\gls{lfdr}} \cite{Efr:10} obtained by applying Bayes theorem: 
\begin{align}\label{eq:lfdrnonlocal}
    \lfdr(p) = \bbP(\HTrue= \HNul\,|\,\p=p) = \frac{\pi_0\cdot \fPNul[p]}{\fP[p]}.
\end{align}
While \glspl{lfdr} themselves are informative as they express a belief in the state of the observed signal, they can also be used to find \(\altRegEst\) under \gls{FDR} control at nominal level \(\alpha\). To this end, denote by \(\rejRegP\subset[0, 1]\) the \textit{rejection region} of a detection strategy  \(h\). If \(\p\in \rejRegP\), we declare \(\HEst=\HAlt\), i.e., that the alternative is in place at the associated \(\tuple\in\samReg\). The expected \gls{lfdr} across the rejection region is 
\begin{equation}
    \label{eq:bfdr}
    \bbE[\lfdr(\p)\given\p\in\rejRegP] = \frac{\pi_0\cdot \FPNul[\rejRegP]}{\FP[\rejRegP]} \equiv \bfdr
\end{equation}
the so-called \textit{\gls{BFDR}}. Here, \(\FP[\rejRegP] = \int_{\rejRegP}\pi_0\cdot \fPNul \mathrm{d}p+ \int_{\rejRegP}(1-\pi_0)\cdot \fPAlt\mathrm{d}p\) and \(\FPNul[\rejRegP] = \int_{\rejRegP}\fPNul\mathrm{d}p\) are the total mass and mass under \(\HNul\) of \(\mathrm{P}\) concentrated in the rejection region \(\rejRegP\), respectively. The \gls{BFDR} dominates the \gls{FDR} from \cref{eq:def_fdr} \cite{Efr:10}. Consequently, any detection strategy \(h\) that selects \(\rejRegP\) such that \(\bfdr\leq \alpha\) controls the \gls{FDR} at nominal level \(\alpha\). In order to make as many discoveries as possible with \gls{FDR} control, it has been proposed to choose the largest set of hypotheses whose average \gls{lfdr} value does not exceed the nominal level \cite{Halme2019}. 

The two-groups model implicitly assumes that the tests are exchangeable, since \(\fPNul, \fPAlt\) and \(\pi_0\) are the same for all hypotheses. However, this exchangeability assumption prevents profiting from more localized models, when contextual information is available. For example, in \cite{TanKoyPolSco:J18}, the authors considered the situation where each hypothesis is associated with a vertex in a graph. They proposed to allow $\pi_0$ to vary across vertices to raise the detection power. The idea of a spatially varying prior of the null hypothesis has also been considered in \cite{Goelz2022c}, without using a graph structure. 

\subsection{A \gls{GGSP} model for \gls{MHT}}

In this paper, we allow both the prior and the probability distributions to vary across space and time. This is achieved by deploying the hierarchical model depicted in \cref{fig:Bayes:model}.

\begin{figure}[!htb]
\centering
\includegraphics[scale=1.3]{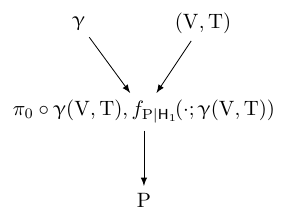}
\caption{The scheme of the hierarchical Bayesian model for \gls{MHT} in \gls{GGSP}.}
\label{fig:Bayes:model}
\end{figure}

In this model, $\signal$ is a stochastic process on \(\jointDom\) with $\signalNodeEpoch \equiv \signal\tuple \in \calZ$ for all $(v,t)\in\jointDom$, where $\calZ$ is a given subset of a Euclidean space. We use $\signal$ to characterize (i.e., parameterize) the distributions of $p$-values over different points in $\jointDom$. Note that $\signal$ is different from the monitored underlying spatio-temporal signal whose state at $(v,t)$ is $\HTrue \in \set{\HNul, \HAlt}$. As illustrated in \cref{exam:moving_trans}, $\signal$ models the randomness of the monitored signal strength at different sample points.

As is often the case, signals corresponding to $\HAlt$ often occur in clusters \cite{TanKoyPolSco:J18,GolZouKoi:J22}. Therefore, in this paper, we assume that $\signal$ has a low total variation over $\jointDom$. To formally incorporate this assumption, we model it as a random \emph{generalized graph signal} in the context of \gls{GGSP} \cite{JiTay:J19,JianTayEld:J23}. The generalized graph signal space is the function space $L^2(\jointDom)$. The shift operator on $L^2(\calV)$ is defined as a symmetric matrix $\boldsymbol{S}=(s_{uv})\in\Real^{N\times N}$ whose non-zero entries reflect the graph structure, i.e., $s_{uv}\neq0$ if and only if $(u,v)\in\calE$. We refer to $\boldsymbol{S}$ as a \emph{\gls{GSO}}. Typical choices of the \gls{GSO} include the graph adjacency matrix or the Laplacian matrix. 

Suppose $\boldsymbol{S}$ has orthonormal eigenvectors $\set{\bphi_1,\dots,\bphi_N}$ with corresponding eigenvalues $\set{\lambda_1,\allowbreak\dots,\allowbreak\lambda_N}$. These eigenvalues are referred to as graph frequencies since each $\lambda_k$ reflects the smoothness (i.e., total variation) of the signal $\bphi_k$ on \(\graph\) \cite{ShuNarFro:J13,SanMou:J14,JianJiTay:J23}. The notion of total variation varies with the choice of $\boldsymbol{S}$. The set of vectors $\set{\bphi_1,\dots,\bphi_N}$ is called the \emph{graph Fourier basis}. In this paper, we use the convention that $\lambda_k$ are listed in increasing order of the total variation of $\bphi_k$, that is, $\bphi_1$ and  $\bphi_N$ are the smoothest and least smooth graph signals among all graph Fourier bases, respectively. 

The shift operator on $\calT$ is defined as a compact, self-adjoint operator on $L^2(\calT)$, with its eigenvectors $\set{\psi_k\given k\geq1}$ forming an orthonormal basis for $L^2(\calT)$. Consequently, we obtain a basis for $L^2(\jointDom)$ through the tensor product: $\set{\bphi_{k_1}(v)\psi_{k_2}(t)\given k_1=1,\dots,N, k_2\geq1}$. This basis represents orthogonal signals of varying smoothness over $\jointDom$ (cf.\ \cref{sect:prelim:GGSP}). In \cref{subsect:inference}, we enforce smoothness on $\signal$ by requiring it to be bandlimited.

Independently of the process $\signal$, \(\numTests\) sample points $\tuple$ are drawn from a strictly positive probability measure $\rho$ on $\jointDom$ and form the set of vertex-instance tuples \(\samReg\). By assuming $\rho$ as strictly positive, we require that any non-empty open set has a positive measure under $\rho$. Hence, the \(\tuple\in\samReg\) are observations of the random element \(\tupleRV\in\jointDom\).

Let the \textit{null proportion} $\pi_0$ be a continuous function from $\calZ$ to $[0,1]$. Given $\signal$ and $\tuple$, we obtain a probability $\pi_0\circ\signalNodeEpoch := \pi_0(\signalNodeEpoch) \in[0,1]$.
With this spatio-temporally variable model of the prior null proportion, the true hypotheses for each \(\tuple\in\samReg\) are generated by
\begin{align*}
\HTrue=
\begin{cases}
\HNul & \mathrm{w.p.~} \pi_0\circ\signalNodeEpoch, \\
\HAlt & \mathrm{w.p.~} 1-\pi_0\circ\signalNodeEpoch.
\end{cases}
\end{align*}
Finally, the \(p\)-values are a function \(\pRV\) of \(\tuple\) and follow different models under the null and alternative hypotheses:
\begin{align*}
    \pRV \sim \begin{cases}
        \fPNulNodeEpoch& \qquad \HTrue= \HNul, \\
        \fPAltNodeEpoch \equiv f_{\mathrm{P}|\HAlt}\big(p;\signalNodeEpoch\big)  & \qquad \HTrue= \HAlt.
    \end{cases}
\end{align*}
Under \(\HNul\), the \(p\)-value on $\tuple$ follow the distribution $\fPNulNodeEpoch$. This distribution does not vary with $\tuple$ in most practical applications. Under the alternative, the distribution depends on \(\signalNodeEpoch\). This is in line with the traditional two-groups model, because appropriately calculated \(p\)-values are uniformly distributed on \([0, 1]\) under \(\HNul\) and their distribution under \(\HAlt\) depends on the size of the effect, i.e., the strength of the signal that is present under \(\HAlt\) but not under \(\HNul\) \cite{Efr:10}. In contrast to the traditional two-groups model, our proposed model captures the inhomogeneity of the underlying signal over $\jointDom$ by letting both, the prior null proportion \(\pi_0\circ\signalNodeEpoch\) and the alternative \gls{pdf} vary as a function of $\signalNodeEpoch$. Such inhomogeneity is often encountered in practice, an illustrative example is presented in \cref{exam:moving_trans} and \cref{fig:inhomo_illus}. 

With the proposed model, the conditional joint \gls{pdf} of the $p$-value $\mathrm{P}$, the hypothesis \(\mathrm{H}\), and the sample point $\tupleRV$ given \(\signal\) is 
\begin{align}\label{eq:pdf_pHvt|gam}
f_{\mathrm{P}, \mathrm{H}, \tupleRV}\big(p,H,(v,t)\mid \signal\big) =\rho(v,t)\cdot 
&\big(\pi_0\circ\signalNodeEpoch\cdot\idc{\HTrue=\HNul}\cdot \fPNulNodeEpoch\\
&\quad+ \big(1-\pi_0\circ\signalNodeEpoch\big)\cdot\idc{\HTrue=\HAlt}\cdot \fPAltNodeEpoch \big),\nonumber
\end{align}
where $p\in[0,1]$, $H\in\{\HNul, \HAlt\}$, and $\tuple\in\jointDom$. From \cref{eq:pdf_pHvt|gam}, we derive the conditional \gls{pdf} of $\pRVNodeEpochRV$ and $\tupleRV$ as
\begin{align}\label{eq:pdf_pvt|gam}
f_{\mathrm{P}, \tupleRV}\big(p,\tuple\mid\signal\big)
&= \rho(v,t)\cdot \big(\pi_0\circ\signalNodeEpoch\cdot \fPNulNodeEpoch + \big(1-\pi_0\circ\signalNodeEpoch\big)\cdot \fPAltNodeEpoch \big),
\end{align}
and the conditional \gls{pdf} of $\pRV = \mathrm{P}\tuple$ given $\signal$ and $\tupleRV = \tuple$ as
\begin{align}\label{eq:pdf_p|gamvt}
\fPNodeEpoch \equiv \fP[p\mid\signalNodeEpoch] &= \pi_0\circ\signalNodeEpoch\cdot \fPNulNodeEpoch + \big(1-\pi_0\circ\signalNodeEpoch\big)\cdot \fPAltNodeEpoch.
\end{align}

In other words, we parameterize the distribution of $p$-values on $\jointDom$ by a \emph{random generalized graph signal} $\signal$ (cf.\ \cref{sect:prelim:GGSP}). This model combines the information from the joint domain $\jointDom$ via $\signal$ and the two-groups model.
We illustrate our model in the following example.

\begin{Example}\label{exam:moving_trans}
\begin{figure}[!htb]
	\centering
	\begin{subfigure}[b]{\textwidth}
		\centering
		\includegraphics[width=0.44\linewidth, trim=3.5cm 1.5cm 3.2cm 4.2cm, clip]{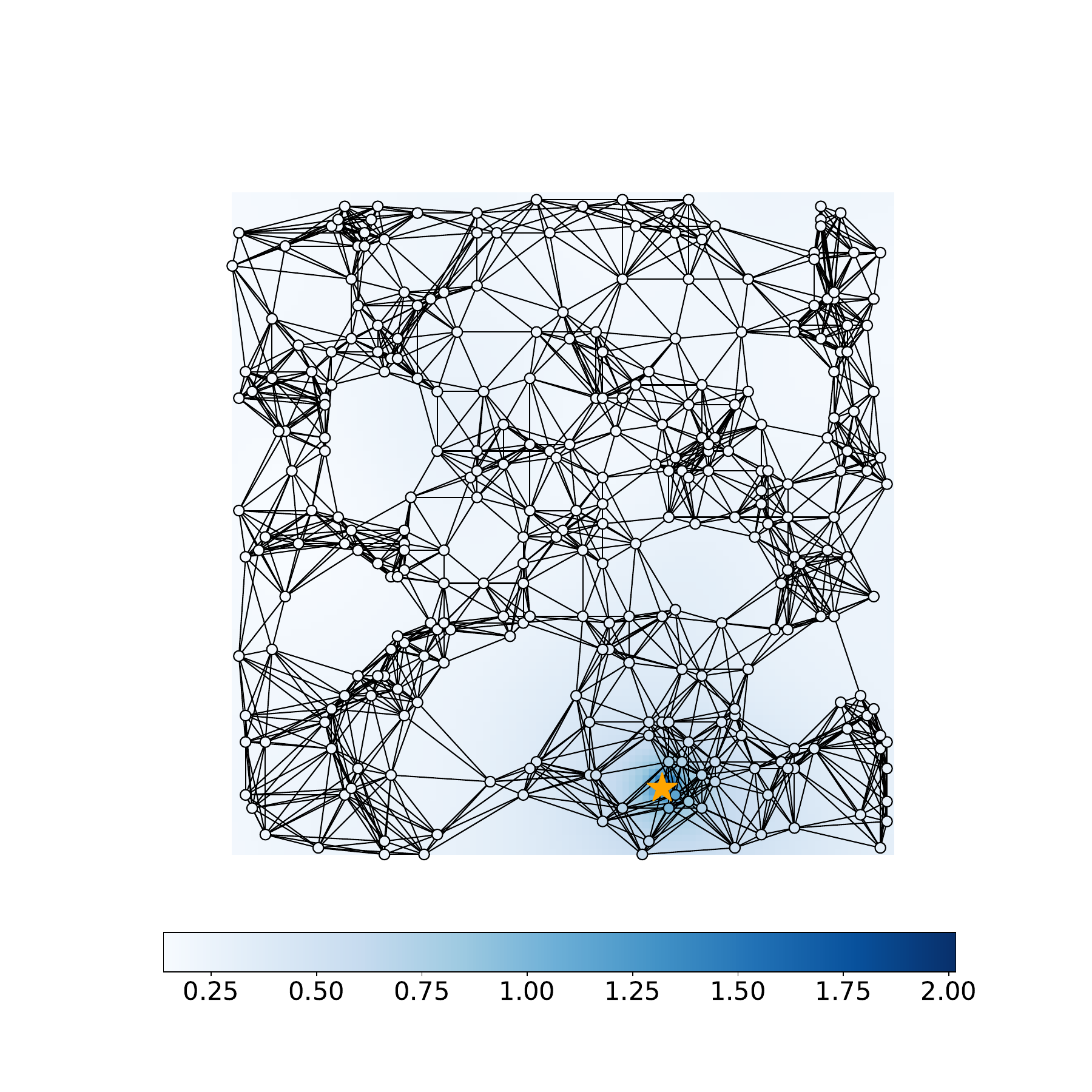}
        \includegraphics[width=0.44\linewidth, trim=3.5cm 1.5cm 3.2cm 4.2cm, clip]{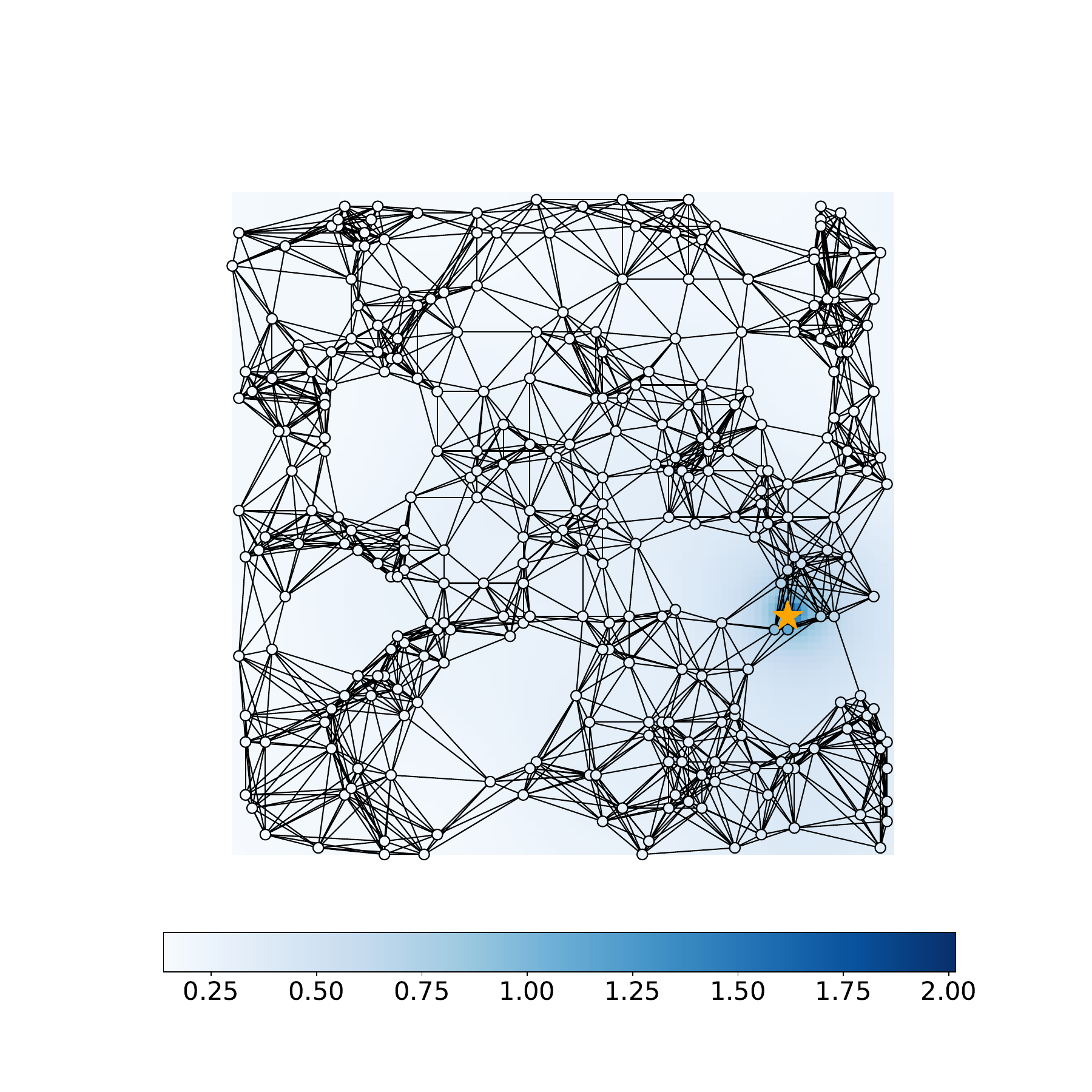}
		\caption{The generalized graph signal $\signal_{\tuple} = (\rVar{x}_{\text{sf}}(v,t,i))_{i=1,2}$. Here, we show the snapshot of $\signal_{\tuple}$ at a fixed time instance $t$. For clarity of visualization, the first image shows $\rVar{x}_{\text{sf}}(v,t,1)^{1/2}$, and the second shows $\rVar{x}_{\text{sf}}(v,t,2)^{1/2}$. The transmitters' positions are highlighted as stars in orange.}
		\label{fig:gamma_illustrate}
	\end{subfigure}
	\begin{subfigure}[b]{\textwidth}
		\centering
		\includegraphics[width=0.44\linewidth, trim=0.2cm 0.3cm 0.2cm 0.2cm, clip]{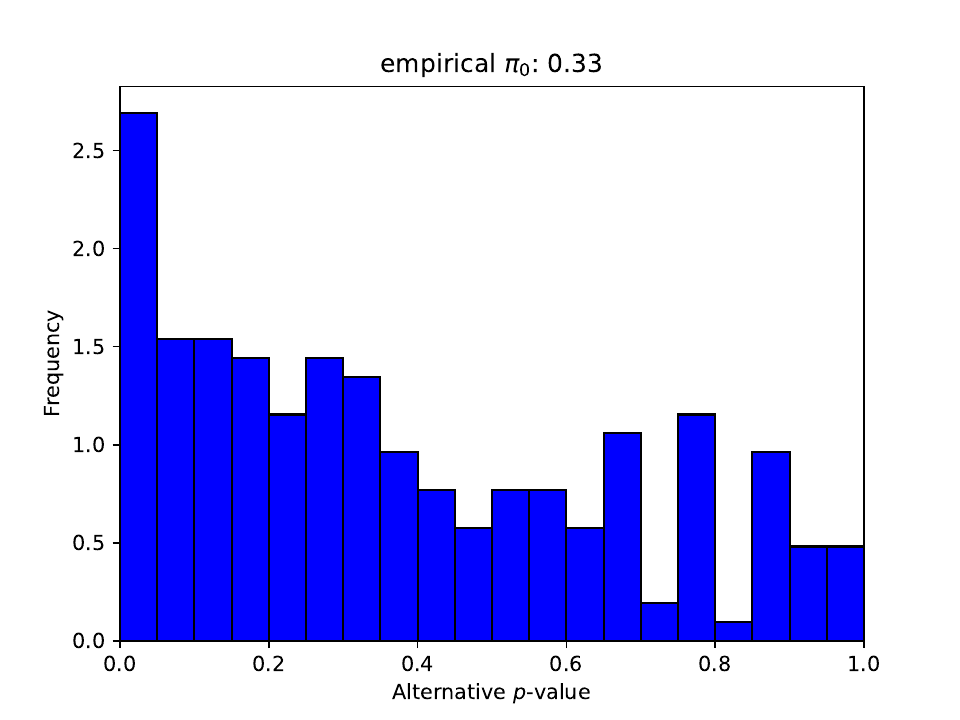}
    \includegraphics[width=0.44\linewidth, trim=0.2cm 0.3cm 0.2cm 0.2cm, clip]{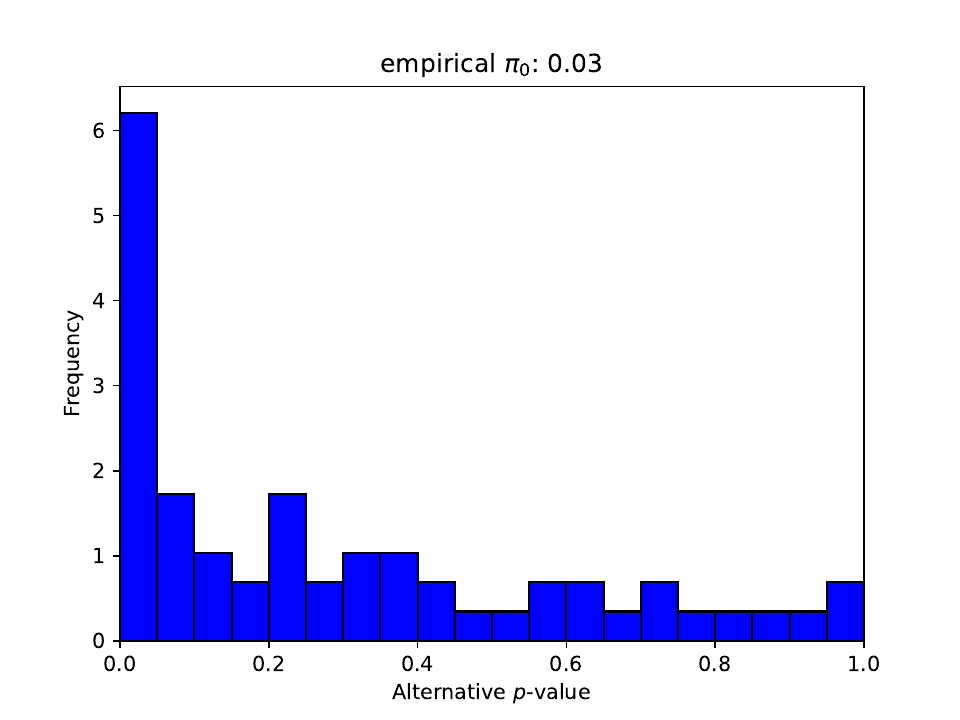}
		\caption{The left figure shows the empirical $\pi_0$ and the empirical histogram corresponding to $\fPAlt$ when $\norm{\signal_{\tuple}}_2\in[0.2, 0.35)$. The right figure shows these quantities when $\norm{\signal_{\tuple}}_2\in[0.35, 0.5)$. }
		\label{fig:p_dist_illustrate}
	\end{subfigure}
	\caption{Illustration of $\signal_{\tuple}$ in \cref{exam:moving_trans} and how the proportion of null hypotheses and the empirical distribution of $p$-values from alternatives vary with $\signal_{\tuple}$.}
	\label{fig:inhomo_illus}
\end{figure}

Consider a radio signal emitted by multiple transmitters in a 2-dimensional (2D) region. It is monitored by a sensor network \(\graph\). We first introduce the signal propagation model from a single transmitter. Denote the location of the transmitter as $\bc = (c_x, c_y)$. Let the signal magnitude (i.e., the absolute value of the signal), say $x_0$, be transmitted. We denote the distance between a node $\node$ (with coordinates $(v_x,v_y)$) and the transmitter as $d(\node,\bc)$. The signal value received by each node $v$ from this single transmitter, denoted by $\rVar{x}(v)$, is subject to path loss, shadow fading and fast fading \cite{CaiGia:J03,JakCox:94}: 
\begin{align}
    \rVar{x}_{\text{sf}}(v) &= Cx_0 \cdot \frac{\lambda}{4\pi d(v,\bc)} \cdot \exp(\rVar{s}(v_x,v_y)), \label{eq:paloss-shafa}\\
    \rVar{x}_{\text{ff}}(v) &\sim 
    \begin{cases}
        \mathrm{Rice}(\mathrm{ratio}, \rVar{x}_{\text{sf}}(v)^2) & \text{if LOS path dominates},\\
        \mathrm{Rayleigh}(\rVar{x}_{\text{sf}}(v)^2) & \text{otherwise}.
    \end{cases}\label{eq:fafa}
\end{align}

In \cref{eq:paloss-shafa}, $\rVar{s}(\cdot,\cdot)$ is a two-dimensional Gaussian process, $\lambda$ is the wavelength, and $C$ is a constant. The quantity $\rVar{x}_{\text{sf}}(v)$ denotes the signal magnitude after path loss and shadow fading.   In \cref{eq:fafa}, we present two typical types of fast fading. In cases where there is a dominant \gls{LOS} path, the received signal magnitude follows a Rician distribution where $\mathrm{ratio}$ represents the ratio between signal power from the \gls{LOS} and the remaining multipath. Another parameter equals to the signal power $\rVar{x}_{\text{sf}}(v)^2$. On the other hand, when there is no single dominant signal path, the received signal magnitude follows a Rayleigh distribution where the parameter is the signal power.

When there are multiple moving transmitters, we denote the signal that node $v$ receives from transmitter $i$ at time $t$ as $\rVar{x}_{\text{ff}}(v,t,i)$. We assume that node $v$ receives the strongest signal. i.e., 
\begin{align}\label{eq:max_sig}
    \abs{\rVar{x}(v, t)} = \max_{i} \abs{\rVar{x}_{\text{ff}}(v,t,i)}.
\end{align}
Finally, the signal energy measured by the node $v$ at instance $t$ is
\begin{align*}
\rVar{y}(v, t) := \rVar{x}(v, t) + \rVar{e},
\end{align*}
where $\rVar{e}$ is \gls{AWGN}. 


Suppose that we wish to test whether the received signal energy $\rVar{x}(v,t)$ is above the noise floor $\tau_0$, i.e., 
\begin{align*}
\HNul &: \abs{\rVar{x}(v,t)}\leq \tau_0,\\ 
\HAlt &: \abs{\rVar{x}(v,t)} > \tau_0,
\end{align*}
then the summary statistics at $(v,t)$ is $\p=\bbP(\rVar{Y}^2\geq \rVar{y}(v,t)^2\given \HNul)$. Here, $\rVar{Y}^2$ is a chi-squared random variable whose degree of freedom is $1$.


In this propagation model, the vector $(\rVar{x}_{\text{sf}}(v,t,i))_{i\geq 1}$ corresponds to the stochastic process $\signal_{\tuple}$. Given this $\signal_{\tuple}$, the distribution of $\rVar{x}(v,t)$ is determined by \cref{eq:fafa,eq:max_sig}. Besides, According to the nature of fast fading, $\rVar{x}(v,t)$ for different $(v,t)$ are independent given $\signal$. Hence, $\HTrue$ are independent with inhomogeneous probabilities on $\jointDom$ determined by $\signal$. When $\HTrue=\HNul$, the amplitude of $\rVar{x}(v,t)$ is small if $\tau_0$ is chosen to be small, and it can be approximated by $0$. In this case, the distribution of $\p$ under $\HNul$ is the uniform distribution on the interval $[0,1]$, $\dist{Unif}[0,1]$. When $\HTrue=\HAlt$, the distribution of $\p$ under $\HAlt$ only relies on the value of $\signal_{\tuple}$. This coincides with the hierarchical Bayesian model. We illustrate the dependence of the null proportion and the alternative $p$-value distribution on $\signal_{\tuple}$ in \cref{fig:inhomo_illus}. 
\end{Example}


In this paper, we assume that $\pi_0\circ\signalNodeEpoch$ and $\fPAltNodeEpoch$ are identifiable from $\fPNodeEpoch$ using \cref{eq:pdf_p|gamvt}. This can be guaranteed by the following assumptions:
\begin{Assumption}\label{asp:identify}\
\begin{enumerate}[(i)]
\item \label[condition]{cond:monot_f_0} $\fPNulNodeEpoch$ is non-decreasing \gls{wrt} $p\in[0,1]$ for all $\tuple\in 
\jointDom$.
\item \label[condition]{cond:monot_f_1}$\fPAlt[p;\zeta]$ is non-increasing \gls{wrt} $p\in(0,1]$ for all $\zeta\in\calZ$.
\item \label[condition]{cond:f_1=0}$\min\limits_{p\in[0,1]}\fPAlt[p;\zeta]=0$ for all $\zeta\in\calZ$, i.e., with \cref{cond:monot_f_1}, $\fPAlt[1;\zeta]=0$ for all $\zeta\in\calZ$.
\item \label[condition]{cond:conts_pdf}$\fPNulNodeEpoch$ is continuous on $[0,1]\times\jointDom$, and $\fPAlt[p;\zeta]$ is continuous on $(0,1]\times\calZ$.
\end{enumerate}
\end{Assumption}

In \cref{asp:identify}, \cref{cond:monot_f_0,cond:monot_f_1} indicate that the $p$-value is more likely to be small under the alternative and more likely to be large under the null hypothesis. 
These conditions are commonly assumed in the \gls{MHT} literature (cf.\ \cite[Theorem 2]{LeiFit:J18} and \cite[Section 2.1]{CaoChenZhang:J22}).
They ensure that $\pi_0\circ\signalNodeEpoch$ and $\fPAltNodeEpoch$ are identifiable from $\fPNodeEpoch$, since according to \cref{eq:pdf_p|gamvt} and \cref{cond:monot_f_0,cond:monot_f_1,cond:f_1=0}, for each $\tuple\in\jointDom$, we have  
\begin{align}
\pi_0\circ\signalNodeEpoch &= \frac{\fPNodeEpoch[1]}{\fPNulNodeEpoch[1]},\label{eq:pi0_inv}\\
\fPAltNodeEpoch &= \ofrac{1-\pi_0\circ\signalNodeEpoch} \Big(\fPNodeEpoch - \pi_0\circ\signalNodeEpoch \fPNulNodeEpoch) \Big).\label{eq:f1_inv}
\end{align}

The \gls{lfdr} is defined as 
\begin{align}\label{lfdr}
\lfdr(p; \signal_{\tuple}) = \frac{\pi_0\circ\signalNodeEpoch \fPNulNodeEpoch}{\fPNodeEpoch}.
\end{align}
As in \cref{eq:lfdrnonlocal}, this is the conditional probability $\bbP(\HTrue = \HNul\given \signalNodeEpoch,\p = p)$. The difference here is that we are now using both localized \glspl{pdf} and null proportion in computing this conditional probability. In \cref{subsect:inference}, we estimate $\fPNodeEpoch$, and thus $\pi_0\circ\signalNodeEpoch$ and $\fPAltNodeEpoch$.

\section{Asymptotic FDR Control Approach}\label{sect:MHT-GGSP}

In this section, we first explain the detection strategy when the true values of \gls{lfdr} are known. Then, we propose a method to estimate the \gls{lfdr} and present a theoretical guarantee of \gls{FDR} control by using the estimated \gls{lfdr} for detection.

\subsection{Oracle solution}

In this subsection, we show that, if $\pi_0\circ\signalNodeEpoch$ and $\fPAltNodeEpoch$ are known for $\tuple\in\samReg$, then the optimal detection strategy is thresholding the \gls{lfdr}. To see this, we introduce two complementary definitions, the marginal \gls{FDR} and the marginal power, defined as 
\begin{align}
\mFDR(h;\signal,\samReg) &= \frac{\bbE[\abs{\altRegEst\bigcap\nulReg} \given \signal(\sfS)]}{\bbE[\abs{\altRegEst}\given\signal(\sfS)]},\label{eq:def_mfdr}\\
\mpow(h;\signal,\samReg) &= \frac{\bbE[\abs{\altRegEst\bigcap \altReg} \given \signal(\sfS)]}{\bbE[\abs{\altReg}\given\signal(\sfS)]}.\label{eq:def_mpow}
\end{align}

When the hypothesis tests are conducted separately, the null hypotheses are rejected if the $p$-value is lower than the pre-determined significance level. In this paper, for each individual test $\HTrue$, we use the following thresholding rule:
\begin{align}\label{eq:thres_rule}
h(\pVec)\tuple = 
\begin{cases}
\HAlt & \text{if}\ \p \leq s_{\tuple}, \\
\HNul & \text{otherwise}.
\end{cases}
\end{align}
The difference is that, in \cref{eq:thres_rule}, the threshold of $p$-value is no longer the significance level but chosen by certain criteria that we will soon introduce. Under this rule, designing the detection strategy $h$ amounts to designing $\boldsymbol{s}:= (s_{\tuple})_{\tuple\in\samReg}\in[0,1]^\numTests$. Since the rejection rule $h$ is fully determined by $\boldsymbol{s}$, we alternatively write $\mFDR(h; \signal,\samReg)$ and $\mpow(h; \signal,\samReg)$ as $\mFDR(\boldsymbol{s}; \signal,\samReg)$ and $\mpow(\boldsymbol{s}; \signal,\samReg)$ under \cref{eq:thres_rule}. We thus consider the following problem:
\begin{align}\label{eq:opt_mpow}
\begin{aligned}
\max_{\boldsymbol{s}\in[0,1]^\numTests} \mpow(\rVec{s}; \signal,\samReg) \\
\ST \mFDR(\boldsymbol{s}; \signal,\samReg) \leq \alpha.
\end{aligned}
\end{align}
In \cref{thm:opt_thres}, we show that the optimal solution to problem \cref{eq:opt_mpow} is a level surface of \gls{lfdr}. This theorem slightly generalizes \cite[Theorem 2]{LeiFit:J18}. In \cref{subsect:inference}, we show that by solving an approximate version of \cref{eq:opt_mpow}, we can obtain an $h$ that yields asymptotic \gls{FDR} control by a pre-determined \gls{FDR} level.

\begin{Theorem}\label{thm:opt_thres}
Suppose \cref{asp:identify} holds and there exists $p\in(0,1)$ and $\tuple\in\samReg$ such that $\lfdr(p;\signal_{\tuple})< \alpha$.
Then the optimal solution $\boldsymbol{s}^*$ to problem \cref{eq:opt_mpow} satisfies
\begin{align*}
\lfdr(s^*_{\tuple};\signal_{\tuple}) = \eta,\ m=1,\dots,M,
\end{align*}
where $\eta$ is a constant independent of $m$.
\end{Theorem}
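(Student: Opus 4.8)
The plan is to solve the constrained program \cref{eq:opt_mpow} through its \gls{KKT} conditions, after first rewriting the conditional objective and constraint in closed form and exposing a hidden convexity. First I would compute the relevant conditional expectations. Conditioned on $\ttgamma(\sfS)$ the $M$ tests are independent, and under the thresholding rule \cref{eq:thres_rule} test $m$ contributes a false rejection with conditional probability $\pi_{0,m}F_0(s_m;(\sfv_m,\sft_m))$ and a true rejection with conditional probability $(1-\pi_{0,m})F_1(s_m;\ttgamma(\sfv_m,\sft_m))$, where $\pi_{0,m}:=\pi_0\circ\ttgamma(\sfv_m,\sft_m)$. Setting
\begin{align*}
V(\bs) &:= \bbE[\abs{\widehat{\sfJ}_{1,M}\cap\sfJ_{0,M}}\given\ttgamma(\sfS)] = \sum_{m=1}^{M}\pi_{0,m}F_0(s_m;(\sfv_m,\sft_m)),\\
S(\bs) &:= \bbE[\abs{\widehat{\sfJ}_{1,M}\cap\sfJ_{1,M}}\given\ttgamma(\sfS)] = \sum_{m=1}^{M}(1-\pi_{0,m})F_1(s_m;\ttgamma(\sfv_m,\sft_m)),
\end{align*}
the objective becomes $\mpow(\bs;\ttgamma,\sfS)=S(\bs)/\sum_m(1-\pi_{0,m})$ with a $\bs$-free denominator, while $\mFDR(\bs;\ttgamma,\sfS)=V(\bs)/(V(\bs)+S(\bs))\leq\alpha$ rearranges into the single inequality $(1-\alpha)V(\bs)-\alpha S(\bs)\leq0$. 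Thus \cref{eq:opt_mpow} is equivalent to maximizing $S(\bs)$ over $\bs\in[0,1]^M$ subject to this inequality.

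The crux is then to observe that this is a \emph{convex} program. By \cref{cond:monot_f_0} each $F_0(\cdot;(\sfv_m,\sft_m))$ is convex and by \cref{cond:monot_f_1} each $F_1(\cdot;\ttgamma(\sfv_m,\sft_m))$ is concave, so $S$ is concave and $(1-\alpha)V-\alpha S$ is convex; we are therefore maximizing a concave function over a convex feasible set. I would next verify Slater's condition using the hypothesis: \cref{cond:monot_f_0,cond:monot_f_1} make $\lfdr(\cdot;\ttgamma(v,t))$ non-decreasing, and a short computation writing $\pi_{0,m}F_0(p)=\int_0^p\lfdr(u;\ttgamma(v,t))f_{\mathrm{mix}}(u\mid\ttgamma(v,t))\ud u$ shows that the cumulative ratio $\pi_{0,m}F_0(p)/(\pi_{0,m}F_0(p)+(1-\pi_{0,m})F_1(p))$ is an $f_{\mathrm{mix}}$-weighted average of $\lfdr(u;\ttgamma(v,t))$ over $u\in(0,p]$, hence at most $\lfdr(p;\ttgamma(v,t))<\alpha$. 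So thresholding only the distinguished test at $s=p$ is strictly feasible, Slater holds, and the KKT conditions become necessary and sufficient for global optimality.

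Finally I would write the Lagrangian $L(\bs,\mu)=(1+\mu\alpha)S(\bs)-\mu(1-\alpha)V(\bs)$ with multiplier $\mu\geq0$, which separates across coordinates. In the non-degenerate regime singled out by the hypothesis the constraint binds, so $\mu>0$, and for any coordinate with interior optimum $s_m^*\in(0,1)$ stationarity $\partial L/\partial s_m=0$ reads
\begin{align*}
(1+\mu\alpha)(1-\pi_{0,m})f_1(s_m^*;\ttgamma(\sfv_m,\sft_m))=\mu(1-\alpha)\pi_{0,m}f_0(s_m^*;(\sfv_m,\sft_m)),
\end{align*}
so the ratio $\pi_{0,m}f_0(s_m^*;\cdot)\big/\big[(1-\pi_{0,m})f_1(s_m^*;\cdot)\big]$ equals the $m$-independent constant $(1+\mu\alpha)/[\mu(1-\alpha)]$. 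Dividing numerator and denominator of \cref{lfdr} by $(1-\pi_{0,m})f_1(s_m^*;\cdot)$ then gives $\lfdr(s_m^*;\ttgamma(\sfv_m,\sft_m))=\eta$ for a single constant $\eta$, which is the claim. For boundary coordinates I would invoke complementary slackness together with monotonicity of $\lfdr$: a lower-boundary coordinate $s_m^*=0$ forces $\lfdr(0;\cdot)\geq\eta$ (consistent with never rejecting that test), and \cref{cond:f_1=0}, through $f_1(1;\cdot)=0$, rules out $s_m^*=1$, so the optimal rule still coincides with a level surface $\set{\lfdr\leq\eta}$.

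The main obstacle is the convexity-plus-Slater argument: recognizing that the monotonicity hypotheses in \cref{asp:identify} secretly convexify the problem, and verifying strict feasibility from the stated lfdr condition, is precisely what upgrades first-order stationarity from merely necessary to globally sufficient. Once that is in hand the constant-lfdr conclusion is a one-line manipulation of \cref{lfdr}. The secondary delicacy is the boundary bookkeeping, where the literal equality $\lfdr(s_m^*;\cdot)=\eta$ must be read as membership of the rejection region in a level set of $\lfdr$ rather than an interior equality holding separately for every $m$.
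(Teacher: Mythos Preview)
Your proposal follows essentially the same route as the paper: rewrite \cref{eq:opt_mpow} as a convex program via the monotonicity in \cref{asp:identify}, verify Slater, and read off the constant-$\lfdr$ conclusion from the KKT stationarity condition. The one substantive variation is in the Slater step: the paper builds an explicit interior point using the mean value theorem and an $\epsilon$-construction, whereas your observation that the cumulative ratio $\pi_{0,m}F_0(p)/\bigl(\pi_{0,m}F_0(p)+(1-\pi_{0,m})F_1(p)\bigr)$ is an $f_{\mathrm{mix}}$-weighted average of $\lfdr$ over $(0,p]$, hence at most $\lfdr(p)<\alpha$, is more direct. One small caveat: your Slater witness $(p,0,\ldots,0)$ sits on the boundary of $[0,1]^M$, so to match the interior requirement you must either perturb the remaining coordinates by continuity of $g$ or invoke the refined Slater condition with the box encoded as affine constraints. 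Your boundary bookkeeping is also more explicit than the paper's, which simply sets $\nabla L=\bzero$ without singling out the box; note however that $\mu>0$ is not guaranteed by the hypothesis alone (if $\sum_m\pi_{0,m}\leq\alpha M$ then $\bs^*=\bone$ is feasible with $\mu=0$), though the conclusion still holds with $\eta=1$ in that degenerate case since $\lfdr(1;\cdot)\equiv 1$.
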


Different from \cite[Theorem 2]{LeiFit:J18} and \cite[Proposition 2.1]{CaoChenZhang:J22}, in \cref{thm:opt_thres}, we do not require that $\fPAltNodeEpoch$ is continuous on the \emph{closed interval} $[0,1]$. This means that we allow for unbounded $f_1$ such as the Beta distribution. \cref{thm:opt_thres} implies that the optimal threshold for $\pVec$ corresponds to a level set of \gls{lfdr}. Therefore, the rejection rule becomes  
\begin{align}\label{eq:orac_thres}
h(\pVec)\tuple=
\begin{cases}
\HAlt & \lfdr(\p; \signal_{\tuple}) \leq \eta, \\
\HNul & \lfdr(\p; \signal_{\tuple}) > \eta.
\end{cases}
\end{align}
Note that in practice, we usually do not have access to the ground truth $\pi_0\circ\signalNodeEpoch$ and $\fPAltNodeEpoch$. Therefore, we call \cref{eq:orac_thres} an \emph{oracle} rejection rule. In the next section, we replace \cref{eq:orac_thres} by its estimate determined from samples. Here, we first explain the choice of $\eta$ under this oracle rule, and the sample-based version will then easily follow. To clarify the choice of $\eta$, we define the following quantities as in \cite{CaoChenZhang:J22}:
\begin{align}
\rVar{d}_{1,\numTests}(\eta) &:= \ofrac{\numTests}\sum_{\tuple\in\samReg}\lfdr(\p;\signalNodeEpoch)\idc{\lfdr(\p;\signalNodeEpoch)\leq \eta},\label{eq:def:D_1M}\\
\rVar{d}'_{1,\numTests}(\eta) &:=\ofrac{\numTests}\sum_{\tuple\in\samReg} \idc{\HTrue=\HNul}\idc{\lfdr(\p;\signalNodeEpoch)\leq \eta}\label{eq:def:V_M}\\
\rVar{d}_{0,\numTests}(\eta) &:= \ofrac{\numTests}\sum_{\tuple
\in\samReg}\idc{\lfdr(\p;\signalNodeEpoch)\leq \eta}.\label{eq:def:D_0M}
\end{align}

By conditioning on $\p$ and using the definition $\bbP(\HTrue=\HNul\given \p,\signalNodeEpoch) = \lfdr(\p;\signal_{\tuple})$, we see that $\rVar{d}_{1,\numTests}(\eta)$ estimates the proportion of false rejections among all tests:
\begin{align*}
&\bbE[\idc{\HTrue=\HNul}\idc{\lfdr(\p;\signalNodeEpoch)\leq \eta}\given\signal,\samReg]\\
&= \bbE[\lfdr(\p;\signalNodeEpoch)\idc{\lfdr(\p;\signalNodeEpoch)\leq \eta}\given\signal,\samReg].
\end{align*}
Therefore, by taking expectation over $\samReg$, we see that $\bbE[\rVar{d}_{1,\numTests}(\eta)\given\signal]=\bbE[\abs{\altRegEst\bigcap\nulReg}\given\signal]=\bbE[\rVar{d}'_{1,\numTests}(\eta)\given\signal]$. Besides, it can be shown that $\bbE[\rVar{d}_{0,\numTests}(\eta)\given\signal]=\bbE[\abs{\altRegEst}\given\signal]$. Therefore, the quantity
\begin{align*}
\rVar{r}_\numTests(\eta):=\frac{\rVar{d}_{1,\numTests}(\eta)}{\rVar{d}_{0,\numTests}(\eta)}
\end{align*}
approximates $\mFDR(h;\signal,\samReg)$ in \cref{eq:def_mfdr}. On the other hand, note that $\mpow(h;\signal,\samReg)$ in \cref{eq:def_mpow} increases with $\eta$. Therefore, we choose the optimal $\eta$ under the oracle rule \cref{eq:orac_thres} in the following way:
\begin{align}\label{eq:orac_sol}
\rVar{\eta}_\numTests:=\sup\set{\eta\given \rVar{r}_\numTests(\eta)\leq\alpha}.
\end{align}

\subsection{Joint density estimation and testing procedure}\label{subsect:inference}

In this subsection, we propose a method to estimate the unknown densities $\fPNodeEpoch$ (or equivalently $\fP[p\mid\signal\tuple]$) for $\tuple\in\samReg$, and then solve the \gls{MHT} problem with these estimates. Estimating the unknown densities is not an easy task in general since the number of unknown densities is the same as the number of $p$-values. In this paper, we take advantage of the \gls{GGSP} model to largely reduce the number of unknown parameters to a constant, so that the \gls{MLE} can be calculated and thus, the density estimation is consistent. This consistency then ensures asymptotic \gls{FDR} control. 

To ensure the consistency of the \gls{MLE}, we make the following assumptions:
\begin{Assumption}\label{asp:bandlimit}\
\begin{enumerate}[(i)]
\item The signal $\signal$ is bandlimited, i.e., for all $(v,t)\in\jointDom$,
\begin{align}\label{eq:bl_para}
\signal(v,t) = \sum_{k_1=1}^{K_1}\sum_{k_2=1}^{K_2} \upxi_{k_1,k_2} \cdot\phi_{k_1}(v)\psi_{k_2}(t),
\end{align}
where $K_1, K_2$ are known positive integers, $\set{\phi_k(v)\given k=1,\dots,N}$ is the graph Fourier basis and $\set{\psi_k(t)\given k\in\bbN}$ is a set of orthonormal basis of $L^2(\calT)$. The coefficient matrix $\bXi:=(\upxi_{k_1,k_2})\in\Real^{K_1\times K_2}$ is a random matrix and takes values in a convex and compact set $\calK\subset\Real^{K_1\times K_2}$. Under this assumption, we may write $\signal(v,t)$ as $\signal(v,t;\bXi)$ to highlight the relationship \cref{eq:bl_para}.

\item \label[condition]{cond:cont_basis} The function $\phi_{k_1}(v)\psi_{k_2}(t)$ is continuous in $(v,t)\in\jointDom$ for all $1\leq k_1\leq K_1,1\leq k_2\leq K_2$.

\item \label[condition]{cond:identi_pvt} For any distinct $\mathit{\Xi}\neq\mathit{\Xi'}$, $\fP[p\mid\signal(v,t;\mathit{\Xi})]\neq \fP[p\mid\signal(v,t;\mathit{\Xi'})]$ on a set in $(0,1]\times\jointDom$ with positive measure. We denote the distribution and expectation over $\pRVNodeEpochRV,\tupleRV$ given $\bXi=\mathit{\Xi}$ by $\bbP_{\mathit{\Xi}}$ and $\bbE_{\mathit{\Xi}}$.

\item For all $\mathit{\Xi}$,
\begin{align}
\bbE_{\mathit{\Xi}}[\ln\norm*{\dfrac{\fP[\pRVNodeEpochRV\mid\signal(\nodeRV,\epochRV;\mathit{\Xi'})]}{\fP[\pRVNodeEpochRV\mid\signal(\nodeRV,\epochRV;\mathit{\Xi})]}}_\infty]<\infty,
\end{align}
where the sup norm is taken \gls{wrt} $\mathit{\Xi'}$.
\end{enumerate}
\end{Assumption}

Under \cref{asp:bandlimit}, the number of variables to be estimated is $K_1K_2$, which is independent of $\numTests$. The \gls{MLE} can thus be calculated by maximizing the log-likelihood function:
\begin{align*}
\max_{\mathit{\Xi}\in\calK}\sum_{\tuple\in\samReg}l(\mathit{\Xi};\p,\tuple)
\end{align*}
where 
\begin{align*}
l(\mathit{\Xi};\p,\tuple) = \ln \fP[\p\mid\signal(v,t;\mathit{\Xi})]  + \ln\rho\tuple.
\end{align*}
Note that $\rho\tuple$ does not depend on $\mathit{\Xi}$, so the \gls{MLE} $\widehat{\bXi}$ can be obtained by 
\begin{align}\label{eq:MLE_prob}
\argmax_{\mathit{\Xi}\in\calK} &\sum_{\tuple\in\samReg} \ln \fP[\p\mid\signal(v,t;\mathit{\Xi})].
\end{align}
Recall that $\calK$ is a convex and compact set. In practice, since the ground truth is not known in advance, we recommend defining $\calK$ as a sufficiently large set, e.g.,  $\calK = \set{\mathit{\Xi}\given \mathit{\Xi}_{k_1,k_2}\in [-c_{\max}, c_{\max}]}$ where $c_{\max}$ is the largest finite real number recognized by the computer. By the consistency of \glspl{MLE}, $\widehat{\bXi}$ converges to $\bXi$ in probability as $\numTests\to\infty$, hence 
\begin{align*}
\widehat{\signal}(v,t) = \sum_{k_1=1}^{K_1}\sum_{k_2=1}^{K_2} \widehat{\upxi}_{k_1,k_2} \cdot\phi_{k_1}(v)\psi_{k_2}(t)
\end{align*}
converges to $\signal(v,t)$ in probability. We formally state this in the following theorem:

\begin{Theorem}\label{thm:unif_conv_para}
Under \cref{asp:bandlimit}, we have $\widehat{\bXi}\convp\bXi$ as the number of samples $\numTests\to\infty$ and
\begin{align}\label{eq:det_consis_mle}
\sup_{(v,t)\in\jointDom}\abs{\widehat{\signal}(v,t) - \signal(v,t)}\convp0,
\end{align}
under the probability measure conditioned on $\bXi$.
\end{Theorem}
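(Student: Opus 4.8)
The plan is to derive \cref{thm:unif_conv_para} from the classical Wald-type consistency theory for maximum likelihood estimation over a compact parameter set, and then to transfer the coefficient convergence to uniform convergence of the reconstructed signal. Because $\bttXi$ is itself a random matrix, I would first condition on the event $\set{\bttXi=\bXi_0}$ for an arbitrary fixed $\bXi_0\in\calK$. Under this conditioning the pairs $(\sfp_m,(\sfv_m,\sft_m))$ are \gls{iid} draws from $\bbP_{\bXi_0}$, so the problem reduces to showing that the conditional \gls{MLE} defined by \cref{eq:MLE_prob} is consistent for $\bXi_0$. Since the conditional probabilities $\bbP(\norm*{\widehat{\bttXi}-\bXi_0}>\epsilon \mid \bttXi=\bXi_0)$ are bounded by $1$ and, as argued below, tend to $0$ for every $\bXi_0$, the bounded convergence theorem applied over the law of $\bttXi$ on $\calK$ delivers the unconditional statement $\widehat{\bttXi}\convp\bttXi$.

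For the conditional consistency I would study the normalized log-likelihood $L_M(\bXi):=\ofrac{M}\sum_{m=1}^M\ln f_{\mathrm{mix}}(\sfp_m\mid\gamma(\sfv_m,\sft_m;\bXi))$ together with its population limit $L(\bXi):=\E_{\bXi_0}[\ln f_{\mathrm{mix}}(\sfp_1\mid\gamma(\sfv_1,\sft_1;\bXi))]$. Two ingredients are required. The first is that $L$ has a \emph{unique maximizer} at $\bXi_0$: by the information inequality, $L(\bXi_0)-L(\bXi)$ equals the $\rho$-averaged Kullback--Leibler divergence between the mixture densities at $\bXi_0$ and at $\bXi$, which is nonnegative and vanishes precisely when the two densities agree almost everywhere on $(0,1]\times\calJ$; \cref{cond:identi_pvt} rules this out for $\bXi\neq\bXi_0$, so the maximizer is unique. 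The second is a \emph{uniform law of large numbers}, $\sup_{\bXi\in\calK}\abs{L_M(\bXi)-L(\bXi)}\convp0$, which follows from the continuity of $\bXi\mapsto\ln f_{\mathrm{mix}}(p\mid\gamma(v,t;\bXi))$ (a consequence of the linearity of $\gamma(v,t;\,\cdot\,)$ and the continuity assumptions on $\pi_0$, $f_0$ and $f_1$; cf.\ \cref{asp:identify}), the compactness of $\calK$, and the integrable envelope furnished by the finiteness of $\E_{\bXi}[\ln\norm*{f_{\mathrm{mix}}(\sfp_m\mid\gamma(\sfv_m,\sft_m;\bXi'))/f_{\mathrm{mix}}(\sfp_m\mid\gamma(\sfv_m,\sft_m;\bXi))}_\infty]$ assumed in \cref{asp:bandlimit}; this last quantity is exactly what dominates the fluctuation of the summands uniformly in $\bXi$.

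Given these two facts, the convergence of the argmax is standard: since $\widehat{\bttXi}$ maximizes $L_M$, the uniform bound yields $L(\widehat{\bttXi})\geq L_M(\widehat{\bttXi})-o_p(1)\geq L_M(\bXi_0)-o_p(1)\geq L(\bXi_0)-o_p(1)$, and because $L$ is continuous with a well-separated unique maximizer on the compact set $\calK$, this forces $\widehat{\bttXi}$ into any prescribed neighborhood of $\bXi_0$ with probability approaching one, establishing $\widehat{\bttXi}\convp\bttXi$. To conclude \cref{eq:det_consis_mle}, I would use the linear-in-$\bttXi$ structure of the reconstruction to bound
\begin{align*}
\sup_{(v,t)\in\calJ}\abs{\widehat{\ttgamma}(v,t)-\ttgamma(v,t)} \leq \sum_{k_1=1}^{K_1}\sum_{k_2=1}^{K_2}\abs{\widehat{\ttxi}_{k_1,k_2}-\ttxi_{k_1,k_2}}\;\sup_{(v,t)\in\calJ}\abs{\phi_{k_1}(v)\psi_{k_2}(t)},
\end{align*}
where each supremum is finite because $\phi_{k_1}\psi_{k_2}$ is continuous on $\calJ$ by \cref{cond:cont_basis} and the $\phi_{k_1}$ live on a finite graph; as all norms on $\Real^{K_1\times K_2}$ are equivalent and $\widehat{\bttXi}\convp\bttXi$, the right-hand side tends to $0$ in probability, and the continuous mapping theorem finishes the proof. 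I expect the uniform law of large numbers to be the main obstacle, since it is there that the delicate integrability hypothesis of \cref{asp:bandlimit} must be invoked to control the log-likelihood near the boundary $p\to1$, where $f_1$ vanishes by \cref{cond:f_1=0} and the density ratio can blow up.
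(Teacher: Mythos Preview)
Your proposal is correct and follows essentially the same route as the paper: condition on $\bttXi$, invoke Wald-type consistency of the \gls{MLE} over the compact set $\calK$ using the identifiability and integrability conditions of \cref{asp:bandlimit}, then pass to the uniform statement via the linear-in-$\bttXi$ bound with the basis functions bounded on the compact $\calJ$, and finally remove the conditioning by dominated convergence. The only difference is presentational: the paper outsources the conditional consistency step to a textbook result (Keener, Theorem~9.9) and the conditioning device to \cref{lem:det_to_rand}, whereas you unpack the Wald argument explicitly (unique maximizer via the information inequality plus \cref{cond:identi_pvt}, uniform law of large numbers via the envelope in \cref{asp:bandlimit}(iv), then argmax convergence). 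One minor remark: the envelope in \cref{asp:bandlimit}(iv) is one-sided (a bound on $\sup_{\bXi'}\ln f_{\mathrm{mix}}(\cdot\mid\bXi')/f_{\mathrm{mix}}(\cdot\mid\bXi)$), which is precisely what Wald's argument needs rather than a full two-sided envelope for a uniform LLN in the usual sense; your argmax chain $L(\widehat{\bttXi})\geq L(\bXi_0)-o_p(1)$ only uses this one-sided control, so the logic is sound even if the phrase ``uniform law of large numbers'' is slightly stronger than what is actually required or delivered. Also, the potential blow-up of the log-ratio is near $p\to0$ (where $f_1$ may be unbounded) rather than $p\to1$ (where $f_{\mathrm{mix}}$ stays bounded away from zero by the $\pi_0 f_0$ term), but this does not affect the argument.
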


\Cref{thm:unif_conv_para} indicates that the \gls{MLE} of the parameter $\signal(v,t)$ uniformly converges on $\jointDom$. As we will see in \cref{thm:limit_FDR}, this property ensures the asymptotic control of \gls{FDR}, hence justifies the usage of \gls{MLE} in the inference of \glspl{pdf}.

In practice, we need to choose the hyperparameters $K_1$ and $K_2$ to balance the goodness of fit and model complexity. Let $l_{K_1,K_2}^*$ be the optimal value of \cref{eq:MLE_prob}. We propose to use the \gls{BIC} for choosing these parameters:
\begin{align*}
\mathrm{BIC} = K_1K_2\ln \numTests - 2l_{K_1,K_2}^*.
\end{align*}
Using \cref{thm:unif_conv_para}, we can estimate $\fP[p\mid\signal(v,t;\bXi)]$ by $\fP[p\mid\signal(v,t;\widehat{\bXi})]$. Using the estimation result, we infer $\pi_0\circ\signal(v,t)$ and $\fPAlt[p;\signal\tuple]$ according to \cref{eq:pi0_inv,eq:f1_inv}:
\begin{align}
\pi_0\circ\signal(v,t;\widehat{\bXi}) &= \frac{\fP[1\mid\signal(v,t;\widehat{\bXi})]}{\fPNulNodeEpoch[1]},\label{eq:pi0_est}\\
\fPAlt[p;\signal(v,t;\widehat{\bXi})] &= \ofrac{1-\pi_0\circ\signal(v,t;\widehat{\bXi})}(\fP[p\mid\signal(v,t;\widehat{\bXi})] - \pi_0\circ\signal(v,t;\widehat{\bXi})\fPNulNodeEpoch)).\label{eq:f1_est}    
\end{align}

In previous works addressing the \gls{MHT} problem on graphs, the optimization problems often involve a number of parameters equal to the number of $p$-values (cf.\ \cite[(5)]{LiBar:J18}, \cite[(6)]{TanKoyPolSco:J18}), making it challenging to solve these high-dimensional optimization problems. In contrast, our method only requires $K_1K_2$ parameters, which do not increase with $\numTests$ and are typically much smaller than $\numTests$.

Once we have estimated $\signal(v,t)$,  we can then estimate \gls{lfdr}, $\rVar{d}_{1,\numTests}(\eta)$, $\rVar{d}_{0,\numTests}(\eta)$ and $\rVar{r}_\numTests(\eta)$:
\begin{align*}
\lfdr(p;\signal(v,t;\widehat{\bXi})) &:= \frac{\pi_0\circ\signal(v,t;\widehat{\bXi})\fPNulNodeEpoch}{\fP[p;\signal(v,t;\widehat{\bXi})]},\\
\widehat{\rVar{d}}_{1,\numTests}(\eta) &:= \ofrac{\numTests}\sum_{\tuple\in\samReg}\lfdr(\p;\signal(v,t;\widehat{\bXi}))\idc{\lfdr(\p;\signal(v,t;\widehat{\bXi}))\leq \eta},\\
\widehat{\rVar{d}'}_{1,\numTests}(\eta)&:=\ofrac{\numTests}\sum_{\tuple\in\samReg}\idc{\HTrue=\HNul}\idc{\lfdr(\p;\signal(v,t;\widehat{\bXi}))\leq \eta},\\
\widehat{\rVar{d}}_{0,\numTests}(\eta) &:= \ofrac{\numTests}\sum_{\tuple\in\samReg}\idc{\lfdr(\p;\signal(v,t;\widehat{\bXi}))\leq \eta},\\
\widehat{\rVar{r}}_\numTests(\eta)&:=\frac{\widehat{\rVar{d}}_{1,\numTests}(\eta)}{\widehat{\rVar{d}}_{0,\numTests}(\eta)}.
\end{align*}
Therefore, by \cref{eq:orac_sol}, we design the rejection threshold $s_{\tuple}$ such that
\begin{align*}
\lfdr(s_{\tuple};\signal(v,t;\widehat{\bXi})) = \widehat{\upeta}_\numTests,
\end{align*}
where
\begin{align*}
\widehat{\upeta}_\numTests:=\sup\set{\eta\given\widehat{\rVar{r}}_\numTests(\eta)\leq\alpha}.
\end{align*}

In the remainder of this paper, we write $h$ to denote the thresholding strategy \cref{eq:orac_thres} with $\eta=\widehat{\upeta}_\numTests$. We call this method MHT-GGSP. To achieve asymptotic \gls{FDR} control, we make the following regularity assumptions:
\begin{Assumption}\label{asp:regularity}
Assume that the following conditions hold:
\begin{enumerate}[(i)]
\item \label[condition]{cond:monotone_pdf}$\fPAlt[p;\zeta]$ is strictly decreasing on $p\in(0,1]$ for all $\zeta$. 
\item \label[condition]{cond:pi0_pos}$1>\pi_0\circ\signal(v,t;\mathit{\Xi})>0$ for all $(v,t)\in\jointDom$ and $\mathit{\Xi}\in\calK$.
\item \label[condition]{cond:f0_pos}$\fPNulNodeEpoch>0$ whenever $p>0$.
\item \label[condition]{cond:conts_limit_lfdr} Let $\chi(v,t;\bXi):=\lim\limits_{p\to0^+}\lfdr(p;\signal(v,t;\bXi))$. Then $\chi(v,t;\bXi)$ is always continuous on $\jointDom$. 
\item \label[condition]{cond:small_lfdr} There always exists $(v,t)\in\jointDom$ and $p>0$ such that $\lfdr(p;\signal(v,t;\bXi))<\alpha$.
\item \label[condition]{cond:conts_pd} 
Rewriting \cref{eq:pdf_p|gamvt}, with $\zeta=\signal(v,t)$, as
\begin{align*}
f'_{\rVar{P}}(p\mid\zeta,(v,t)) = \pi_0(\zeta)\fPNulNodeEpoch + (1-\pi_0(\zeta))\fPAlt[p;\zeta],
\end{align*}
we assume $\dfrac{\partial f'_{\rVar{P}}}{\partial\zeta}(p\mid\zeta,(v,t))$ is continuous on $(0,1]\times\calZ\times\jointDom$.
\end{enumerate}
\end{Assumption}

In \cref{asp:regularity}, \cref{cond:monotone_pdf} is a slightly stronger condition than \cref{cond:monot_f_1} in \cref{asp:identify}. \Cref{cond:pi0_pos} ensures that the Bayesian model in \cref{sect:stat_model} is non-trivial, i.e., condition on any $\signal$, $\HTrue$ is always random. \Cref{cond:f0_pos} states that it is always possible to observe small $p$-values under null hypotheses, which is a prevalent phenomenon in hypothesis testing. \Cref{cond:conts_limit_lfdr,cond:small_lfdr} assume good identifiability of alternative when $p$ is small enough. Combining these two conditions, we know that there exists a non-empty open set in $\jointDom$ such that for any $(v,t)$ in this open set, there exists $p$ such that $\lfdr(p;\signal(v,t;\bXi))<\alpha$. Since $\rho$ is a strictly positive measure, This implies that the probability that the assumption in \cref{thm:opt_thres} holds tends to $1$ as $\numTests$ tends to infinity. These conditions hold true, for example, when $\lim\limits_{p\to 0} \fPAltNodeEpoch = \infty$ and $\fPNulNodeEpoch$ is bounded on $[0,1]$ for all $(v,t)$, we have $\chi(v,t;\bXi)=0$ for all $(v,t)$.  \Cref{cond:conts_pd} is a smoothness assumption on $f'_{\rVar{P}}$. 

To state the result on asymptotic \gls{FDR} control, we define the following quantities:
\begin{align*}
\rVar{d}_0(\eta) &:= \bbP(\lfdr(\pRVNodeEpochRV;\signal(\nodeRV,\epochRV;\bXi))\leq\eta|\bXi),\\
\rVar{d}_1(\eta) &:= \bbE[\idc{\HTrueRV = \HNul}\idc{ \lfdr(\pRVNodeEpochRV;\signal(\nodeRV,\epochRV;\bXi))\leq\eta}\given\bXi],\\
\rVar{d}_2(\eta)&:=\rVar{d}_0(\eta) - \rVar{d}_1(\eta), \\
\rVar{r}(\eta)&:=\dfrac{\rVar{d}_1(\eta)}{\rVar{d}_0(\eta)},\\
\upkappa_0&:=\bbE[\idc{\HTrueRV = \HNul}\given\bXi].
\end{align*}
Under the aforementioned assumptions, we have the following results:
\begin{Theorem}\label{thm:limit_FDR}
Under \cref{asp:identify,asp:bandlimit,asp:regularity}, we have
\begin{align*}
\uplim_{\numTests\to\infty}\FDR_{\samReg}(h)\leq\alpha. 
\end{align*}
\end{Theorem}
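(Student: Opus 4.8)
The plan is to pass from the expected false discovery proportion to an almost-sure bound on the proportion itself, combining an oracle monotonicity inequality with the uniform consistency of the estimated \gls{lfdr} from \cref{thm:unif_conv_para}. Writing the detected sets explicitly under the rule \cref{eq:orac_thres} with $\eta=\widehat{\tteta}_M$, we have $\abs{\widehat{\sfJ}_{1,M}\cap\sfJ_{0,M}}=M\,\widehat{\sfd'}_{1,M}(\widehat{\tteta}_M)$ and $\abs{\widehat{\sfJ}_{1,M}}=M\,\widehat{\sfd}_{0,M}(\widehat{\tteta}_M)$, so that on the event of at least one rejection the false discovery proportion equals $\widehat{\sfd'}_{1,M}(\widehat{\tteta}_M)/\widehat{\sfd}_{0,M}(\widehat{\tteta}_M)$, and it is $0$ otherwise by the $\max(\cdot,1)$ convention. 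As this proportion lies in $[0,1]$, bounded convergence reduces the theorem to showing it is at most $\alpha+o_{\bbP}(1)$.

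Two facts drive the bound. First, using $\bbP(\tttheta_m=0\mid\sfp_m,\ttgamma(\sfv_m,\sft_m))=\lfdr(\sfp_m;\ttgamma(\sfv_m,\sft_m))$ and the tower property, the deterministic limit satisfies $\sfd_1(\eta)=\bbE[\lfdr_m\,\Ind\set{\lfdr_m\le\eta}\mid\bttXi]$ with $\lfdr_m:=\lfdr(\sfp_m;\ttgamma(\sfv_m,\sft_m))$, whence the oracle inequality
\[
\sfr(\eta)=\bbE[\lfdr_m\mid\lfdr_m\le\eta,\ \bttXi]\le\eta .
\]
Second, by construction of the step-up threshold, $\widehat{\sfr}_M(\widehat{\tteta}_M)\le\alpha$, i.e.\ $\widehat{\sfd}_{1,M}(\widehat{\tteta}_M)\le\alpha\,\widehat{\sfd}_{0,M}(\widehat{\tteta}_M)$. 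Adding and subtracting $\widehat{\sfd}_{1,M}$ in the numerator then gives, on the rejection event,
\[
\frac{\widehat{\sfd'}_{1,M}(\widehat{\tteta}_M)}{\widehat{\sfd}_{0,M}(\widehat{\tteta}_M)}
\le\alpha+\frac{\sup_{\eta}\bigl|\widehat{\sfd'}_{1,M}(\eta)-\widehat{\sfd}_{1,M}(\eta)\bigr|}{\widehat{\sfd}_{0,M}(\widehat{\tteta}_M)} ,
\]
so the argument reduces to a uniform bound on the numerator and a positive lower bound on the denominator.

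For the denominator, the oracle inequality lets me fix $0<\eta_0<\alpha$ with $\sfd_0(\eta_0)>0$; such $\eta_0$ exists because \cref{cond:conts_limit_lfdr,cond:small_lfdr} force $\bbP(\lfdr_m<\alpha\mid\bttXi)>0$ on a positive-$\rho$-measure region of $\calJ$. Pointwise convergence $\widehat{\sfr}_M(\eta_0)\convp\sfr(\eta_0)\le\eta_0<\alpha$ then puts $\eta_0$ in the sublevel set with probability tending to one, so $\widehat{\tteta}_M\ge\eta_0$ and $\widehat{\sfd}_{0,M}(\widehat{\tteta}_M)\ge\widehat{\sfd}_{0,M}(\eta_0)\convp\sfd_0(\eta_0)>0$. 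For the numerator, set $\widehat{\lfdr}_m:=\lfdr(\sfp_m;\gamma(\sfv_m,\sft_m;\widehat{\bttXi}))$ and $\delta_M:=\sup_{p,(v,t)}\abs{\lfdr(p;\gamma(v,t;\widehat{\bttXi}))-\lfdr(p;\gamma(v,t;\bttXi))}$, which satisfies $\delta_M\convp0$ by \cref{thm:unif_conv_para} and the continuity of \gls{lfdr} in $\ttgamma$ (\cref{cond:conts_pdf,cond:conts_pd}). Splitting
\[
\widehat{\sfd'}_{1,M}(\eta)-\widehat{\sfd}_{1,M}(\eta)
=\ofrac{M}\sum_{m=1}^M\bigl[(1-\tttheta_m)-\lfdr_m\bigr]\Ind\set{\widehat{\lfdr}_m\le\eta}
+\ofrac{M}\sum_{m=1}^M\bigl[\lfdr_m-\widehat{\lfdr}_m\bigr]\Ind\set{\widehat{\lfdr}_m\le\eta},
\]
the second sum is bounded by $\delta_M\convp0$. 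In the first sum I replace $\Ind\set{\widehat{\lfdr}_m\le\eta}$ by $\Ind\set{\lfdr_m\le\eta}$: the centred summand then has conditional mean zero given $\bttXi$, so a Glivenko--Cantelli bound over the monotone class $\{\Ind\set{\lfdr_m\le\eta}:\eta\in[0,1]\}$ makes it uniformly $o_{\bbP}(1)$, while the indicator mismatch is dominated by $\sup_{\eta}\ofrac{M}\sum_{m=1}^M\Ind\set{\abs{\lfdr_m-\eta}\le\delta_M}$.

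The main obstacle is this mismatch term: it is negligible only if the law of $\lfdr_m$ is atom-free with a uniformly continuous \gls{cdf}, so that the empirical mass of every window of width $2\delta_M$ vanishes as $\delta_M\to0$. I would establish this by combining the strict monotonicity of $f_1$ in \cref{cond:monotone_pdf}, which makes $p\mapsto\lfdr(p;\gamma(v,t;\bttXi))$ strictly monotone and hence atom-free once $\sfp_m$ has a density, with the joint continuity in \cref{cond:conts_pdf,cond:conts_limit_lfdr,cond:conts_pd} to control the mixing over $(v,t)$ and prevent atoms from accumulating. Granting this, the numerator is uniformly $o_{\bbP}(1)$ and the denominator is bounded away from zero, so the displayed inequality yields $\mathrm{FDP}\le\alpha+o_{\bbP}(1)$; bounded convergence then gives $\uplim_{M\to\infty}\FDR(h;M)\le\alpha$.
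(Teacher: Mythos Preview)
Your route is genuinely different from the paper's and, in outline, attractive: you bound the false discovery proportion directly by $\alpha$ plus a correction, using the algorithmic constraint $\widehat{\sfr}_M(\widehat{\tteta}_M)\le\alpha$, whereas the paper first invokes \cite[Theorem~3.2]{CaoChenZhang:J22} to control the \emph{oracle} ratio $\sfd'_{1,M}(\widehat{\tteta}_M)/\sfd_{0,M}(\widehat{\tteta}_M)$ and only afterwards compares hat-quantities to oracle quantities. Your approach is more self-contained, but two steps need repair.

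\textbf{The uniform error $\delta_M$.} You set $\delta_M=\sup_{p,(v,t)}\abs{\lfdr(p;\gamma(v,t;\widehat{\bttXi}))-\lfdr(p;\gamma(v,t;\bttXi))}$ and claim $\delta_M\convp0$ from \cref{thm:unif_conv_para} together with ``continuity of \gls{lfdr} in~$\ttgamma$''. The conditions you cite (\cref{cond:conts_pdf} in \cref{asp:identify} and \cref{cond:conts_pd} in \cref{asp:regularity}) only give continuity of $f_1$ and of $\partial_\zeta f'_{\mathrm{mix}}$ on $(0,1]\times\calZ$, not on $[0,1]\times\calZ$; the paper explicitly allows $f_1(p;\zeta)\to\infty$ as $p\to0$ (Beta-type alternatives), so uniform continuity of $\lfdr$ in $\gamma$ over all $p\in(0,1]$ is \emph{not} granted by the assumptions. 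The paper confronts exactly this difficulty in \cref{lem:L1-sum_to0}: it proves only the weaker $L_1$ statement $\tfrac{1}{M}\sum_m\abs{\widehat{\lfdr}_m-\lfdr_m}\convp0$, by splitting into $\sfp_m\ge\delta$ (where the mean-value bound in \cref{eq:ub_by_op1} applies on a compact set) and $\sfp_m<\delta$ (whose probability is made small). That $L_1$ statement is enough for your ``second sum'', but it does \emph{not} give you the $\delta_M$ you use to control the indicator mismatch. To salvage your argument, replace the $\delta_M$-window by a two-scale bound: for any $\epsilon>0$, the fraction of indices with $\abs{\widehat{\lfdr}_m-\lfdr_m}>\epsilon$ is at most $\epsilon^{-1}\tfrac{1}{M}\sum_m\abs{\widehat{\lfdr}_m-\lfdr_m}=o_{\bbP}(1)$, while on the complement the mismatch is dominated by $\sup_\eta\tfrac{1}{M}\sum_m\Ind\{\abs{\lfdr_m-\eta}\le\epsilon\}$, which by Glivenko--Cantelli and the continuity of $\sfd_0$ (\cref{lem:limit_conts_cdf_ratio}) tends to the modulus of continuity of $\sfd_0$ and hence to $0$ as $\epsilon\to0$. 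This is essentially how the paper (via \cite[Lemma~8.4]{CaoChenZhang:J22}) closes the gap.

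\textbf{The constraint at the supremum.} Your inequality hinges on $\widehat{\sfr}_M(\widehat{\tteta}_M)\le\alpha$, but $\widehat{\sfr}_M$ is a right-continuous, non-decreasing step function and $\widehat{\tteta}_M=\sup\{\eta:\widehat{\sfr}_M(\eta)\le\alpha\}$ is typically the \emph{left endpoint of the first excursion above~$\alpha$}, at which $\widehat{\sfr}_M(\widehat{\tteta}_M)>\alpha$. This is only an off-by-one over-rejection (one extra index, since $\widehat{\sfr}_M$ is non-decreasing), so $\widehat{\sfr}_M(\widehat{\tteta}_M)\le\alpha+1/(M\,\widehat{\sfd}_{0,M}(\widehat{\tteta}_M))$; combined with your lower bound on the denominator this is $\alpha+o_{\bbP}(1)$ and the rest of your argument goes through. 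But you should say so explicitly rather than assert the constraint holds at the supremum.
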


\begin{Theorem}\label{thm:limit_pow}
Suppose \cref{asp:identify,asp:bandlimit,asp:regularity} hold. 
Let  $\upeta_0:=\sup\set{\eta\given \rVar{r}(\eta)\leq\alpha}$. We have
\begin{align*}
\pow_{\samReg}(h)\convp\frac{\rVar{d}_2(\upeta_0)}{1-\upkappa_0}\ \text{as}\ \numTests\to\infty.
\end{align*}
\end{Theorem}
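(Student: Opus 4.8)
The plan is to show that the \emph{realized} power ratio converges in probability and to identify its limit; the stated convergence then follows (with a mild abuse of notation, as the limit is $\bttXi$-measurable). Write the realized power as $\widehat\sfu_M(\widehat\tteta_M)\big/\bigl(\ofrac M\max(\abs{\sfJ_{1,M}},1)\bigr)$, where
\[
\widehat\sfu_M(\eta) := \ofrac M\sum_{m=1}^M\tttheta_m\Ind\set{\lfdr(\sfp_m;\gamma(\sfv_m,\sft_m;\widehat\bttXi))\leq\eta}
\]
is the normalized count of correctly detected alternatives at threshold $\eta$ (a true alternative, $\tttheta_m=1$, is detected iff its estimated \gls{lfdr} is at most $\widehat\tteta_M$), and $\abs{\sfJ_{1,M}}=\sum_m\tttheta_m$. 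Conditioned on $\bttXi$, the triples $(\sfv_m,\sft_m,\sfp_m,\tttheta_m)$ are \gls{iid}, so by the \gls{CSLLN} $\ofrac M\abs{\sfJ_{1,M}}\convp\bbE[\tttheta_m\mid\bttXi]=1-\ttkappa_0$; by \cref{cond:pi0_pos} this limit is strictly positive, so the truncation is asymptotically inactive and the denominator $\convp 1-\ttkappa_0$. It therefore remains to prove $\widehat\sfu_M(\widehat\tteta_M)\convp\sfd_2(\tteta_0)$, after which Slutsky's theorem gives the claim, noting that $\sfd_2(\eta)=\sfd_0(\eta)-\sfd_1(\eta)=\bbE[\tttheta_m\Ind\set{\lfdr\leq\eta}\mid\bttXi]$ is exactly the limiting true-positive proportion.

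First I would treat the oracle process $\sfu_M(\eta)$, obtained by replacing $\widehat\bttXi$ with the true $\bttXi$. Since $\eta\mapsto\sfu_M(\eta)$ and $\eta\mapsto\sfd_2(\eta)$ are monotone with $\bbE[\sfu_M(\eta)\mid\bttXi]=\sfd_2(\eta)$, a Glivenko--Cantelli argument (pointwise \gls{CSLLN} plus monotonicity and continuity of $\sfd_2$) yields $\sup_\eta\abs{\sfu_M(\eta)-\sfd_2(\eta)}\convp0$. Continuity of $\sfd_2$ in $\eta$, i.e.\ atomlessness of the conditional law of $\lfdr(\sfp_m;\gamma(\sfv_m,\sft_m;\bttXi))$, follows from the strict monotonicity of $f_1$ in \cref{cond:monotone_pdf} together with \cref{cond:pi0_pos,cond:f0_pos,cond:conts_pdf}, which make $p\mapsto\lfdr(p;\gamma(v,t))$ strictly monotone and continuous, so that its pushforward has no atoms.

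Next I would pass from the oracle to the plug-in process. By \cref{thm:unif_conv_para}, $\sup_{(v,t)}\abs{\widehat\ttgamma(v,t)-\ttgamma(v,t)}\convp0$; combined with the uniform continuity of $\lfdr$ in its parameter (a consequence of \cref{cond:conts_pd,cond:conts_limit_lfdr} and compactness of $\calK$), this gives $\epsilon_M:=\sup_{m}\abs{\lfdr(\sfp_m;\gamma(\sfv_m,\sft_m;\widehat\bttXi))-\lfdr(\sfp_m;\gamma(\sfv_m,\sft_m;\bttXi))}\convp0$. Since the two indicators disagree only when the oracle \gls{lfdr} lies within $\epsilon_M$ of $\eta$,
\[
\sup_\eta\abs{\widehat\sfu_M(\eta)-\sfu_M(\eta)}\leq\sup_\eta\bigl(\sfd_{0,M}(\eta+\epsilon_M)-\sfd_{0,M}(\eta-\epsilon_M)\bigr),
\]
and the right-hand side $\convp0$ by the uniform convergence $\sup_\eta\abs{\sfd_{0,M}(\eta)-\sfd_0(\eta)}\convp0$ and the uniform continuity of the atomless limit $\sfd_0$. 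Hence $\sup_\eta\abs{\widehat\sfu_M(\eta)-\sfd_2(\eta)}\convp0$. Finally I would invoke $\widehat\tteta_M\convp\tteta_0$, established as in the proof of \cref{thm:limit_FDR}: writing $\sfr(\eta)=\bbE[\lfdr(\sfp_m;\cdot)\mid\lfdr\leq\eta,\bttXi]$ (using the identity $\bbP(\tttheta_m=0\mid\sfp_m,\ttgamma(\sfv_m,\sft_m))=\lfdr$), which is continuous and crosses the level $\alpha$ at $\tteta_0$, the uniform convergence $\sup_\eta\abs{\widehat\sfr_M(\eta)-\sfr(\eta)}\convp0$ forces $\widehat\tteta_M\convp\tteta_0$. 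The composition lemma (uniform convergence of $\widehat\sfu_M$ to a continuous $\sfd_2$, together with $\widehat\tteta_M\convp\tteta_0$) then gives $\widehat\sfu_M(\widehat\tteta_M)\convp\sfd_2(\tteta_0)$.

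The main obstacle is the joint control in the third paragraph: both the plug-in estimate $\widehat\bttXi$ and the data-driven threshold $\widehat\tteta_M$ perturb a sum of \emph{discontinuous} indicators, so naive pointwise bounds fail. The crux is to convert these two perturbations into shifts of the threshold argument and absorb them using the atomlessness (equivalently, the uniform continuity in $\eta$) of the limiting distribution $\sfd_0$ of \gls{lfdr} values, which ultimately rests on \cref{cond:monotone_pdf}. A secondary delicate point is establishing $\widehat\tteta_M\convp\tteta_0$ with a genuine (transversal) crossing of $\alpha$; degeneracies where $\sfr$ is flat at level $\alpha$ must be excluded, exactly as in the analysis underlying \cref{thm:limit_FDR}.
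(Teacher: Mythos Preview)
Your overall plan---decompose the realized power into numerator and denominator, handle the denominator by the \gls{WLLN}, prove a uniform law for the true-positive count, show $\widehat\tteta_M\convp\tteta_0$, and combine via Slutsky---is exactly the skeleton underlying \cite[Theorem~3.5]{CaoChenZhang:J22}, which the paper simply invokes after verifying its hypotheses (Conditions C1--C3 via \cref{lem:limit_conts_cdf_ratio,lem:exs_small_ratio,lem:L1-sum_to0}) and, crucially, after proving that $\sfr(\eta)$ is \emph{strictly} increasing on $(\eta_\infty',1)$ so that the crossing of level $\alpha$ is transversal. So in spirit you are reproducing the content of that external theorem rather than taking a different route.

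There is, however, a real gap in your plug-in step. You claim that
\[
\epsilon_M:=\sup_{m\le M}\abs*{\lfdr(\sfp_m;\gamma(\sfv_m,\sft_m;\widehat\bttXi))-\lfdr(\sfp_m;\gamma(\sfv_m,\sft_m;\bttXi))}\convp 0,
\]
justifying it by ``uniform continuity of $\lfdr$ in its parameter (a consequence of \cref{cond:conts_pd,cond:conts_limit_lfdr} and compactness of $\calK$)''. But \cref{cond:conts_pd} only gives continuity of $\partial f'_{\mathrm{mix}}/\partial\zeta$ on the \emph{non-compact} set $(0,1]\times\calZ\times\calJ$, and \cref{asp:identify} explicitly allows $f_1(p;\zeta)\to\infty$ as $p\to0$ (e.g.\ the Beta family used in \cref{sect:exp}); \cref{cond:conts_limit_lfdr} says only that $\chi(\cdot,\cdot;\bXi)$ is continuous on $\calJ$, not that the convergence $\lfdr(p;\cdot)\to\chi$ is uniform or that $\chi$ is jointly continuous in $\bXi$. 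Hence the equicontinuity in the parameter, uniform over \emph{all} $p\in(0,1]$, does not follow from the hypotheses as stated. Since $\min_m\sfp_m\to0$ as $M\to\infty$, one of the $\sfp_m$ will eventually land in the region where the modulus of continuity of $\lfdr$ in $\bXi$ is uncontrolled, and your $\sup_m$ bound breaks down. The paper sidesteps this precisely by proving only the \emph{$L^1$} statement
\[
\ofrac{M}\sum_{m=1}^M\abs*{\lfdr(\sfp_m;\gamma(\sfv_m,\sft_m;\widehat\bttXi))-\lfdr(\sfp_m;\gamma(\sfv_m,\sft_m;\bttXi))}\convp 0
\]
(\cref{lem:L1-sum_to0}), obtained by splitting at $\sfp_m\gtrless\delta$ and letting $\delta\downarrow0$; the passage from this $L^1$ control to uniform-in-$\eta$ control of the indicator sums is then handled by the arguments in \cite[Lemma~8.1, Lemma~8.4]{CaoChenZhang:J22}, not by a pointwise sup bound. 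To repair your proof, either replace the $\sup_m$ control with the $L^1$ version and redo the sandwich using the ordered estimated \gls{lfdr} values (as in \cite{CaoChenZhang:J22}), or add an argument (e.g.\ via Dini together with an assumption that $\chi$ is jointly continuous on $\calJ\times\calK$) that $\lfdr$ extends continuously to $p=0$ so that compactness genuinely applies.
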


We remark that our result is more general than existing results for \gls{FDR} control over graphs. In \cite[Proposition 3]{PouXia:J23}, the authors prove asymptotic \gls{FDR} control under the assumption that the $p$-value distribution under the alternative is homogeneous among the vertices. When this homogeneity assumption does not strictly hold, the \gls{FDR} upper bound shows inflation and depends on the deviation of the proportions of null hypotheses among different vertices \cite[Theorem 2]{PouXia:J23}. By utilizing the domain information of $\jointDom$, we allow the $p$-value distribution under the alternative to vary in the joint domain while still maintaining \gls{FDR} control. In \cite[Lemma 3]{LiBar:J18}, the asymptotic control of \gls{FDR} simultaneously depends on the norm of the incidence matrix of \(\graph\) and the sparsity level of $p$-values' weights, which restricts the user's choices of sparsity level, making \gls{FDR} control inaccessible in some cases. In contrast, the asymptotic control of \gls{FDR} always holds for our approach, irrespective of $K_1$ and $K_2$.

\section{Numerical Results}\label{sect:exp}

In this section, we compare the following MHT methods on a communication sensor network dataset. The code to reproduce the results is available online.\footnote{\url{https://github.com/xcjian/GGSP-detection}}
\begin{enumerate}
\item MHT-GGSP: We set $\fP[p\mid\signal_{\tuple}] = \beta\tuple p^{\beta\tuple-1}$, where $\beta\tuple = (1 + \exp(-\signal_{\tuple}))^{-1}$. It can be shown using \cref{eq:pi0_inv} that $\beta\tuple = \pi_0\circ \signalNodeEpoch$. We use the graph Laplacian as the \gls{GSO}, and $\psi_k$ the trigonometric basis of $L^2[-\pi, \pi]$ (cf. \cref{eq:tri_basis}). Here $\signal_{\tuple}$ is estimated by \cref{eq:MLE_prob} as $\widehat{\signal}_{\tuple}$, which yields the estimates $\widehat{\beta}\tuple := (1 + \exp(-\widehat{\signal}_{\tuple}))^{-1}$. 

\item MHT-GGSP\_\text{reg}: This is a modified version of MHT-GGSP designed to maintain the marginal correctness of the inferred model. In this method, we impose the constraint that $\ofrac{\numTests}\sum_{\tuple\in\samReg}\pi_0\circ\signalNodeEpoch= \widetilde{\pi}_0$, where $\widetilde{\pi}_0 = \ofrac{(1 - \tau_0) \numTests }\abs{\set{\tuple\given \p\geq \tau_0}}$ is Storey’s estimator of the null proportion \cite{Sto:J02}. Here, we set $\tau_0 = 0.5$. This is achieved by regularizing the estimated $\beta\tuple$ from MHT-GGSP such that their average equals $\widetilde{\pi}_0$, since $\beta\tuple$ coincides with the null proportion in this case. Hence, the estimated parameters in this method are $\tilde{\beta}\tuple:=\frac{\widetilde{\pi}_0}{\sum_{\tuple\in\samReg} \widehat{\beta}\tuple}\widehat{\beta}\tuple$.

\item MHT-GGSP\_\text{cens}: This is a modified version of MHT-GGSP\_\text{reg} designed to address model mismatch issues under low noise levels. Under such conditions, the $p$-values for alternative hypotheses are very close to zero, almost following a delta distribution at zero, while those for the null hypotheses follow a $\mathrm{Unif}[0,1]$ distribution. This abrupt change in $p$-value distribution at the alternative-null boundary, rather than a smooth variation over the graph, leads to an underestimation of $\beta\tuple$ for null $p$-values near the boundary of the alternative region. 

To mitigate this underestimation, we employ a censoring strategy inspired by \cite{Goelz2023a}. We set a small threshold $\eta_0$, which is $10^{-4}$ in this experiment. $p$-values below this threshold are referred to as ``censored $p$-values'' and the rest as ``non-censored $p$-values''. For non-censored $p$-values, we use MHT-GGSP to estimate $\beta\tuple$ without considering the censored $p$-values, aiming to reduce parameter underestimation caused by these censored values. For censored $p$-values, we estimate their null proportion as $\widehat{\pi_0\circ\signalNodeEpoch} = \eta_0\numTests / \abs{\set{\tuple\given \p\leq \eta_0}}$, independent of their associated $\tuple$. This provides an upper bound on the expected proportion of null $p$-values in the interval $[0, \eta_0]$ given the number of censored $p$-values. We assume $\fPAltNodeEpoch$ to be a narrow uniform distribution on $[0, \eta_0]$, also independent of $\tuple$, so it suffices to estimate the mass it has on $[0, \eta_0]$. 

To achieve this, we use the estimated $\fPAltNodeEpoch$ for non-censored $p$-values and the marginal relationship $\bbP(\p\leq \eta_0\given\samReg) = \ofrac{\numTests} (\sum_{\tuple\in\samReg} \FPNodeEpoch[\eta_0] \Ind\set{\p>\eta_0} + \abs{\set{\tuple\given \p\leq \eta_0}}( \pi_0\circ\signalNodeEpoch\eta_0 + (1 - \pi_0\circ\signalNodeEpoch)\FPAltNodeEpoch[\eta_0]))$. The left-hand side can be estimated by the sample proportion of censored $p$-values. On the right-hand side, the only unknown and not estimated quantity is $\FPAltNodeEpoch[\eta_0]$, which can be estimated using this equation.

\item \gls{BH} \cite{BenHoc:J95, Efr:10}: This method sets adaptive threshold on $p$-values. Suppose the $p$-values are ordered as $\set{p_{(i)}\given i=1,\dots,M}$. This method rejects the $p$-values less than or equal to
\begin{align*}
\max\set*{p_{(i)}\given p_{(i)} \leq \frac{i}{M}\alpha}.
\end{align*}
\item lfdr-sMoM \cite{GolZouKoi:J22}: This method assumes that all the $p$-values can be divided into several groups such that in each group they follow the same marginal distribution $f_\mathrm{P}$ which is a Beta mixture distribution \cite[(12)]{GolZouKoi:J22}. The group assignment and the marginal distribution are inferred from the data. We implement this method using the original code \cite{web:lfdr-sMoM}.
\item Proportion-matching \cite{PouXia:J23}: This method assumes that $f_1$ is homogeneous over the graph, while the null proportion is different on each vertex. By adjusting the \gls{FDR} control levels on different vertices, and applying the \gls{BH} method on each vertex, it is expected to match the performance of using the global \gls{BH} method on all $p$-values from all vertices.
\item FDR-smoothing \cite{TanKoyPolSco:J18}: This method uses $z$-values. It assumes that $f_1$ is homogeneous over the graph, while the null proportion differs on each vertex. This method estimates $f_1$ and $\pi_0$ by combining the negative log-likelihood function with the penalty term being the $l_1$-smoothness of $\pi_0$. The graph is constructed by the time-vertex approach, i.e., the product of the graph with the cyclic graph. We implement this method using the original code \cite{web:FDR-smoothing}.
\item SABHA \cite{LiBar:J18}: This method first reweighs the $p$-values and then applies the \gls{BH} method on the weighted $p$-values. The weights are understood as $\pi_0$, and the reciprocals are assumed to come from a feasible set. In this experiment, the feasible set consists of the graph signals that have $l_1$ norm less than a predetermined threshold. The weights are obtained by solving the optimization problem \cite[(5)]{LiBar:J18}. The graph is constructed by the time-vertex approach. We implement this method using the original code \cite{web:SABHA}.
\item AdaPT \cite{LeiFit:J18}: This iterative strategy masks most of the $p$-values in the beginning. At each iteration it reveals a certain amount of $p$-values according to a threshold, estimates the \gls{FDR} and updates the threshold. It stops when the \gls{FDR} estimate is lower than the nominal \gls{FDR} level, and rejects the $p$-values below the threshold. The threshold of the $p$-values is updated by an EM estimate of \gls{lfdr}. We implement this method using its R package \cite{web:AdaPT}. This method makes use of the true coordinates of the sensors, instead of the graph structure. 
\end{enumerate}


Following the model in \cref{exam:moving_trans}, we consider a 2D area where two wireless transmitters perform random walks on a $100\times 100$ grid (cf. \ \cref{fig:instance_8_comm,fig:instance_9_comm}). Receivers are randomly placed on $300$ points of the grid. We model the receiver sensor network as a $10$-NN graph according to their coordinates, and $\calT=[-\pi,\pi]$. Each received signal is affected by path loss, shadow fading and fast fading. Besides, the observations at each receiver are corrupted by \gls{AWGN}. We suppose the sample set $\samReg$ is given by $\calV\times \set{-\pi+\frac{j}{T}\cdot2\pi\given j=0,\dots, T}$. In this experiment, we set $T=9$. We are interested in determining whether each node, at each time instance, has received a signal above the noise floor from at least one transmitter. We follow the setup of scC, Cnfg. 2 in \cite{GolZouKoi:J22}. The data is simulated using the source code in \cite{GolZouKoi:J22} with the following additional features: i) the transmitters perform random walks. ii) The Rician fading (cf.\ \cref{eq:fafa}) is considered in computing the received signal. iii) The noise energy of the \gls{AWGN} varies over the range of $\set{10^{-3}, 0.25, 0.5, 1.0, 1.25, 1.5, 1.75, 2.0}$. The proportion of null hypotheses is approximately $10\%$.
\begin{figure}[htbp]
	\centering
	\begin{subfigure}[b]{0.49\textwidth}
		\centering
		\includegraphics[width=\linewidth, trim=4.5cm 4.3cm 4cm 4cm, clip]{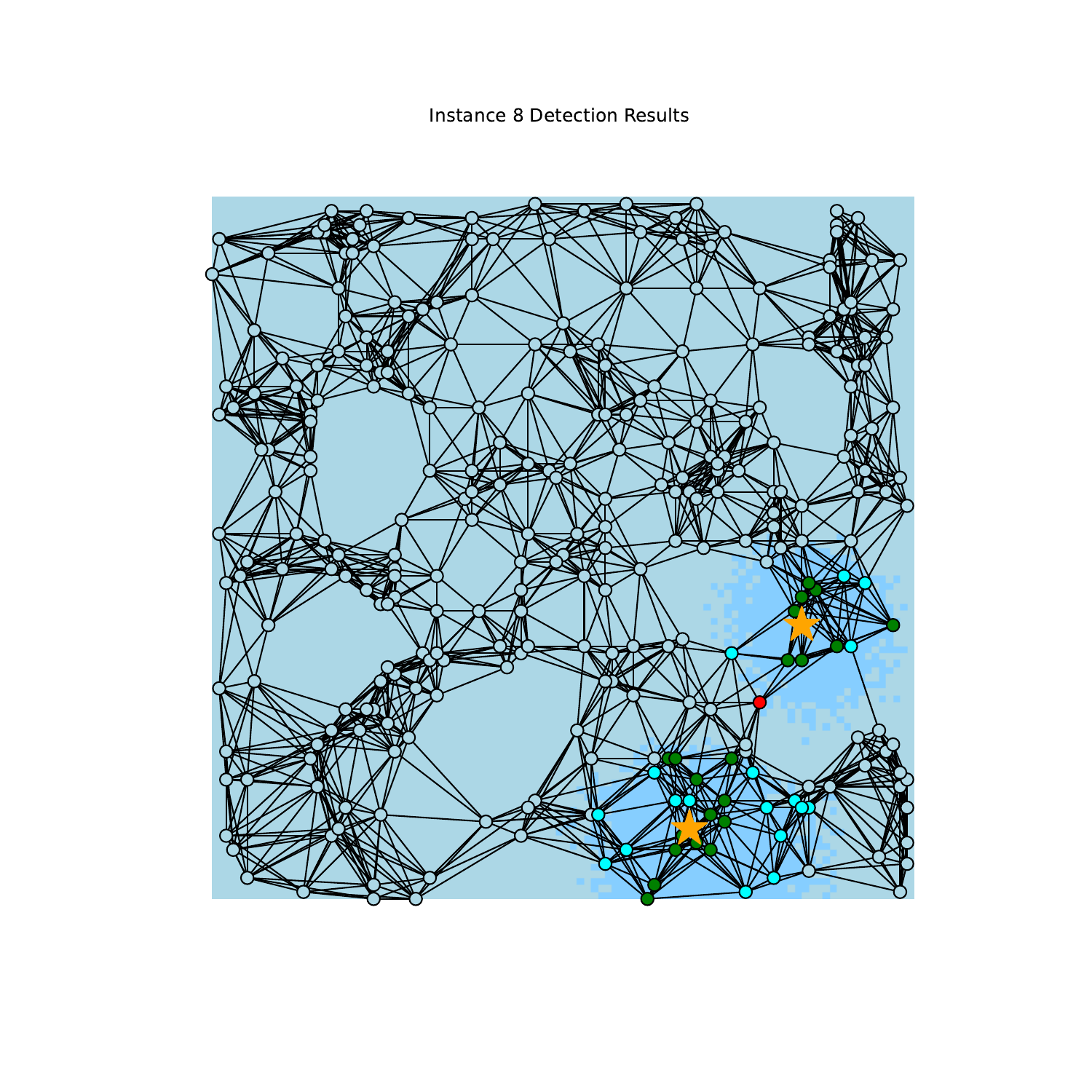}
		\caption{}
		\label{fig:instance_8_comm}
	\end{subfigure}
        \begin{subfigure}[b]{0.49\textwidth}
		\centering
		\includegraphics[width=\linewidth, trim=4.5cm 4.3cm 4cm 4cm, clip]{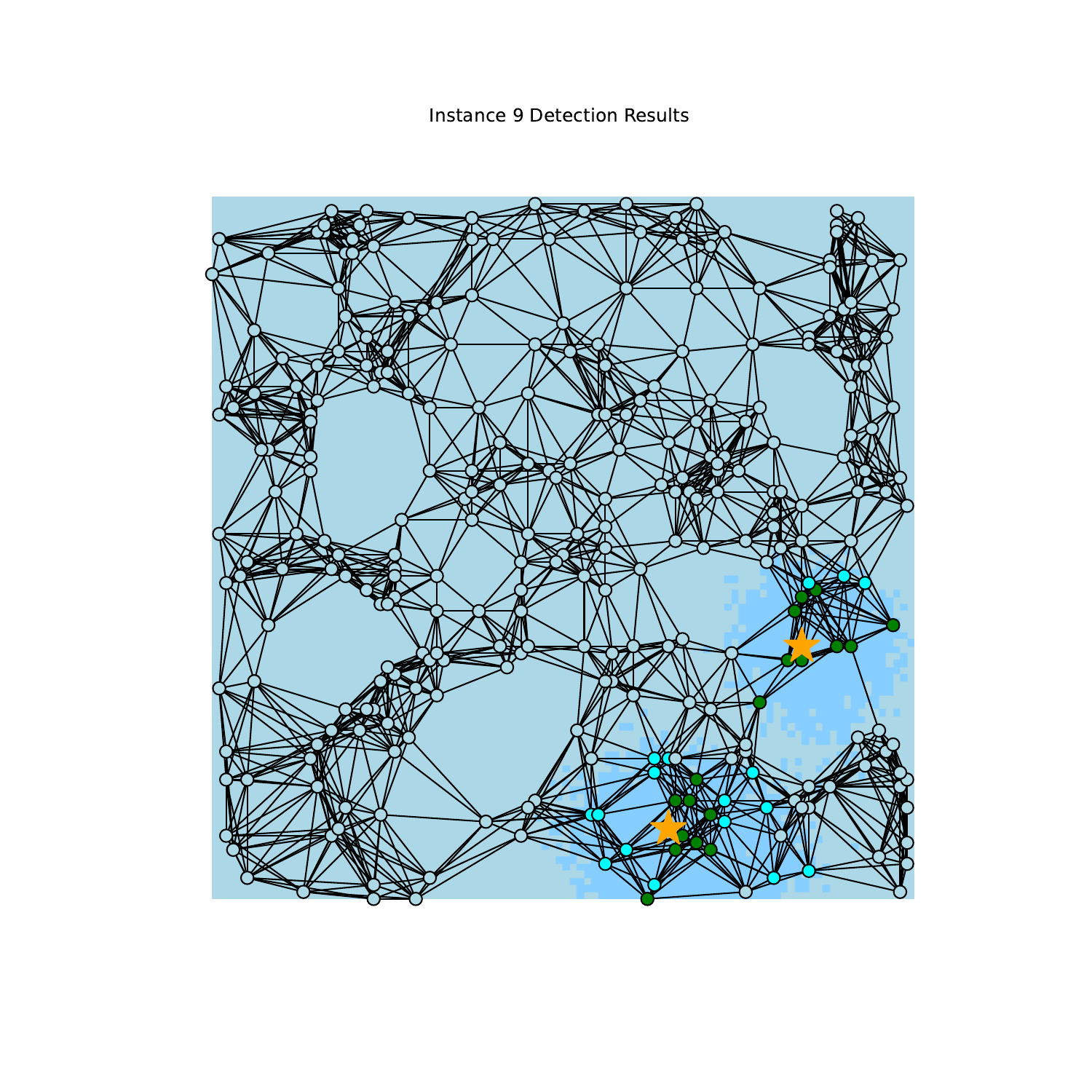}
		\caption{}
		\label{fig:instance_9_comm}
	\end{subfigure}
	\caption{Example of detection results by MHT-GGSP with nominal \gls{FDR} level 0.2, under noise level 1.5. \Cref{fig:instance_8_comm,fig:instance_9_comm} are detection results on the communication network for two consecutive instances. In the background, the light blue color denotes the null region, while a deeper color denotes the alternative region. On the graph, the light blue color represents correctly identified nulls, and the green color represents correctly rejected alternatives. Cyan represents undetected alternatives, and red represents incorrectly rejected nulls. We use orange stars to highlight the transmitters' locations. 
    }
	\label{fig:detect_illus}
\end{figure}

\begin{figure}[!htb]
	\centering
	\begin{subfigure}[b]{0.49\textwidth}
		\centering
		\includegraphics[width=0.95\linewidth, trim=0.55cm 0.3cm 1.5cm 1.4cm, clip]{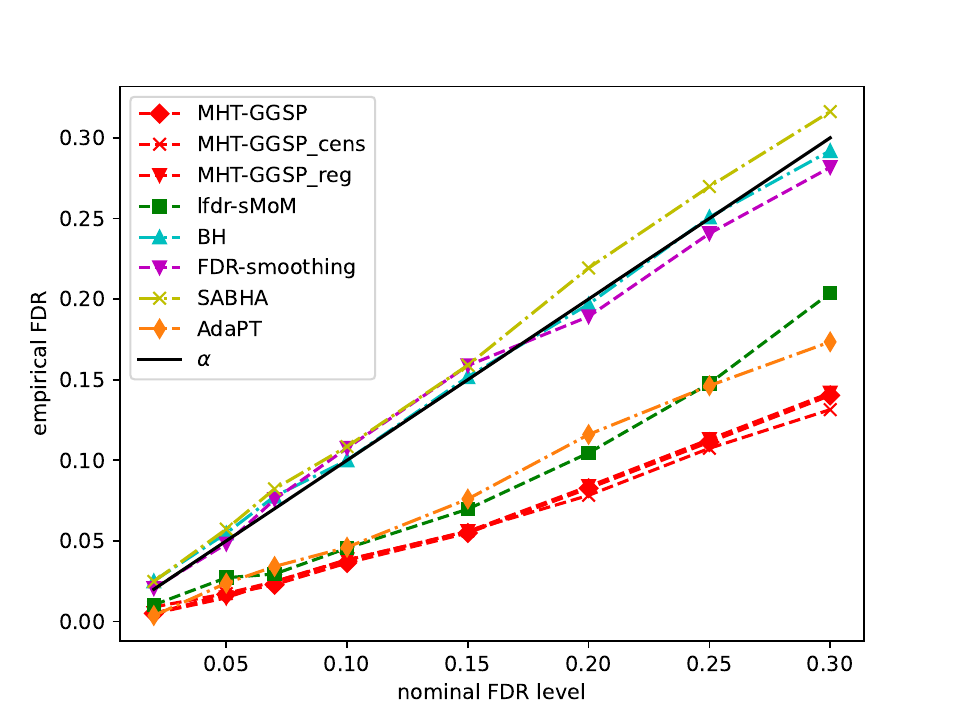}
		\caption{Empirical \gls{FDR} under noise level $1.25$}
		\label{fig:FDR_comm}
	\end{subfigure}
	\begin{subfigure}[b]{0.49\textwidth}
		\centering
		\includegraphics[width=0.95\linewidth, trim=0.55cm 0.3cm 1.5cm 1.4cm, clip]{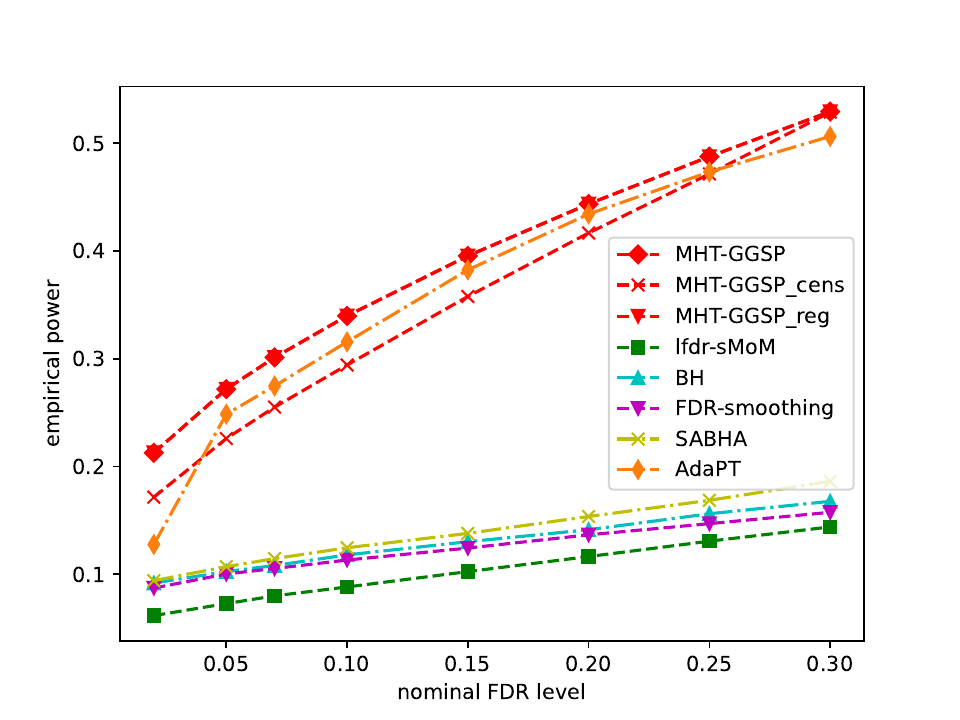}
		\caption{Empirical power under noise level $1.25$}
		\label{fig:pow_comm}
	\end{subfigure}
    \begin{subfigure}[b]{0.49\textwidth}
		\centering
		\includegraphics[width=0.95\linewidth, trim=0.55cm 0.3cm 1.5cm 1.4cm, clip]{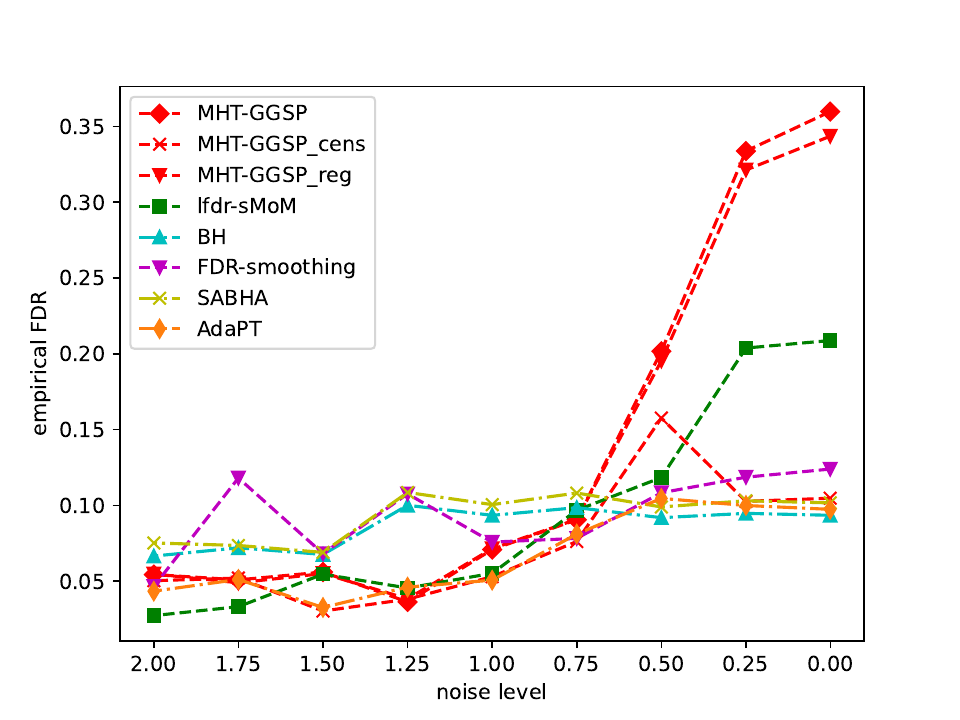}
		\caption{Empirical \gls{FDR} under nominal level $0.1$}
		\label{fig:FDR_vary_comm}
	\end{subfigure}
    \begin{subfigure}[b]{0.49\textwidth}
		\centering
		\includegraphics[width=0.95\linewidth, trim=0.55cm 0.3cm 1.5cm 1.4cm, clip]{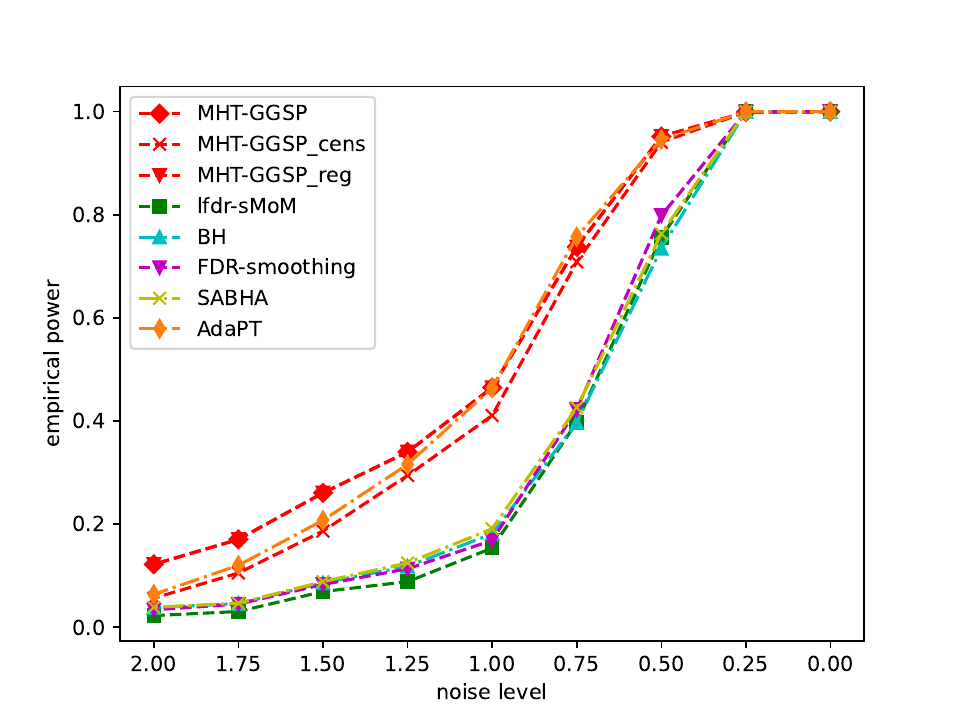}
		\caption{Empirical power under nominal level $0.1$}
		\label{fig:pow_vary_comm}
	\end{subfigure}

	\caption{\gls{FDR} and detection power under different nominal \gls{FDR} levels and noise levels. Each point is obtained by $20$ repetitions.}
	\label{fig:det_results}
\end{figure}

From \cref{fig:det_results}, we observe that the performance of MHT-GGSP and MHT-GGSP\_\text{reg} are almost identical. This indicates that the mean of the estimated $\widehat{\beta}\tuple$ by MHT-GGSP is very close to $\widetilde{\pi}_0$ estimated by Storey’s method. Therefore, the MHT-GGSP method provides reasonably accurate marginal estimates of the null probabilities. However, when the noise level is low, we observe \gls{FDR} inflation for both MHT-GGSP and MHT-GGSP\_\text{reg} (cf.\ \cref{fig:FDR_vary_comm}). This is due to the violation of the smooth variation assumption of the distribution and parameter underestimation, as previously mentioned. After introducing the censoring strategy in MHT-GGSP\_\text{cens}, this inflation is mitigated. However, in edge cases, such as when the noise level is $0.5$, this inflation cannot be entirely avoided. Additionally, since MHT-GGSP\_\text{cens} does not utilize the contextual information from the censored $p$-values, it is less powerful compared to MHT-GGSP and AdaPT when the noise level exceeds $0.75$.

Among the baseline methods, both lfdr-sMoM and FDR-smoothing utilize \gls{lfdr} similar to MHT-GGSP. These methods first estimate the densities under their respective empirical Bayesian frameworks, then compute \gls{lfdr}, and finally identify the largest set with an average \gls{lfdr} below the nominal level as $\altRegEst$. The lfdr-sMoM method assumes a traditional two-groups model, while FDR-smoothing assumes that the null proportion is inhomogeneous over the time-vertex graph. As shown in \cref{fig:pow_comm,fig:pow_vary_comm}, these methods are not as powerful as MHT-GGSP. This is because they both overlook the local variation of alternative $p$-value (or $z$-value) distributions, which is caused by the variation of signal energy over $\jointDom$, as illustrated in \cref{fig:p_dist_illustrate}.

The methods proportion-matching, SABHA, and AdaPT are extensions of the \gls{BH} method. The proportion-matching method has empirical \gls{FDR} greater than $0.5$ over all noise levels, thus is incomparable and not shown in \cref{fig:det_results}. Similar to FDR-smoothing, this method is designed under the assumption that the alternative $p$-value distribution is homogeneous over the sensor network. When this assumption does not hold, the \gls{FDR} is not guaranteed to be controlled at the nominal level. Besides, this method aims to approximate the performance of the \gls{BH} method, which is not as powerful as other methods. The SABHA method faces a similar issue: to achieve \gls{FDR} control, it imposes conditions on the Rademacher complexity of the feasible set of reciprocals of the $p$-values’ weights. These conditions may not always be met, leading to potential \gls{FDR} inflation, as seen in \cref{fig:FDR_comm}. By leveraging the sensor coordinate information, the AdaPT method achieves good power while maintaining \gls{FDR} control at the nominal level. However, as observed in \cref{fig:det_results}, AdaPT is not as powerful as MHT-GGSP when the noise level exceeds $0.75$. In summary, when the $p$-value distribution varies smoothly over $\jointDom$, MHT-GGSP demonstrates state-of-the-art performance by effectively utilizing this property.

\section{Conclusion}\label{sect:conc}

We have proposed a novel method for conducting network multiple hypothesis tests on the joint domain of the vertex set and the measure space of each vertex signal. Our approach models the distributions of $p$-values over the joint domain as parameterized by a random generalized graph signal, allowing for inhomogeneity across the joint domain. By utilizing this estimator in conjunction with the \gls{lfdr} based approach, we can control the \gls{FDR} at a specified nominal level in the asymptotic setting. This provides a powerful tool for accurately identifying significant hypotheses in complex datasets with an underlying graph structure. This approach also has the potential to be extended for online and distributed implementation over the underlying network.

\appendices

\section{Preliminaries: GGSP}\label[Appendix]{sect:prelim:GGSP}

Compared to the aforementioned traditional \gls{GSP} framework (cf.\ \cref{sect:stat_model}) where the vertex observations are scalars, the \gls{GGSP} framework considers the vertex observation as a function from $L^2(\calT)$ on each vertex. Specifically, a generalized graph signal $x$ is defined as the following map \cite{JiTay:J19}:
\begin{align*}
x: \calV &\to L^2(\calT) \\
v &\mapsto x(v)(\cdot).
\end{align*}
This map can be identified with the following element $x'\in L^2(\calV\times\calT)$:
\begin{align*}
x': \calV\times\calT &\to \Real \\
(v,t) &\mapsto x(v)(t). 
\end{align*}
In this paper, we refer to $L^2(\calV\times\calT)$ as the space of generalized graph signals.

Suppose the space $L^2(\calT)$ has an orthonormal basis $\set{\psi_k\given k=1,2,\dots}$ which also admits different smoothness over $\calT$. For example, when $\calT=[-\pi,\pi]$, the set of functions
\begin{align}\label{eq:tri_basis}
\set*{\ofrac{\sqrt{2\pi}}\Ind[-\pi,\pi]}\bigcup\set*{\ofrac{\sqrt{\pi}}\sin(kt),\ofrac{\sqrt{\pi}}\cos(kt)\given k=1,2,\dots}
\end{align}
forms an orthonormal basis for $L^2([-\pi,\pi])$. We also assume that $\set{\psi_k}$ are listed in decreasing order of smoothness. Write $\bphi_k=(\phi_k(v))_{v\in\calV}\in\Real^N$. It can be shown that the set of functions $\set{\phi_{k_1}(v)\psi_{k_2}(t)\given k_1=1,\dots,N; k_2=1,2,\dots}$ forms an orthonormal basis of $L^2(\calV\times\calT)$. A generalized graph signal $x$ is bandlimited if 
\begin{align*}
x(v,t) = \sum_{k_1=1}^{K_1}\sum_{k_2=1}^{K_2} \xi_{k_1,k_2}\phi_{k_1}(v)\psi_{k_2}(t),
\end{align*}
where $K_1,K_2<\infty$. By a linear combination of the smoothest basis vectors on $L^2(\calV\times\calT)$, this equation yields a smooth generalized graph signal. In this paper, we assume that $\calT$ is compact. By assuming a discrete topology on $\calV$ and the product topology on $\calV\times\calT$, it can be verified that $\calV\times\calT$ is compact. In the proofs, we use $\calJ$ to denote $\jointDom$ for simplicity.

\section{Proof of \cref{thm:opt_thres}}

We adopt a similar procedure to the proof of \cite[Theorem 2]{LeiFit:J18}. Specifically, we first rewrite problem \cref{eq:opt_mpow} as a convex problem, and then utilize \gls{KKT} condition to prove the final result.
For any $(v,t)\in\samReg$, we use $\FpNulNodeEpoch$ and $\FPAltNodeEpoch$ to denote the \glspl{cdf} of $\fPNulNodeEpoch$ and $\fPAltNodeEpoch$, respectively.  
First we simplify problem \cref{eq:opt_mpow} by introducing the following notations: 
\begin{align*}
a_0(\boldsymbol{s}) &:= \sum_{\tuple\in\samReg}\bbE[\idc{\p\leq s_{\tuple},\HTrue = \HNul}\given\signal(\samReg)] = \sum_{\tuple\in\samReg}\FpNulNodeEpoch[s_{\tuple}]\pi_0\circ\signalNodeEpoch, \\
a_1(\boldsymbol{s}) &:= \sum_{\tuple\in\samReg}\bbE[\idc{\p\leq s_{\tuple},\HTrue=\HAlt}\given\signal(\samReg)] = \sum_{\tuple\in\samReg} \FPAltNodeEpoch[s_{\tuple}](1-\pi_0\circ\signalNodeEpoch). 
\end{align*}
Note that the denominator of \cref{eq:def_mpow} does not depend on $h$. Therefore, problem \cref{eq:opt_mpow} is equivalent to 
\begin{align*}
&\max_{\boldsymbol{s}\in[0,1]^\numTests} a_1(\boldsymbol{s}) \\
\ST &\frac{a_0(\boldsymbol{s})}{a_0(\boldsymbol{s})+a_1(\boldsymbol{s})} \leq \alpha.
\end{align*}
This can be further simplified as 
\begin{align}
\begin{aligned}\label{eq:mfdr_Q}
&\min_{\boldsymbol{s}\in[0,1]^\numTests} -a_1(\boldsymbol{s}) \\
\ST & -\alpha a_1(\boldsymbol{s}) + (1-\alpha)a_0(\boldsymbol{s}) \leq 0.
\end{aligned}
\end{align}
From the monotonicity of $\fPNodeEpoch$ and $\fPAltNodeEpoch$, we know that $-a_1(\boldsymbol{s})$ and $-\alpha a_1(\boldsymbol{s}) + (1-\alpha)a_0(\boldsymbol{s})$ are convex in $\boldsymbol{s}$. Hence, \cref{eq:mfdr_Q} is a convex optimization problem. Besides, since the feasible region is compact and $\FpNulNodeEpoch$, $\FPAltNodeEpoch$ are continuous, we know that problem \cref{eq:mfdr_Q} has a global optimal solution. Next, we verify Slater's condition and then apply \gls{KKT} condition. Slater's condition requires that there exists an $\boldsymbol{s}$ such that $s_{\tuple}\in (0,1)$ for all $\tuple\in\jointDom$, and $-\alpha a_1(\boldsymbol{s}) + (1-\alpha)a_0(\boldsymbol{s}) < 0$. To find such $\boldsymbol{s}$, let $g(\boldsymbol{s}) := -\alpha a_1(\boldsymbol{s}) + (1-\alpha)a_0(\boldsymbol{s})$. According to the assumption, we suppose that there exists $p_0\in(0,1)$ and $(v_0, t_0)\in\samReg$ such that $\lfdr(p_0; \signal_{(v_0, t_0)})< \alpha$. We define the constant 
\begin{align*}
c_1 = -\alpha \fPAltNodeEpoch[p_0][v_0][t_0](1-\pi_0\circ\signal_{(v_0,t_0)}) + (1-\alpha)\fPNulNodeEpoch[p_0][v_0][t_0]\pi_0\circ\signal_{(v_0,t_0)}
\end{align*}
which is negative by assumption. According to \cref{cond:conts_pdf} in \cref{asp:identify}, we suppose $\fPNulNodeEpoch$ is uniformly bounded by $c_2$ on $[0,1]\times\calJ$.
Given an arbitrary $\epsilon>0$, let $\boldsymbol{s}\in[0,1]^\numTests$ be such that:
\begin{align*}
0&<s_{\tuple}\leq \min(p_0,\frac{\epsilon}{2(\numTests-1)c_2}), \forall \tuple\in\samReg-\set{(v_0, t_0)}, \\
-\frac{\epsilon}{c_1}&<s_{(v_0,t_0)}\leq p_0.
\end{align*}
Note that $g$ is differentiable on the interior of $[0,1]^\numTests$. Therefore, using mean value theorem \cite[Corollary 10.2.9]{Tao:16}, we know that there exists an $\boldsymbol{s}'$ such that 
\begin{align*}
&g(\boldsymbol{s}) - g(\bzero) = \nabla g(\boldsymbol{s}')\T \boldsymbol{s}\\
&= (-\alpha \fPAltNodeEpoch[s'_{(v_0,t_0)}][v_0][t_0](1-\pi_0\circ\signal_{(v_0,t_0)}) + (1-\alpha)\fPNulNodeEpoch[s'_{(v_0,t_0)}][v_0][t_0]\pi_0\circ\signal_{(v_0,t_0)})s_{(v_0,t_0)} + \\ 
&\sum_{\tuple\neq(v_0,t_0)} (-\alpha \fPAltNodeEpoch[s'_{\tuple}](1-\pi_0\circ\signal_{\tuple}) + (1-\alpha)\fPNulNodeEpoch[s'_{\tuple}]\pi_0\circ\signal_{\tuple})s_{\tuple} \\
&\leq (-\alpha \fPAltNodeEpoch[p_0][v_0][t_0](1-\pi_0\circ\signal_{(v_0,t_0)}) + (1-\alpha)\fPNulNodeEpoch[p_0][v_0][t_0]\pi_0\circ\signal_{(v_0,t_0)})s_{(v_0,t_0)} + \\
&\sum_{\tuple\neq(v_0,t_0)} (-\alpha \fPAltNodeEpoch[p_0](1-\pi_0\circ\signal_{\tuple}) + (1-\alpha)\fPNulNodeEpoch[p_0]\pi_0\circ\signal_{\tuple})s_{\tuple} \\
&\leq c_1s_{(v_0,t_0)} + \sum_{\tuple\neq(v_0,t_0)}c_2s_{\tuple}<-\frac{\epsilon}{2}<0.
\end{align*}
Here $\boldsymbol{s}' = \mu' \boldsymbol{s}$, $\mu'\in(0,1)$. Notice that $g(\bzero)=0$, and $\boldsymbol{s}$ is in the interior of $[0,1]^\numTests$, hence Slater's condition is satisfied. In this case, $\boldsymbol{s}^*$ is the optimal solution if and only if it satisfies the \gls{KKT} conditions (cf. \cite[pp.244]{Boyd:16}), one of which is that the gradient of Lagrangian $L(\boldsymbol{s},\mu) = -a_1(\boldsymbol{s}) + \mu g(\boldsymbol{s})$ is zero. Therefore, we have
\begin{align*}
\frac{\partial}{\partial s_{\tuple}}L(\boldsymbol{s}^*,\mu^*)&= -\fPAltNodeEpoch[s^*_{\tuple}](1-\pi_0\circ\signal_{\tuple}) + \mu(-\alpha \fPAltNodeEpoch[s^*_{\tuple}](1-\pi_0\circ\signal_{\tuple}) + \\
&(1-\alpha)\fPNulNodeEpoch[s^*_{\tuple}]\pi_0\circ\signal_{\tuple})=0.
\end{align*}
Rearranging the terms we obtain
\begin{align*}
\lfdr(s^*_{\tuple};\signal_{\tuple}) = \frac{1+\mu\alpha}{1+\mu},
\end{align*}
which completes the proof.

\section{Proof of \cref{thm:unif_conv_para}}

This theorem follows directly from the consistency of \gls{MLE}. Given $\bXi$, from \cref{eq:pdf_pvt|gam}, the \gls{pdf} of $\p,\tuple$ can be written as 
\begin{align*}
&f_{\mathrm{P},\tupleRV}(p,(v,t)\mid\signal) = \pi_0\Big(\sum_{k_1=1}^{K_1}\sum_{k_2=1}^{K_2} \upxi_{k_1,k_2}\cdot\phi_{k_1}(v)\psi_{k_2}(t)\Big)f_0(p;(v,t))\rho(v,t) \\
&+ \Big(1-\pi_0\Big(\sum_{k_1=1}^{K_1}\sum_{k_2=1}^{K_2} \upxi_{k_1,k_2}\cdot\phi_{k_1}(v)\psi_{k_2}(t)\Big)\Big) f_1\Big(p;\sum_{k_1=1}^{K_1}\sum_{k_2=1}^{K_2} \upxi_{k_1,k_2}\cdot\phi_{k_1}(v)\psi_{k_2}(t)\Big)\rho(v,t).
\end{align*}
For every $p$ and $(v,t)$, this \gls{pdf} is continuous \gls{wrt} $\bXi$. By \cref{cond:identi_pvt} in \cref{asp:bandlimit}, we know that different values of $\bXi$ lead to different joint distributions of $\p,\tupleRV$. According to \cite[Theorem 9.9]{Kee:10} and \cref{lem:det_to_rand}, $\widehat{\bXi}\convp\bXi$ given $\signal$. By \cref{cond:cont_basis} in \cref{asp:bandlimit} and the compactness of $\calJ$, suppose $\set{\abs{\phi_{k_1}(v)\psi_{k_2}(t)}}$ are uniformly bounded by $b$. We have
\begin{align*}
\abs{\widehat{\signal}(v,t) - \signal(v,t)}
&\leq \sum_{k_1=1}^{K_1}\sum_{k_2=1}^{K_2} \abs{\widehat{\upxi}_{k_1,k_2} - \upxi_{k_1,k_2}} \cdot\abs{\phi_{k_1}(v)\psi_{k_2}(t)}\\
&\leq b\cdot\sum_{k_1=1}^{K_1}\sum_{k_2=1}^{K_2} \abs{\widehat{\upxi}_{k_1,k_2} - \upxi_{k_1,k_2}}.
\end{align*}
Note that the right-hand side does not depend on $(v,t)$ and converges to zero in probability given $\bXi$. This concludes the proof.

\section{Proof of \cref{thm:limit_FDR,thm:limit_pow}}

To prove \cref{thm:limit_FDR,thm:limit_pow}, we first prove the following lemmas. First, \cref{lem:det_to_rand} allows us to prove results under a probability measure conditioned on $\signal$. Second, \cref{lem:limit_conts_cdf_ratio,lem:equal_infs,lem:exs_small_ratio,lem:L1-sum_to0} indicate that under \cref{asp:identify,asp:bandlimit,asp:regularity}, the conditions \cite[C1-C3]{CaoChenZhang:J22} hold. Then \cref{thm:limit_FDR,thm:limit_pow} can be obtained by using \cite[Theorem 2.3, Theorem 3.5]{CaoChenZhang:J22} and \cref{lem:det_to_rand}. We denote the set of all sample paths of $\signal(v,t)$ as $\Gamma$. Suppose $\Gamma$ is a measure space with $\sigma$-algebra $\calF_\Gamma$, then the random $\signal$ induces a probability measure $\bbP_\signal$ on $(\Gamma, \calF_\Gamma)$.

\begin{LemmaA}\label{lem:det_to_rand}
 Let $\bm{\upiota}$ represent $(\p,\tuple,\HTrue)_{\tuple\in\samReg}$ for simplicity. Denote $\widetilde{\calJ}:= [0,1]\times\calJ\times\set{\HNul,\HAlt}$. Let $\calF_\numTests$ be the $\sigma$-algebra of the space $\widetilde{\calJ}^\numTests\times\Gamma$. Define 
\begin{align*}
\mu:\Gamma\times\calF_\numTests&\to [0,1]\\
(\gamma,A) 
&\mapsto \int_{\widetilde{\calJ}^\numTests}\Ind\set{((\iota_i)_{i=1}^\numTests,\gamma
)\in A}\prod\limits_{i=1}^\numTests f_{\rVar{P}, \rVar{H}, \tupleRV}(\iota_i\mid\gamma)\ud \iota_i.
\end{align*}
Then $\mu(\signal(\omega),A)$ is a \gls{r.c.d.} of $((\p,\tuple,\HTrue)_{\tuple\in\samReg},\signal)$ given $\signal$ \cite[Section 4.1.3]{Dur:19}.
\end{LemmaA}
\begin{proof}
We prove this lemma by definition of \gls{r.c.d.}. As a stochastic process, $\signal$ can be written as $\signal(\omega, (v,t))$. In this lemma, we abuse the notation to write $\signal(\omega, \cdot)$ as $\signal(\omega)$. First, we show that for each $A\in\calF_\numTests$, $\mu(\signal(\omega),A)$ is a version of $\bbP((\bm{\upiota},\signal)\in A\given\signal)$. Let $F\in\calF_\Gamma$. Then we have 
\begin{align*}
\bbE[\mu(\signal(\omega),A)\Ind \set{\signal(\omega)\in F}] 
&= \int_\Omega\int_{\widetilde{\calJ}^\numTests}\Ind\set{((\iota_i)_{i=1}^\numTests,\signal(\omega)
)\in A}\prod\limits_{i=1}^\numTests f_{\rVar{P}, \rVar{H}, \tupleRV}(\iota_i\mid\signal(\omega))\ud \iota_i \Ind \set{\signal(\omega)\in F} \ud\bbP(\omega)\\
&= \int_\Gamma\int_{\widetilde{\calJ}^\numTests}\Ind\set{((\iota_i)_{i=1}^\numTests,\gamma
)\in A}\prod\limits_{i=1}^\numTests f_{\rVar{P}, \rVar{H}, \tupleRV}(\iota_i\mid\gamma)\ud \iota_i \Ind \set{\gamma\in F} \ud\bbP_\signal(\gamma) \\
&= \bbE[\Ind\set{(\bm{\upiota},\signal)\in A}\Ind\set{\signal\in F}].
\end{align*}
Second, it can be shown that for each $\gamma\in\Gamma$, $\mu(\gamma, \cdot)$ is a probability measure on $\widetilde{\calJ}^\numTests\times\Gamma$, hence $\mu(\gamma, A)$ is a \gls{r.c.d.} of $\bm{\upiota}$ given $\signal = \gamma$.
\end{proof}

From \cref{lem:det_to_rand} we know that when proving results under a probability measure conditioned on $\signal$, we can regard $\signal$ as being fixed. In the lemmas and proofs below, for clarity, we may not write variables dependent on $\bXi$ as random variables when we are proving under the probability measure conditioned on $\bXi$.

\begin{LemmaA}\label{lem:limit_conts_cdf_ratio}
Suppose \cref{asp:identify,asp:bandlimit,asp:regularity} hold. Under the probability measure conditioned on $\bXi$, we have
\begin{align*}
\rVar{d}_{0,\numTests}(\eta)&\convp \rVar{d}_0(\eta), \\
\rVar{d}_{1,\numTests}(\eta)&\convp \rVar{d}_1(\eta), \\
\rVar{d}'_{1,\numTests}(\eta)&\convp \rVar{d}_1(\eta)
\end{align*}
as $\numTests\to\infty$, where $\rVar{d}_0(\eta)$ and $\rVar{d}_1(\eta)$ are continuous \gls{wrt} $\eta$.
\end{LemmaA}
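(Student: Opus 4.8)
The plan is to first reduce the statement to one about i.i.d.\ sample averages and then treat the continuity of the two limiting functions separately.

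First I would invoke \cref{lem:det_to_rand} to condition on $\bttXi$ (equivalently, to treat the sample path $\ttgamma$ as fixed). Under this conditioning the triples $(\sfp_m,(\sfv_m,\sft_m),\tttheta_m)$, $m=1,\dots,M$, are i.i.d., so each of $\sfd_{0,M}(\eta)$, $\sfd_{1,M}(\eta)$ and $\sfd'_{1,M}(\eta)$ is the empirical mean of i.i.d.\ terms that are bounded (the first two because $\lfdr\in[0,1]$, the third because $(1-\tttheta_m)\Ind\set{\cdot}\in\set{0,1}$). For each fixed $\eta$ the \gls{WLLN} then gives convergence in probability to the common expectation of a single term. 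Evaluating these expectations identifies the limits: $\bbE[\Ind\set{\lfdr\leq\eta}\mid\bttXi]=\sfd_0(\eta)$ and $\bbE[\lfdr\,\Ind\set{\lfdr\leq\eta}\mid\bttXi]=\sfd_1(\eta)$, while for $\sfd'_{1,M}$ I would condition on $\sfp_m,(\sfv_m,\sft_m)$ inside the single-term expectation and use $\bbP(\tttheta_m=0\mid\sfp_m,\ttgamma(\sfv_m,\sft_m))=\lfdr(\sfp_m;\ttgamma(\sfv_m,\sft_m))$ (established in \cref{sect:stat_model}) to obtain $\bbE[(1-\tttheta_m)\Ind\set{\lfdr\leq\eta}\mid\bttXi]=\sfd_1(\eta)$ as well. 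This yields all three convergences.

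For the continuity claim, the key observation is that, for each fixed $(v,t)$, the map $p\mapsto\lfdr(p;\ttgamma(v,t))$ is continuous and strictly increasing on $(0,1]$. Continuity follows from \cref{cond:conts_pdf} together with $f_{\mathrm{mix}}(\cdot\mid\ttgamma(v,t))\geq \pi_0\circ\ttgamma(v,t)\,f_0(\cdot;(v,t))>0$ on $(0,1]$ (using \cref{cond:pi0_pos,cond:f0_pos}); strict monotonicity follows because, writing $\lfdr=(1+\tfrac{1-\pi_0}{\pi_0}\tfrac{f_1}{f_0})^{-1}$, the ratio $f_1/f_0$ is strictly decreasing by \cref{cond:monotone_pdf,cond:monot_f_0}. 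Hence for each $(v,t)$ the level set $\set{p:\lfdr(p;\ttgamma(v,t))=\eta}$ contains at most one point, which carries zero mass under the conditional density $f_{\mathrm{mix}}(\cdot\mid\ttgamma(v,t))$ of $\sfp_m$. Integrating this conditional probability against $\rho$ shows that the law $G$ of $\lfdr(\sfp_m;\ttgamma(\sfv_m,\sft_m))$ (conditional on $\bttXi$) is atomless, i.e.\ $\sfd_0=G$ is continuous; and since $\sfd_1(\eta)=\int_{[0,\eta]}x\,dG(x)$ has bounded integrand, $\sfd_1$ inherits continuity from $G$.

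The \gls{WLLN} step is routine; the delicate part is the atomlessness argument, which is exactly why the strict monotonicity in \cref{cond:monotone_pdf} (rather than the weaker \cref{cond:monot_f_1}) is needed: without it $\lfdr(\cdot;\ttgamma(v,t))$ could be flat on an interval and the lfdr law could carry an atom, breaking continuity of $\sfd_0$. The one technical point to handle carefully is the measurability of $(v,t)\mapsto\bbP(\lfdr(\sfp_m;\ttgamma(v,t))=\eta\mid(v,t),\bttXi)$ so that Fubini applies when integrating against $\rho$; this is covered by the blanket measurability convention adopted at the end of \cref{sect:stat_model} together with the joint continuity in \cref{cond:conts_pdf}. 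I note in passing that, since each $\sfd_{\cdot,M}(\eta)$ is monotone in $\eta$ and the limits are continuous and monotone, the pointwise convergence can be upgraded to uniform convergence in $\eta$ by a Pólya-type argument, which is what the downstream \cref{thm:limit_FDR,thm:limit_pow} will exploit.
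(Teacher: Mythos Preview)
Your proposal is correct and follows essentially the same approach as the paper: i.i.d.\ structure plus \gls{WLLN} for the convergences, then atomlessness of the lfdr law (via strict monotonicity from \cref{cond:monotone_pdf}) for continuity. The only cosmetic difference is that the paper checks left-continuity of $\sfd_0$ explicitly via the \gls{DCT} and says $\sfd_1$ is ``similar,'' whereas you argue atomlessness directly and obtain $\sfd_1$'s continuity from the representation $\sfd_1(\eta)=\int_{[0,\eta]}x\,\mathrm{d}G(x)$; both routes are equivalent.
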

\begin{proof}
Note that \cref{eq:def:D_1M,eq:def:V_M,eq:def:D_0M} are summations of \gls{iid} random variables (conditioned on $\bXi$, same as below), and each random variable takes value in $[0,1]$, thus has finite expectation and variance. 
Then by \gls{WLLN}, we know that $\rVar{d}_{0,\numTests}(\eta)\convp\bbE[\idc{\lfdr(\pRVNodeEpochRV;\signal(\nodeRV,\epochRV;\bXi))\leq\eta} \given \bXi]$, which proves the existence of $\rVar{d}_0(\eta)$. Besides, since $\bbE[\rVar{d}_{1,\numTests}(\eta)\given \bXi] = \bbE[\rVar{d}'_{1,\numTests}(\eta)\given \bXi]$, we know that the limits of $\rVar{d}_{1,\numTests}(\eta)$ and $\rVar{d}'_{1,\numTests}(\eta)$ are the same, both equal to 
\begin{align*}
\bbE[\idc{\rVar{H}_{(\nodeRV,\epochRV)} = \HNul}\Ind\set{\lfdr(\pRVNodeEpochRV;\signal(\nodeRV,\epochRV;\bXi))\leq\eta}\given \bXi],
\end{align*}
which then proves the existence of $\rVar{d}_1(\eta)$.

Next, we prove the continuity of $\rVar{d}_0(\eta)$ and $\rVar{d}_1(\eta)$. For $\rVar{d}_0(\eta)$, note that it is a \gls{cdf}, so it is right continuous \cite[Theorem 1.5.1]{HogMcCra:19}. To verify that it is left continuous, we calculate the limit
\begin{align*}
\lim_{\eta'\to\eta^{-}} \abs{\rVar{d}_0(\eta) - \rVar{d}_0(\eta')} &= \lim_{\eta'\to\eta^{-}}\bbE[\Ind\set{\eta'<\lfdr(\pRVNodeEpochRV;\signal(\nodeRV,\epochRV;\bXi))\leq\eta}\given \bXi] \\
&=\bbE[\lim_{\eta'\to\eta^{-}}\Ind\set{\eta'<\lfdr(\pRVNodeEpochRV;\signal(\nodeRV,\epochRV;\bXi))\leq\eta}\given \bXi]\\
&=\bbE[\Ind\set{\lfdr(\pRVNodeEpochRV;\signal(\nodeRV,\epochRV;\bXi))=\eta}\given \bXi]\\
&=\int_{\calJ}\int_{(0,1]} \Ind\set{\lfdr(p;\signal(v,t;\bXi))=\eta} \fP[p\mid\signal(v,t;\bXi))]\rho(v,t)\ud p  \ud(v,t),
\end{align*}
where the second equality is derived by \gls{DCT} \cite[Theorem 1.13]{SteSha:05}. By \cref{cond:monotone_pdf} in \cref{asp:regularity}, $\lfdr(p;\signal(v,t;\bXi))$ strictly increases in $p$. Therefore, for every fixed $(v,t)$, $\abs{\set{\lfdr(p;\signal((v,t);\bXi))=\eta}}\leq1$. Since $\fP[p\mid\signal(v,t;\bXi)]$ is continuous in $p$, we know that the last line in the above equation equals zero, hence $\lim\limits_{\eta'\to\eta^-} \abs{\rVar{d}_0(\eta) - \rVar{d}_0(\eta')}=0$, i.e., $\rVar{d}_0(\eta)$ is continuous on $[0,1]$. The continuity of $\rVar{d}_1(\eta)$ can be proved similarly.
\end{proof}

\begin{LemmaA}\label{lem:equal_infs}
Let $\tilde{F}_0(\eta;\signal(v,t;\bXi))$ be the \gls{cdf} of $\lfdr(\pRVNodeEpochRV;\signal(\nodeRV,\epochRV;\bXi))$ given $\tupleRV=\tuple$ and $\HTrueRV = \HNul$. Specifically, it can be calculated as
\begin{align*}
\tilde{F}_0(\eta;\signal(v,t;\bXi)):= \int_{(0,1]} \Ind\set{\lfdr(p;\signal(v,t;\bXi))\leq\eta} \fPNulNodeEpoch\ud p.
\end{align*}
Then the following quantities are equal:
\begin{align*}
\eta_{\infty,1}'&:=\inf\set{\eta\given \rVar{d}_1(\eta)>0}, \\
\eta_{\infty,2}'&:=\inf\set{\eta\given\rho(\set{(v,t)\given\tilde{F}_0(\eta;\signal(v,t;\bXi))>0})>0}, \\
\eta_{\infty,3}' &:= \min_{(v,t)\in\calJ} \chi(v,t;\bXi).
\end{align*}
We denote $\eta_\infty':= \eta_{\infty,1}' = \eta_{\infty,2}' = \eta_{\infty,3}'$.
\end{LemmaA}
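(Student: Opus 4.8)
The plan is to route all three quantities through a single pointwise characterization: for every $(v,t)\in\calJ$ and every threshold $\eta$,
\begin{align*}
\tilde{F}_0(\eta;\gamma(v,t;\bttXi))>0 \iff \eta>\chi(v,t;\bttXi).
\end{align*}
To establish this, I would first observe that by \cref{cond:monotone_pdf} together with \cref{cond:monot_f_0,cond:f0_pos,cond:conts_pdf}, the map $p\mapsto\lfdr(p;\gamma(v,t;\bttXi))$ is continuous and \emph{strictly} increasing on $(0,1]$ (write $\lfdr = (1+\tfrac{1-\pi_0}{\pi_0}\tfrac{f_1}{f_0})^{-1}$ and note $f_1/f_0$ strictly decreases). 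Hence its one-sided limit as $p\to0^+$ exists, equals its infimum over $p>0$, namely $\chi(v,t;\bttXi)$, and is \emph{not attained} for any $p>0$. For the forward direction, $\tilde{F}_0(\eta;\cdot)>0$ forces, via $f_0(p;(v,t))>0$ for $p>0$, the existence of some $p_0>0$ with $\lfdr(p_0;\cdot)\leq\eta$; strict monotonicity then yields $\chi(v,t;\bttXi)<\lfdr(p_0;\cdot)\leq\eta$. Conversely, if $\eta>\chi(v,t;\bttXi)$, the limit definition gives a $\delta>0$ with $\lfdr(p;\cdot)<\eta$ on $(0,\delta)$, and integrating the strictly positive $f_0$ over this interval gives $\tilde{F}_0(\eta;\cdot)>0$.

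With this equivalence in hand, $\eta_{\infty,1}'=\eta_{\infty,2}'$ would follow from an integral representation of $\sfd_1$. Conditioning on $(\sfv_m,\sft_m)$ and on $\tttheta_m$ in the hierarchical model, I would write
\begin{align*}
\sfd_1(\eta)=\int_\calJ \pi_0\circ\gamma(v,t;\bttXi)\,\tilde{F}_0(\eta;\gamma(v,t;\bttXi))\,\rho(\ud(v,t)).
\end{align*}
Since $\pi_0\circ\gamma(v,t;\bttXi)>0$ everywhere by \cref{cond:pi0_pos} and the integrand is nonnegative, this integral is positive if and only if the set on which $\tilde{F}_0(\eta;\cdot)>0$ has positive $\rho$-measure. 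That is precisely the identity $\set{\eta\given\sfd_1(\eta)>0}=\set{\eta\given\rho(\set{(v,t)\given\tilde{F}_0(\eta;\gamma(v,t;\bttXi))>0})>0}$, and taking infima gives $\eta_{\infty,1}'=\eta_{\infty,2}'$.

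Finally, for $\eta_{\infty,2}'=\eta_{\infty,3}'$ I would substitute the pointwise equivalence, which rewrites the defining set of $\eta_{\infty,2}'$ as $\set{\eta\given\rho(\set{(v,t)\given\chi(v,t;\bttXi)<\eta})>0}$. Continuity of $\chi(\cdot;\bttXi)$ on the compact set $\calJ$ (\cref{cond:conts_limit_lfdr} and the compactness of $\calV\times\calT$ from \cref{sect:prelim:GGSP}) guarantees that $\min_{(v,t)\in\calJ}\chi(v,t;\bttXi)$ is attained and that $\set{(v,t)\given\chi(v,t;\bttXi)<\eta}$ is open. For $\eta\leq\eta_{\infty,3}'$ this set is empty, hence of $\rho$-measure zero; for $\eta>\eta_{\infty,3}'$ it is nonempty and open, so the positivity of $\rho$ forces positive measure. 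Taking the infimum over admissible $\eta$ then yields $\eta_{\infty,2}'=\eta_{\infty,3}'$, completing the chain of equalities.

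The main obstacle I anticipate is the careful handling of the boundary case $\eta=\chi(v,t;\bttXi)$ in the pointwise equivalence: the argument hinges on $\chi$ being an \emph{unattained} infimum, which rests on upgrading the merely non-increasing \cref{cond:monot_f_1} to the \emph{strict} monotonicity of \cref{cond:monotone_pdf}. A secondary subtlety is to make explicit the sense in which $\rho$ is a ``positive'' measure, namely that it charges every nonempty open subset of $\calJ$, since this is exactly what converts openness of $\set{(v,t)\given\chi(v,t;\bttXi)<\eta}$ into positive measure in the last step.
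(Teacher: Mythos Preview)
Your proposal is correct and rests on the same ingredients as the paper's proof: the integral representation $\sfd_1(\eta)=\int_\calJ \pi_0\circ\gamma\,\tilde F_0(\eta;\cdot)\,\ud\rho$, strict monotonicity of $p\mapsto\lfdr(p;\cdot)$, continuity of $\chi$ on the compact $\calJ$, and positivity of $\rho$ on nonempty open sets. The organization differs. The paper closes a cycle of three one-sided inequalities $\eta_{\infty,2}'\le\eta_{\infty,1}'$, $\eta_{\infty,3}'\le\eta_{\infty,2}'$, $\eta_{\infty,1}'\le\eta_{\infty,3}'$, the last by contradiction using an intermediate $\eta_{\infty,4}'$. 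You instead isolate the pointwise equivalence $\tilde F_0(\eta;\cdot)>0\iff\eta>\chi(v,t;\bttXi)$ and deduce two direct set identities, which is a cleaner packaging: it makes explicit that the boundary case $\eta=\chi$ is excluded precisely because the strictly increasing $\lfdr$ never attains its right-limit at $0$, and it upgrades the paper's one-way implication $\sfd_1(\eta)>0\Rightarrow\rho(\{\tilde F_0>0\})>0$ to an equivalence. Both arguments are short; yours has the advantage that the role of \cref{cond:monotone_pdf} (strict rather than weak monotonicity) is visibly localized to the pointwise lemma.
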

\begin{proof}
We prove the inequalities $\eta_{\infty,2}'\leq \eta_{\infty,1}'$, $\eta_{\infty,3}'\leq \eta_{\infty,2}'$ and $\eta_{\infty,1}'\leq \eta_{\infty,3}'$. Note that $\rVar{d}_1(\eta)$ can be calculated by
\begin{align}\label{eq:D_1}
\rVar{d}_1(\eta) &= \int_{\calJ}\int_{(0,1]}\sum_{H\in\set{\HNul,\HAlt}}\idc{\lfdr(p;\signal(v,t;\bXi))\leq\eta,H = \HNul}f_{\rVar{P}, \rVar{H}, \tupleRV}(p,H,\tuple\mid\signal)\ud p\ud\tuple \nn
&= \int_{\calJ}\pi_0\circ\signal(v,t;\bXi)\tilde{F}_0(\eta;\signal(v,t;\bXi))\rho\tuple\ud\tuple.
\end{align}
For any $\eta$ such that $\rVar{d}_1(\eta)>0$, by \cref{eq:D_1} and \cref{cond:pi0_pos} in \cref{asp:regularity} we know that 
\begin{align}\label{eq:eta_pos_measure}
\rho(\set{(v,t)\given\tilde{F}_0(\eta;\signal(v,t;\bXi))>0})>0,
\end{align}
hence $\eta_{\infty,2}'\leq \eta_{\infty,1}'$. Next, we want to prove that for any $\eta$ satisfying \cref{eq:eta_pos_measure}, $\eta\geq \min\limits_{(v,t)\in\calJ}\chi(v,t;\bXi)$. We prove this claim by contradiction. Suppose there exists $\eta_{\infty,2}'\leq\eta< \min\limits_{(v,t)\in\calJ}\chi(v,t;\bXi)=\eta_{\infty,3}'$. Then for any $(v,t)\in\calJ$, $\eta$ is below the range of $\lfdr(\cdot;\signal(v,t;\bXi))$, hence $\tilde{F}_0(\eta;\signal(v,t;\bXi)) = 0$, which contradicts with \cref{eq:eta_pos_measure}. Hence, $\eta_{\infty,3}'\leq \eta_{\infty,2}'$. 

Finally, we prove $\eta_{\infty,1}'\leq \eta_{\infty,3}'$ by contradiction. Suppose $\eta_{\infty,3}'< \eta_{\infty,1}'$. Then there exists $\eta\in(\eta_{\infty,3}', \eta_{\infty,1}')$. Since $\eta<\eta_{\infty,1}'$, we know that $\rVar{d}_1(\eta)=0$. On the other hand, since $\eta>\eta_{\infty,3}'$, and $\chi(v,t;\bXi)$ is continuous (cf.\ \cref{cond:conts_limit_lfdr} in \cref{asp:regularity}), we know that $\set{(v,t)\given\chi(v,t;\bXi)< \eta}$ is a non-empty open set. Since $\rho$ is a strictly positive measure, we know that this set has a positive measure \gls{wrt} $\rho$. Besides, note that $\chi(v,t;\bXi)< \eta$ implies $\tilde{F}_0(\eta;\signal(v,t;\bXi))>0$, therefore 
\begin{align*}
\rho(\set{(v,t)\given\tilde{F}_0(\eta;\signal(v,t;\bXi))>0})>0,
\end{align*}
and 
\begin{align*}
\rVar{d}_1(\eta) \geq \int_{\calJ}\Ind\set{(v,t)\given\tilde{F}_0(\eta;\signal(v,t;\bXi))>0} \tilde{F}_0(\eta;\signal(v,t;\bXi))\min_{\tuple\in\calJ}\pi_0\circ\signal(v,t;\bXi) \ud \rho\tuple >0,
\end{align*}
which leads to a contradiction with $\rVar{d}_1(\eta)=0$. This completes the proof.
\end{proof}

\begin{LemmaA}\label{lem:exs_small_ratio}
Under \cref{asp:identify,asp:regularity}, there exists $\eta_\infty\in(0,1]$ such that $\rVar{r}(\eta_\infty)<\alpha$.
\end{LemmaA}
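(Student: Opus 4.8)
The plan is to combine a trivial ``conditional-mean'' bound $\sfr(\eta)\leq\eta$, which falls straight out of the definitions of $\sfd_0$ and $\sfd_1$, with the identification of the common infimum $\eta_\infty'$ furnished by \cref{lem:equal_infs}. Once these are in hand, essentially any threshold strictly between $\eta_\infty'$ and $\alpha$ will do, so the only genuine work is certifying that $\eta_\infty'<\alpha$.

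First I would record the key inequality. Using the conditional-probability interpretation of the lfdr, namely $\bbE[(1-\tttheta_m)\mid\sfp_m,\ttgamma]=\lfdr(\sfp_m;\gamma(\sfv_m,\sft_m;\bttXi))$, the numerator can be rewritten as $\sfd_1(\eta)=\bbE[\lfdr(\sfp_m;\gamma(\sfv_m,\sft_m;\bttXi))\,\Ind\set{\lfdr(\sfp_m;\gamma(\sfv_m,\sft_m;\bttXi))\leq\eta}\given\bttXi]$. On the event $\set{\lfdr\leq\eta}$ the integrand is at most $\eta$, so $\sfd_1(\eta)\leq\eta\,\sfd_0(\eta)$ and hence $\sfr(\eta)=\sfd_1(\eta)/\sfd_0(\eta)\leq\eta$ whenever $\sfd_0(\eta)>0$. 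This reduces the lemma to producing a single admissible threshold $\eta<\alpha$ with $\sfd_0(\eta)>0$.

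Next I would establish $\eta_\infty'<\alpha$, where by \cref{lem:equal_infs} I may take $\eta_\infty'=\min_{(v,t)\in\calJ}\chi(v,t;\bttXi)$ (the minimum is attained because $\chi$ is continuous on the compact joint domain $\calJ$, cf.\ \cref{cond:conts_limit_lfdr} and \cref{sect:prelim:GGSP}). By \cref{cond:small_lfdr} there exist $(v_0,t_0)\in\calJ$ and $p_0>0$ with $\lfdr(p_0;\gamma(v_0,t_0;\bttXi))<\alpha$; since $\lfdr(\cdot;\gamma(v_0,t_0;\bttXi))$ is increasing in $p$ by \cref{cond:monotone_pdf}, its limit satisfies $\chi(v_0,t_0;\bttXi)=\lim_{p\to0^+}\lfdr(p;\gamma(v_0,t_0;\bttXi))\leq\lfdr(p_0;\gamma(v_0,t_0;\bttXi))<\alpha$, whence $\eta_\infty'\leq\chi(v_0,t_0;\bttXi)<\alpha$.

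Finally I would fix any $\eta_\infty\in(\eta_\infty',\alpha)$, which is a non-empty subinterval of $(0,1]$. Because $\eta_\infty>\eta_\infty'=\inf\set{\eta\given\sfd_1(\eta)>0}$ (again by \cref{lem:equal_infs}) and $\sfd_1$ is non-decreasing, $\sfd_1(\eta_\infty)>0$; and since $\lfdr\leq1$ forces $\sfd_0\geq\sfd_1$ pointwise, this gives $\sfd_0(\eta_\infty)\geq\sfd_1(\eta_\infty)>0$, so $\sfr(\eta_\infty)$ is well defined and, by the first step, $\sfr(\eta_\infty)\leq\eta_\infty<\alpha$. The main (and essentially only) obstacle is the middle step: converting the pointwise smallness of the lfdr asserted in \cref{cond:small_lfdr} into smallness of the global minimum $\eta_\infty'$. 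The subtlety is that \cref{cond:small_lfdr} controls the lfdr at some interior $p_0$, whereas $\eta_\infty'$ is governed by the limiting values $\chi=\lim_{p\to0^+}\lfdr$; monotonicity of the lfdr in $p$ (\cref{cond:monotone_pdf}) is exactly what bridges the two, and the identity $\eta_\infty'=\min_{(v,t)}\chi(v,t;\bttXi)=\inf\set{\eta\given\sfd_1(\eta)>0}$ from \cref{lem:equal_infs} does the rest. The remaining bookkeeping---the bound $\sfr\leq\eta$ and the well-definedness of the ratio---is routine.
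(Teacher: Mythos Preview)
Your proposal is correct and follows essentially the same approach as the paper: reduce to the conditional-mean bound $\sfr(\eta)\leq\eta$ for $\eta>\eta_\infty'$, then use \cref{lem:equal_infs} together with \cref{cond:small_lfdr} to show $\eta_\infty'<\alpha$. If anything, you are more careful than the paper in spelling out how monotonicity of $\lfdr$ in $p$ (\cref{cond:monotone_pdf}) converts the inequality $\lfdr(p_0;\cdot)<\alpha$ into $\chi(v_0,t_0;\bttXi)<\alpha$, which the paper states without justification.
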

\begin{proof}
According to \cref{lem:equal_infs}, for any $\eta>\eta_\infty'$, we have $\rVar{d}_0(\eta)\geq \rVar{d}_1(\eta)>0$ and 
\begin{align*}
\frac{\rVar{d}_1(\eta)}{\rVar{d}_0(\eta)} = \frac{\bbE[\lfdr(\pRVNodeEpochRV;\signal(\nodeRV,\epochRV;\bXi))\Ind\set{\lfdr(\pRVNodeEpochRV;\signal(\nodeRV,\epochRV;\bXi))\leq\eta}\given \bXi]}{\bbE[\Ind\set{\lfdr(\pRVNodeEpochRV;\signal(\nodeRV,\epochRV;\bXi))\leq\eta}\given \bXi]}\leq\eta.
\end{align*}
Therefore, it suffices to prove that $\eta_\infty'<\alpha$. By \cref{lem:equal_infs}, $\eta_\infty' = \min\limits_{(v,t)\in\calJ}\chi(v,t;\bXi)$. According to \cref{cond:small_lfdr} in \cref{asp:regularity}, there exists a $(v,t)$ such that $\chi(v,t;\bXi)<\alpha$. This completes the proof.
\end{proof}

\begin{LemmaA}\label{lem:L1-sum_to0}
Under \cref{asp:identify,asp:bandlimit,asp:regularity}, we have
\begin{align*}
\ofrac{\numTests}\sum_{\tuple\in\samReg}\abs*{\lfdr(\p;\signal(v,t;\widehat{\bXi})) - \lfdr(\p;\signal(v,t;\bXi))}\convp 0,
\end{align*}
under the probability measure conditioned on $\bXi$.
\end{LemmaA}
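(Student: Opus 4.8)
The plan is to exploit the consistency $\widehat{\bttXi}\convp\bttXi$ established in \cref{thm:unif_conv_para}, together with the elementary bound $\lfdr(p;\gamma(v,t;\bXi'))\in[0,1]$ for every $p\in(0,1]$ (indeed $\pi_0\circ\gamma(v,t;\bXi')f_0(p;(v,t))\geq0$ and $f_{\mathrm{mix}}\geq\pi_0\circ\gamma(v,t;\bXi')f_0(p;(v,t))>0$ by \cref{cond:pi0_pos,cond:f0_pos}). By \cref{lem:det_to_rand} I work under the measure conditioned on $\bttXi$, so I treat the true parameter as a fixed element of $\calK$ and the triples $(\sfp_m,(\sfv_m,\sft_m),\tttheta_m)$ as \gls{iid}. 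Write $g(p,(v,t);\bXi'):=\lfdr(p;\gamma(v,t;\bXi'))$ for $\bXi'\in\calK$. The obstacle is that $g$ is \emph{not} uniformly continuous in $\bXi'$ over the full range of $p$: as $p\to0^{+}$ the alternative density $f_1$ may blow up (e.g.\ a Beta density), so the denominator of the lfdr is not bounded away from $0$ near $p=0$. I therefore introduce a truncation level $\delta>0$, handle $p\geq\delta$ by uniform continuity on a compact set, and control the region $p<\delta$ using the uniform bound $g\leq1$ together with the small sampling mass there.

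Fix $\epsilon>0$. For the truncation, note that for each $(v,t)$ the map $p\mapsto f_{\mathrm{mix}}(p\mid\gamma(v,t;\bttXi))$ is a density on $(0,1]$, so $\int_0^\delta f_{\mathrm{mix}}(p\mid\gamma(v,t;\bttXi))\ud p\to0$ as $\delta\to0^{+}$; since the inner integral is bounded by $1$, the \gls{DCT} gives
\[
\bbP(\sfp_m\leq\delta\mid\bttXi)=\int_{\calJ}\int_0^\delta f_{\mathrm{mix}}(p\mid\gamma(v,t;\bttXi))\ud p\,\rho(v,t)\ud(v,t)\longrightarrow 0 .
\]
Hence I may choose $\delta>0$ with $\bbP(\sfp_m\leq\delta\mid\bttXi)<\epsilon/4$. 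On the compact set $[\delta,1]\times\calJ\times\calK$ (recall $\calJ$ is compact by \cref{sect:prelim:GGSP} and $\calK$ is compact by \cref{asp:bandlimit}) the map $g$ is continuous: its numerator and $f_{\mathrm{mix}}$ are continuous by \cref{cond:cont_basis,cond:conts_pdf} and the continuity of $\pi_0$, while $f_{\mathrm{mix}}\geq(\min_{(v,t)\in\calJ,\bXi'\in\calK}\pi_0\circ\gamma(v,t;\bXi'))(\min_{[\delta,1]\times\calJ}f_0)>0$ by \cref{cond:pi0_pos,cond:f0_pos}. A continuous function on a compact set is uniformly continuous, so there is $\nu>0$ such that every $\bXi'\in\calK$ with $\norm{\bXi'-\bttXi}<\nu$ satisfies $\sup_{p\geq\delta,\,(v,t)\in\calJ}\abs{g(p,(v,t);\bXi')-g(p,(v,t);\bttXi)}<\epsilon/2$.

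With $\delta$ and $\nu$ fixed, I split the empirical average according to $\sfp_m\geq\delta$ or $\sfp_m<\delta$. On the event $\set{\norm{\widehat{\bttXi}-\bttXi}<\nu}$ the uniform-continuity bound makes the $\sfp_m\geq\delta$ contribution at most $\epsilon/2$, since each of its (at most $M$) summands is below $\epsilon/2$. The $\sfp_m<\delta$ contribution is at most $\tfrac1M\abs{\set{m:\sfp_m<\delta}}$ because each summand is bounded by $1$; by the \gls{WLLN} this fraction converges in probability (given $\bttXi$) to $\bbP(\sfp_m<\delta\mid\bttXi)<\epsilon/4$, so $\bbP(\tfrac1M\abs{\set{m:\sfp_m<\delta}}\geq\epsilon/2\mid\bttXi)\to0$. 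Consequently the event that the full average exceeds $\epsilon$ is contained in $\set{\norm{\widehat{\bttXi}-\bttXi}\geq\nu}\cup\set{\tfrac1M\abs{\set{m:\sfp_m<\delta}}\geq\epsilon/2}$, whose conditional probability vanishes as $M\to\infty$: the first term by \cref{thm:unif_conv_para} and the second by the \gls{WLLN} estimate just given. Since $\epsilon>0$ was arbitrary, this yields the claimed conditional convergence in probability. The main obstacle is precisely the behaviour near $p=0$, and the truncation resolves it because there $g$, though potentially badly behaved, stays uniformly bounded by $1$ while the sampling mass it receives is made uniformly small.
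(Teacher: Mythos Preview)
Your argument is correct. Both you and the paper handle the potential blow-up near $p=0$ by a truncation at level $\delta$ and the uniform bound $\lfdr\in[0,1]$, so the overall architecture is the same. The execution on $[\delta,1]$, however, differs: the paper takes $\sup_{(v,t)\in\calJ}$ of the difference, splits it into two pieces $T_1,T_2$ (perturbing denominator then numerator), applies the mean value theorem using \cref{cond:conts_pd} to bound $\abs{f_{\mathrm{mix}}(\cdot\mid\gamma(\cdot;\widehat{\bttXi}))-f_{\mathrm{mix}}(\cdot\mid\gamma(\cdot;\bttXi))}$ by $b_4(\delta)\sup_{(v,t)}\abs{\gamma(v,t;\widehat{\bttXi})-\gamma(v,t;\bttXi)}$, and then passes through Markov's inequality and uniform integrability. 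You instead observe directly that $\lfdr(p;\gamma(v,t;\bXi'))$ is continuous on the compact set $[\delta,1]\times\calJ\times\calK$ (because $f_{\mathrm{mix}}$ is bounded away from $0$ there), invoke uniform continuity, and finish with a union bound in probability. Your route is more elementary---it never uses the differentiability condition \cref{cond:conts_pd} nor the detour through expectations---while the paper's Lipschitz-type bound \cref{eq:ub_by_op1} is slightly more quantitative and ties the lfdr error explicitly to $\sup_{(v,t)}\abs{\gamma(v,t;\widehat{\bttXi})-\gamma(v,t;\bttXi)}$.
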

\begin{proof} First we observe that
\begin{align*}
&\ofrac{\numTests}\sum_{\tuple\in\samReg}\abs*{\lfdr(\p;\signal(v,t;\widehat{\bXi})) - \lfdr(\p;\signal(v,t;\bXi))}\\
&\leq \ofrac{\numTests}\sum_{\tuple\in\samReg}\sup_{(u,s)\in\calJ}\abs*{\lfdr(\p;\signal(u,s;\widehat{\bXi})) - \lfdr(\p;\signal(u,s;\bXi))}.
\end{align*}
By Markov's inequality, it suffices to prove that the conditional expectation of the right-hand side tends to zero, i.e.,  

\begin{align}\label{eq:esupto0}
\bbE[\sup_{(u,s)\in\calJ}\abs*{\lfdr(\pRVNodeEpochRV;\signal(u,s;\widehat{\bXi})) - \lfdr(\pRVNodeEpochRV;\signal(u,s;\bXi))}\given \bXi]\to0.
\end{align}

To this end, we find an upper bound for $\abs*{\lfdr(\pRVNodeEpochRV;\signal(u,s;\widehat{\bXi})) - \lfdr(\pRVNodeEpochRV;\signal(u,s;\bXi))}$:
\begin{align*}
&\abs*{\lfdr(\pRVNodeEpochRV;\signal(u,s;\widehat{\bXi})) - \lfdr(\pRVNodeEpochRV;\signal(u,s;\bXi))} \\
&= \abs*{\frac{\pi_0\circ\signal(u,s;\widehat{\bXi})\fPNulNodeEpoch[\pRVNodeEpochRV][u][s]}{\fP[\pRVNodeEpochRV\mid\signal(u,s;\widehat{\bXi})]} - \frac{\pi_0\circ\signal(u,s,\bXi)\fPNulNodeEpoch[\pRVNodeEpochRV][u][s]}{\fP[\pRVNodeEpochRV\mid\signal(u,s;\bXi)]}}\\
&\leq \abs*{\frac{\pi_0\circ\signal(u,s;\widehat{\bXi})\fPNulNodeEpoch[\pRVNodeEpochRV][u][s]}{\fP[\pRVNodeEpochRV\mid\signal(u,s;\widehat{\bXi})]} - \frac{\pi_0\circ\signal(u,s;\widehat{\bXi})\fPNulNodeEpoch[\pRVNodeEpochRV][u][s]}{\fP[\pRVNodeEpochRV\mid\signal(u,s;\bXi)]}} \\
&+ \abs*{\frac{\pi_0\circ\signal(u,s;\widehat{\bXi})\fPNulNodeEpoch[\pRVNodeEpochRV][u][s]}{\fP[\pRVNodeEpochRV\mid\signal(u,s;\bXi)]} - \frac{\pi_0\circ\signal(u,s;\bXi)\fPNulNodeEpoch[\pRVNodeEpochRV][u][s]}{\fP[\pRVNodeEpochRV\mid\signal(u,s;\bXi)]}}.
\end{align*}
We denote the first and second terms on the right-hand side as $\mathrm{term1}$ and $\mathrm{term2}$. In the rest of this proof, we aim to find upper bounds for them. Since $\fPNulNodeEpoch$ is continuous on $[0,1]\times\calJ$, we assume that it is upper bounded by $b_1$. To find a lower bound for $\fP[\pRVNodeEpochRV\mid\signal(u,s;\bXi)]$, using \cref{cond:pi0_pos,cond:f0_pos} in \cref{asp:regularity}, we have
\begin{align*}
\fP[p\mid\signal(u,s;\bXi)] &\geq \pi_0\circ\signal(u,s;\bXi)\fPNulNodeEpoch[p][u][s]\geq \pi_0\circ\signal(u,s;\bXi)\fPNulNodeEpoch[\ofrac{2}][u][s]>0,\forall p>\ofrac{2},\\
\fP[p\mid\signal(u,s;\bXi)] &\geq (1-\pi_0\circ\signal(u,s;\bXi))\fPAlt[p;\signal(u,s;\bXi)] \\
&\geq (1-\pi_0\circ\signal(u,s;\bXi))\fPAlt[\ofrac{2};\signal(u,s;\bXi)]>0,\forall p\leq\ofrac{2}.
\end{align*}
Let 
\begin{align*}
b_2:=\min\parens*{\min_{(u,s)\in\calJ}\pi_0\circ\signal(u,s;\bXi)\fPNulNodeEpoch[\ofrac{2}][u][s]>0,\min_{(u,s)\in\calJ}(1-\pi_0\circ\signal(u,s;\bXi))\fPAlt[\ofrac{2};\signal(u,s;\bXi)]},
\end{align*}
then $\fP[p\mid\signal(u,s;\bXi)]\geq b_2>0$ for all $p\in(0,1]$ and $(u,s)\in\calJ$.
Then we have
\begin{align*}
\mathrm{term1}&\leq\min\parens*{b_1\abs*{\ofrac{\fP[\pRVNodeEpochRV\mid\signal(u,s;\widehat{\bXi})]} - \ofrac{\fP[\pRVNodeEpochRV\mid\signal(u,s;\bXi)]}}, 1+\frac{b_1}{b_2}}:=\min(\mathrm{term1}',b_3),
\end{align*}
thus for any $\delta\in(0,1)$, we have
\begin{align*}
\mathrm{term1}
&\leq \mathrm{term1}'\Ind\set{\pRVNodeEpochRV\geq\delta} + b_3\Ind\set{\pRVNodeEpochRV\in(0,\delta)}.
\end{align*}
We first consider the event $\pRVNodeEpochRV\geq\delta$. Since $\signal(u,s;\bXi)$ is continuous on $(u,s,\bXi)$ and $\calJ\times\calK$ is compact, the image $\signal(\calJ;\calK):=\set{\signal(u,s;\bXi)\given (u,s)\in\calJ,\bXi\in\calK}$ is compact.  According to \cref{cond:conts_pd} in \cref{asp:regularity}, $\dfrac{\partial f'_{\rVar{P}}}{\partial\zeta}$ is continuous on $[\delta,1]\times\signal(\calJ;\calK)\times\calJ$, so there exists $b_4(\delta)>0$ so that $\abs*{\dfrac{\partial f'_{\rVar{P}}}{\partial\zeta}(p\mid\signal(u,s;\bXi),(u,s))}\leq b_4(\delta)$ for any $p\in[\delta,1]$, $(u,s)\in\calJ$ and $\bXi\in\calK$. Then there exists $\bXi'= c\bXi + (1-c)\widehat{\bXi}$, $c\in(0,1)$ such that on the event $\pRVNodeEpochRV\geq\delta$, we have
\begin{align*}
\mathrm{term1}' &= b_1\ofrac{\fP[\pRVNodeEpochRV\mid\signal(u,s;\bXi')]^2}\abs*{\frac{\partial f'_{\rVar{P}}}{\partial\zeta}(\pRVNodeEpochRV\mid\signal(u,s;\bXi'),(u,s))}\abs*{\signal(u,s;\bXi) - \signal(u,s;\widehat{\bXi})} \\
&\leq \frac{b_1}{b_2^2}b_4(\delta)\sup_{(u',s')\in\calJ}\abs{\signal(u',s';\bXi) - \signal(u',s';\widehat{\bXi})}.
\end{align*}
Hence, $\mathrm{term1}$ can be upper bounded as 
\begin{align}\label{eq:T1bound}
\mathrm{term1}\leq \frac{b_1}{b_2^2}b_4(\delta)\sup_{(u,s)\in\calJ}\abs{\signal(u,s;\bXi) - \signal(u,s;\widehat{\bXi})} + b_3\Ind\set{\pRVNodeEpochRV\in(0,\delta)}.
\end{align}
Next, we find an upper bound for $\mathrm{term2}$. Define $b_5:=\min\limits_{(u,s)\in\calJ}\fPNulNodeEpoch[1][u][s]>0$ and 
\begin{align*}
\mathrm{term2}':=\abs*{\fP[1\mid\signal(u,s;\widehat{\bXi})] - \fP[1\mid\signal(u,s;\bXi)]}.
\end{align*}
\begin{align*}
\mathrm{term2} &= \frac{\fPNulNodeEpoch[\pRVNodeEpochRV][u][s]}{\fP[\pRVNodeEpochRV\mid\signal(u,s;\bXi)]} \abs{\pi_0\circ\signal(u,s;\widehat{\bXi}) - \pi_0\circ\signal(u,s;\bXi)}\\
&\leq\frac{b_1}{b_2} \abs{\pi_0\circ\signal(u,s;\widehat{\bXi}) - \pi_0\circ\signal(u,s;\bXi)}\\
&\leq\frac{b_1}{b_2}\ofrac{\fPNulNodeEpoch[1][u][s]}\abs*{\fP[1\mid\signal(u,s;\widehat{\bXi})] - \fP[1\mid\signal(u,s;\bXi)]}.\\
&\leq\frac{b_1}{b_2b_5}\mathrm{term2}'.
\end{align*}
Using a similar argument as before, there exists $\bXi''$ such that
\begin{align*}
\mathrm{term2}' = \abs*{\frac{\partial f'_{\rVar{P}}}{\partial\zeta}(1\mid\signal(u,s,\bXi''),(u,s))}\abs*{\signal(u,s;\widehat{\bXi}) - \signal(u,s;\bXi)}\leq b_4(\delta)\sup_{(u,s)\in\calJ} \abs*{\signal(u,s;\widehat{\bXi}) - \signal(u,s;\bXi)},
\end{align*}
hence 
\begin{align}\label{eq:T2bound}
\mathrm{term2}\leq \frac{b_1b_4(\delta)}{b_2b_5}\sup_{(u,s)\in\calJ} \abs*{\signal(u,s;\widehat{\bXi}) - \signal(u,s;\bXi)}. 
\end{align}
Combining \cref{eq:T1bound,eq:T2bound} and redefining the constants, we obtain
\begin{align}
\begin{aligned}\label{eq:ub_by_op1}
&\sup_{(u,s)\in\calJ}\abs*{\lfdr(\pRVNodeEpochRV;\signal(u,s;\widehat{\bXi})) - \lfdr(\pRVNodeEpochRV;\signal(u,s;\bXi))} \\
&\leq b_6(\delta)\sup_{(u,s)\in\calJ} \abs*{\signal(u,s;\widehat{\bXi}) - \signal(u,s;\bXi)} + b_7\Ind\set{\pRVNodeEpochRV\in(0,\delta)}.
\end{aligned}
\end{align}
By \cref{thm:unif_conv_para}, $\sup\limits_{(u,s)\in\calJ} \abs*{\signal(u,s;\widehat{\bXi}) - \signal(u,s;\bXi)}\convp 0$. 
For any $\epsilon_1,\epsilon_2>0$, we first choose $\delta>0$ such that $\bbP(\pRVNodeEpochRV\in(0,\delta)\given \bXi)<\dfrac{\epsilon_2}{2}$. Then when $\numTests$ is large enough, we have 
\begin{align*}
\bbP(b_6(\delta)\sup_{(u,s)\in\calJ} \abs*{\signal(u,s;\widehat{\bXi}) - \signal(u,s;\bXi)}\geq\epsilon_1\given \bXi)<\frac{\epsilon_2}{2}.
\end{align*}
Combining these results, we obtain 
\begin{align*}
\bbP(\sup_{(u,s)\in\calJ}\abs*{\lfdr(\pRVNodeEpochRV;\signal(u,s;\widehat{\bXi})) - \lfdr(\pRVNodeEpochRV;\signal(u,s;\bXi))}\geq2\epsilon_1\given \bXi)<\epsilon_2,
\end{align*}
i.e., $\sup\limits_{(u,s)\in\calJ}\abs*{\lfdr(\pRVNodeEpochRV;\signal(u,s;\widehat{\bXi})) - \lfdr(\pRVNodeEpochRV;\signal(u,s;\bXi))}\convp0$ conditioned on $\bXi$. Note that this is a sequence of uniformly bounded random variables, hence uniformly integrable. Then by applying \cite[Theorem 8.16]{Kee:10}, 
we obtain \cref{eq:esupto0}. This completes the proof.
\end{proof}
\begin{proof}[Proof of \cref{thm:limit_FDR}]
We first prove the result under the probability measure conditioned on $\bXi$. The result then follows by taking expectation over $\bXi$ in the last step. According to \cref{lem:det_to_rand,lem:limit_conts_cdf_ratio,lem:exs_small_ratio,lem:L1-sum_to0} and \cite[Theorem 3.2]{CaoChenZhang:J22}, we know that\footnote{Note that the definition of $\FDR_m$ in \cite{CaoChenZhang:J22} is different from the definition \cref{eq:def_fdr} in this paper.} 
\begin{align}\label{eq:E_orac_contr}
\uplim_{\numTests\to\infty}\bbE[\frac{\rVar{d}'_{1,\numTests}(\widehat{\upeta}_\numTests)}{\max(\rVar{d}_{0,\numTests}(\widehat{\upeta}_\numTests),\ofrac{\numTests})}\given \bXi]\leq\alpha.
\end{align}
The result \cref{eq:E_orac_contr} is based on the true values of \gls{lfdr}. We use this result to provide an upper bound for $\FDR_{\samReg}(h)=\bbE[\dfrac{\widehat{\rVar{d}'}_{1,\numTests}(\widehat{\upeta}_\numTests)}{\max(\widehat{\rVar{d}}_{0,\numTests}(\widehat{\upeta}_\numTests),\ofrac{\numTests})}]$: 
\begin{align*}
&\abs*{\frac{\widehat{\rVar{d}'}_{1,\numTests}(\widehat{\upeta}_\numTests)}{\max(\widehat{\rVar{d}}_{0,\numTests}(\widehat{\upeta}_\numTests),\ofrac{\numTests})} - \frac{\rVar{d}'_{1,\numTests}(\widehat{\upeta}_\numTests)}{\max(\rVar{d}_{0,\numTests}(\widehat{\upeta}_\numTests),\ofrac{\numTests})}} \\
&\leq \abs*{\frac{\widehat{\rVar{d}'}_{1,\numTests}(\widehat{\upeta}_\numTests)}{\max(\widehat{\rVar{d}}_{0,\numTests}(\widehat{\upeta}_\numTests),\ofrac{\numTests})} - \frac{\widehat{\rVar{d}'}_{1,\numTests}(\widehat{\upeta}_\numTests)}{\max(\rVar{d}_{0,\numTests}(\widehat{\upeta}_\numTests),\ofrac{\numTests})}} +\abs*{\frac{\widehat{\rVar{d}'}_{1,\numTests}(\widehat{\upeta}_\numTests)}{\max(\rVar{d}_{0,\numTests}(\widehat{\upeta}_\numTests),\ofrac{\numTests})} - \frac{\rVar{d}'_{1,\numTests}(\widehat{\upeta}_\numTests)}{\max(\rVar{d}_{0,\numTests}(\widehat{\upeta}_\numTests),\ofrac{\numTests})}}\\
&\leq \frac{\widehat{\rVar{d}'}_{1,\numTests}(\widehat{\upeta}_\numTests)}{\max(\widehat{\rVar{d}}_{0,\numTests}(\widehat{\upeta}_\numTests),\ofrac{\numTests})}\ofrac{\max(\rVar{d}_{0,\numTests}(\widehat{\upeta}_\numTests),\ofrac{\numTests})}\abs*{\max(\rVar{d}_{0,\numTests}(\widehat{\upeta}_\numTests),\ofrac{\numTests})-\max(\widehat{\rVar{d}}_{0,\numTests}(\widehat{\upeta}_\numTests),\ofrac{\numTests})}\\
&+\ofrac{\max(\rVar{d}_{0,\numTests}(\widehat{\upeta}_\numTests),\ofrac{\numTests})}\abs*{\widehat{\rVar{d}'}_{1,\numTests}(\widehat{\upeta}_\numTests) - \rVar{d}'_{1,\numTests}(\widehat{\upeta}_\numTests)}\\
&\leq \ofrac{\max(\rVar{d}_{0,\numTests}(\widehat{\upeta}_\numTests),\ofrac{\numTests})}\parens*{\abs*{\max(\rVar{d}_{0,\numTests}(\widehat{\upeta}_\numTests),\ofrac{\numTests})-\max(\widehat{\rVar{d}}_{0,\numTests}(\widehat{\upeta}_\numTests),\ofrac{\numTests})}+ \abs*{\widehat{\rVar{d}'}_{1,\numTests}(\widehat{\upeta}_\numTests) - \rVar{d}'_{1,\numTests}(\widehat{\upeta}_\numTests)}}
\end{align*}
When $\widehat{\upeta}_\numTests\geq \eta_\infty$, $\rVar{d}_{0,\numTests}(\widehat{\upeta}_\numTests)\geq \rVar{d}_{0,\numTests}(\eta_\infty)$. Combining \cite[(26)]{CaoChenZhang:J22} and \cite[Lemma 8.1]{CaoChenZhang:J22}, we know that $\sup\limits_{\eta\in[0,1]}\abs*{\rVar{d}_{0,\numTests}(\eta)-\rVar{d}_0(\eta)}\convp0$ and $\sup\limits_{\eta\geq\eta_\infty}\abs*{\rVar{d}_{0,\numTests}(\eta)-\widehat{\rVar{d}}_{0,\numTests}(\eta)}\convp0$ (conditioned on $\bXi$, the same below), hence
\begin{align*}
\max(\rVar{d}_{0,\numTests}(\eta_\infty),\ofrac{\numTests})\convp\max(\rVar{d}_0(\eta_\infty),0) = \rVar{d}_0(\eta_\infty)>0.
\end{align*}
Since the function $\max(a,\ofrac{\numTests})$ is Lipschitz continuous in $a$ with Lipschitz constant $1$, we have
\begin{align*}
\abs*{\max(\rVar{d}_{0,\numTests}(\widehat{\upeta}_\numTests),\ofrac{\numTests})-\max(\widehat{\rVar{d}}_{0,\numTests}(\widehat{\upeta}_\numTests),\ofrac{\numTests})}
&\leq \abs{\rVar{d}_{0,\numTests}(\widehat{\upeta}_\numTests)-\widehat{\rVar{d}}_{0,\numTests}(\widehat{\upeta}_\numTests)}\\
&\leq \sup\limits_{\eta\geq\eta_\infty}\abs*{\rVar{d}_{0,\numTests}(\eta)-\widehat{\rVar{d}}_{0,\numTests}(\eta)}\convp0.
\end{align*}
Besides, notice that
\begin{align*}
\abs*{\widehat{\rVar{d}'}_{1,\numTests}(\eta) - \rVar{d}'_{1,\numTests}(\eta)}
&\leq\ofrac{\numTests}\sum_{\tuple\in\samReg} \abs*{\Ind\set{\lfdr(\p;\signal(v,t;\bXi))\leq\eta}-\Ind\set{\lfdr(\p;\signal(v,t;\widehat{\bXi}))\leq\eta}}\idc{\HTrue = \HNul}\\
&\leq\ofrac{\numTests}\sum_{\tuple\in\samReg} \abs*{\Ind\set{\lfdr(\p;\signal(v,t;\bXi))\leq\eta}-\Ind\set{\lfdr(\p;\signal(v,t;\widehat{\bXi}))\leq\eta}}.
\end{align*}
In the proof of \cite[Lemma 8.4]{CaoChenZhang:J22}, we know that 
\begin{align*}
\sup_{\eta\geq\eta_\infty} \ofrac{\numTests}\sum_{\tuple\in\samReg} \abs*{\Ind\set{\lfdr(\p;\signal(v,t;\bXi))\leq\eta}-\Ind\set{\lfdr(\p;\signal(v,t;\widehat{\bXi}))\leq\eta}}\convp0.
\end{align*}
Therefore,  $\abs*{\widehat{\rVar{d}'}_{1,\numTests}(\widehat{\upeta}_\numTests) - \rVar{d}'_{1,\numTests}(\widehat{\upeta}_\numTests)}\leq\sup\limits_{\eta\geq\eta_\infty}\abs*{\widehat{\rVar{d}'}_{1,\numTests}(\eta) - \rVar{d}'_{1,\numTests}(\eta)}\convp0$. Combining the above results, we obtain
\begin{align}\label{eq:smalldiff_bigeta}
\abs*{\frac{\widehat{\rVar{d}'}_{1,\numTests}(\widehat{\upeta}_\numTests)}{\max(\widehat{\rVar{d}}_{0,\numTests}(\widehat{\upeta}_\numTests),\ofrac{\numTests})} - \frac{\rVar{d}'_{1,\numTests}(\widehat{\upeta}_\numTests)}{\max(\rVar{d}_{0,\numTests}(\widehat{\upeta}_\numTests),\ofrac{\numTests})}}\Ind\set{\widehat{\upeta}_\numTests\geq\eta_\infty}\convp0.
\end{align}
When $\widehat{\upeta}_\numTests<\eta_\infty$, 
\begin{align*}
\abs*{\frac{\widehat{\rVar{d}'}_{1,\numTests}(\widehat{\upeta}_\numTests)}{\max(\widehat{\rVar{d}}_{0,\numTests}(\widehat{\upeta}_\numTests),\ofrac{\numTests})} - \frac{\rVar{d}'_{1,\numTests}(\widehat{\upeta}_\numTests)}{\max(\rVar{d}_{0,\numTests}(\widehat{\upeta}_\numTests),\ofrac{\numTests})}}\Ind\set{\widehat{\upeta}_\numTests<\eta_\infty}\leq 2 \Ind\set{\widehat{\upeta}_\numTests<\eta_\infty}.
\end{align*}
According to the proof of \cite[Theorem 2.3]{CaoChenZhang:J22}, we know that $\bbP(\widehat{\upeta}_\numTests\geq\eta_\infty\given \bXi)\to1$, hence the \gls{RHS} tends to $0$ in conditional probability. Combining this with \cref{eq:smalldiff_bigeta}, we obtain that
\begin{align*}
\abs*{\frac{\widehat{\rVar{d}'}_{1,\numTests}(\widehat{\upeta}_\numTests)}{\max(\widehat{\rVar{d}}_{0,\numTests}(\widehat{\upeta}_\numTests),\ofrac{\numTests})} - \frac{\rVar{d}'_{1,\numTests}(\widehat{\upeta}_\numTests)}{\max(\rVar{d}_{0,\numTests}(\widehat{\upeta}_\numTests),\ofrac{\numTests})}}\convp0.
\end{align*}
This sequence of random variables is uniformly bounded by $2$, hence is uniformly integrable. By \cite[Theorem 8.16]{Kee:10} we know that its expectation tends to $0$. Finally, by the existing result \cref{eq:E_orac_contr}, the result follows by
\begin{align*}
\uplim_{\numTests\to\infty}\bbE[\abs*{\frac{\widehat{\rVar{d}'}_{1,\numTests}(\widehat{\upeta}_\numTests)}{\max(\widehat{\rVar{d}}_{0,\numTests}(\widehat{\upeta}_\numTests),\ofrac{\numTests})}}\given \bXi]
&\leq\uplim_{\numTests\to\infty}\bbE[\abs*{\frac{\widehat{\rVar{d}'}_{1,\numTests}(\widehat{\upeta}_\numTests)}{\max(\widehat{\rVar{d}}_{0,\numTests}(\widehat{\upeta}_\numTests),\ofrac{\numTests})} - \frac{\rVar{d}'_{1,\numTests}(\widehat{\upeta}_\numTests)}{\max(\rVar{d}_{0,\numTests}(\widehat{\upeta}_\numTests),\ofrac{\numTests})}}\given \bXi] \\
&+ \bbE[\abs*{\frac{\rVar{d}'_{1,\numTests}(\widehat{\upeta}_\numTests)}{\max(\rVar{d}_{0,\numTests}(\widehat{\upeta}_\numTests),\ofrac{\numTests})}}\given \bXi]
\leq\alpha.
\end{align*}
Further applying reverse Fatou's lemma, we obtain
\begin{align*}
\uplim_{\numTests\to\infty}\bbE[\bbE[\abs*{\frac{\widehat{\rVar{d}'}_{1,\numTests}(\widehat{\upeta}_\numTests)}{\max(\widehat{\rVar{d}}_{0,\numTests}(\widehat{\upeta}_\numTests),\ofrac{\numTests})}}\given\bXi]]
\leq\bbE[\uplim_{\numTests\to\infty}\bbE[\abs*{\frac{\widehat{\rVar{d}'}_{1,\numTests}(\widehat{\upeta}_\numTests)}{\max(\widehat{\rVar{d}}_{0,\numTests}(\widehat{\upeta}_\numTests),\ofrac{\numTests})}}\given\bXi]]
\leq\alpha,
\end{align*}
which concludes the proof.
\end{proof}

\begin{proof}[Proof of \cref{thm:limit_pow}] 
We first prove the result under the probability measure conditioned on $\bXi$. The result then follows by taking expectation over $\bXi$.  We first prove that $\rVar{r}(\eta)$ strictly increases in $\eta\in(\eta_\infty',1)$. Let $\eta_1,\eta_2\in(\eta_\infty',1)$ satisfy $\eta_1<\eta_2$. Then
\begin{align*}
\rVar{r}(\eta_2) - \rVar{r}(\eta_1) &= \frac{\rVar{d}_1(\eta_2)}{\rVar{d}_0(\eta_2)} - \frac{\rVar{d}_1(\eta_1)}{\rVar{d}_0(\eta_1)} \\
&= \frac{\rVar{d}_1(\eta_2)\rVar{d}_0(\eta_1)-\rVar{d}_1(\eta_1)\rVar{d}_0(\eta_1)}{\rVar{d}_0(\eta_1)\rVar{d}_0(\eta_2)} + \frac{\rVar{d}_1(\eta_1)\rVar{d}_0(\eta_1)-\rVar{d}_1(\eta_1)\rVar{d}_0(\eta_2)}{\rVar{d}_0(\eta_1)\rVar{d}_0(\eta_2)}.
\end{align*}
The numerator can be rewritten as
\begin{align*}
&\rVar{d}_1(\eta_2)\rVar{d}_0(\eta_1)-\rVar{d}_1(\eta_1)\rVar{d}_0(\eta_1) + \rVar{d}_1(\eta_1)\rVar{d}_0(\eta_1)-\rVar{d}_1(\eta_1)\rVar{d}_0(\eta_2) \\
&= \rVar{d}_0(\eta_1)(\rVar{d}_1(\eta_2)-\rVar{d}_1(\eta_1)) - \rVar{d}_1(\eta_1)(\rVar{d}_0(\eta_2)-\rVar{d}_0(\eta_1)) \\
&=\bbE[\Ind\set{\lfdr(\pRVNodeEpochRV;\signal(\nodeRV,\epochRV;\bXi))\leq\eta_1}\given \bXi]\\
&\cdot\bbE[\lfdr(\pRVNodeEpochRV;\signal(\nodeRV,\epochRV;\bXi))\Ind\set{\eta_1<\lfdr(\pRVNodeEpochRV;\signal(\nodeRV,\epochRV;\bXi))\leq\eta_2}\given \bXi] \\
&- \bbE[\lfdr(\pRVNodeEpochRV;\signal(\nodeRV,\epochRV;\bXi))\Ind\set{\lfdr(\pRVNodeEpochRV;\signal(\nodeRV,\epochRV;\bXi))\leq\eta_1}\given \bXi] \\
&\cdot\bbE[\Ind\set{\eta_1<\lfdr(\pRVNodeEpochRV;\signal(\nodeRV,\epochRV;\bXi))\leq\eta_2}\given \bXi] \\
&> \eta_1\bbE[\Ind\set{\lfdr(\pRVNodeEpochRV;\signal(\nodeRV,\epochRV;\bXi))\leq\eta_1}\given \bXi]\bbE[\Ind\set{\eta_1<\lfdr(\pRVNodeEpochRV;\signal(\nodeRV,\epochRV;\bXi))\leq\eta_2}\given \bXi] \\
&- \eta_1\bbE[\Ind\set{\lfdr(\pRVNodeEpochRV;\signal(\nodeRV,\epochRV;\bXi))\leq\eta_1}\given \bXi]\bbE[\Ind\set{\eta_1<\lfdr(\pRVNodeEpochRV;\signal(\nodeRV,\epochRV;\bXi))\leq\eta_2}\given \bXi]=0.
\end{align*}
Since $\eta_1,\eta_2\in(\eta_\infty',1)$, $\rVar{d}_0(\eta_2)\geq \rVar{d}_0(\eta_1)>0$, the denominator $\rVar{d}_0(\eta_1)\rVar{d}_0(\eta_2)$ is positive, i.e, $\rVar{r}(\eta)$ strictly increases in $\eta\in(\eta_\infty',1)$. Therefore, due to the continuity of $\rVar{d}_1$ and $\rVar{d}_0$, we know that for any small $\epsilon>0$, $\rVar{r}(\upeta_0-\epsilon)<\alpha$ holds. By \gls{WLLN}, we know that $\ofrac{\numTests}\sum\limits_{\tuple\in\samReg}\idc{\HTrue = \HNul}\convp \upkappa_0$, and $\ofrac{\numTests}\sum\limits_{\tuple\in\samReg} \Ind\set{\HTrue=\HAlt,\lfdr(\p;\signal(v,t;\bXi))\leq\eta}\convp \rVar{d}_2(\eta)$, both conditioned on $\bXi$. Hence by applying \cite[Theorem 3.5]{CaoChenZhang:J22}, we obtain the desired result when conditioned on $\bXi$. The result under unconditional probability measure can be obtained by taking expectation over $\bXi$.
\end{proof}

\bibliographystyle{IEEEtran}
\bibliography{IEEEabrv,StringDefinitions,refs}

\end{document}